\tikzstyle{green dot}=[fill=green, draw=black, shape=circle]
\tikzstyle{red dot}=[fill=red, draw=black, shape=circle]
\tikzstyle{blue dot}=[fill=blue, draw=black, shape=circle]
\tikzstyle{purpule dot}=[fill={rgb,255: red,98; green,0; blue,98}, draw=black, shape=circle]
\tikzstyle{green edge}=[-, draw=green, fill=none]
\tikzstyle{red edge}=[-, draw=red]
\tikzstyle{blue edge}=[-, draw=blue]
\tikzstyle{purpule edge}=[-, draw={rgb,255: red,98; green,0; blue,98}]
\tikzstyle{special edge}=[draw=black, {|-|}]
\tikzstyle{special blue edge}=[draw=blue, {|-|}]
\renewcommand{\phi}{\varphi}
\newcommand{\hide}[1]{ }
\renewcommand{\mathbf}{\bm}
\theoremstyle{plain}
\renewcommand{\include}{\input}
\newcolumntype{Y}{>{\centering\arraybackslash}X}
\theoremstyle{plain}
\newtheorem{thm1}{Theorem}[section]
\theoremstyle{remark}
\newtheorem{pthm1}[thm1]{Theorem}
\theoremstyle{plain}
\newtheorem{lem1}[thm1]{Lemma}
\theoremstyle{plain}
\newtheorem{obs1}[thm1]{Observation}
\theoremstyle{plain}
\newtheorem{inv1}[thm1]{Invariant}
\theoremstyle{plain}
\newtheorem{cor1}[thm1]{Corollary}
\theoremstyle{definition}
\newtheorem{defn1}[thm1]{Definition}
\theoremstyle{plain}
\newtheorem{fact1}[thm1]{Fact}
\theoremstyle{remark}
\newtheorem{rem1}[thm1]{Remark}
\theoremstyle{plain}
\newtheorem{prop1}[thm1]{Proposition}
\theoremstyle{plain}
\newtheorem{asmp1}[thm1]{Assumption}
\newenvironment{proof}[1][\protect\proofname]{\par
\normalfont\topsep6\p@\@plus6\p@\relax
\trivlist
\itemindent\parindent
\item[\hskip\labelsep\scshape #1]\ignorespaces
}{%
\endtrivlist\@endpefalse
}
\providecommand{\proofname}{Proof}
\def\special{1}
\def\specialproof{1}
\def\specialdefinition{1}
\def\specialremark{1}
 \def\highlight{0}
 \def\sidebar{1}
\def\colorequations{0}
\newenvironment{theorem}[1][]{%
\begin{thm1}[#1]%
}{\end{thm1}%
}
\newenvironment{lemma}[1][]{%
\begin{lem1}[#1]%
}{\end{lem1}%
}
\newenvironment{observation}[1][]{%
\begin{obs1}[#1]%
}{\end{obs1}%
}
\newenvironment{inv}[1][]{%
\begin{inv1}[#1]%
}{\end{inv1}%
}
\newenvironment{fact}[1][]{%
\begin{fact1}[#1]%
}{\end{fact1}%
}
\newenvironment{remark}[1][]{%
\begin{rem1}[#1]%
}{\end{rem1}%
}
\newenvironment{pthm}[1][]{%
\begin{pthm1}[#1]%
}{\end{pthm1}%
}
\newenvironment{corollary}[1][]{%
\begin{cor1}[#1]%
}{\end{cor1}%
}
\newenvironment{definition}[1][]{%
\begin{defn1}[#1]%
}{\end{defn1}%
}
\newenvironment{asmp}[1][]{%
\begin{asmp1}[#1]%
}{\end{asmp1}%
}
\newenvironment{proposition}[1][]{%
\begin{prop1}[#1]%
}{\end{prop1}
}%
        \renewenvironment{theorem}[1][]{%
        \begin{mdframed}[nobreak=false,backgroundcolor=Aquamarine!60]\begin{thm1}[#1]%
        }{\end{thm1}\end{mdframed}%
        }
        \renewenvironment{lemma}[1][]{%
        \begin{mdframed}[nobreak=false,backgroundcolor=YellowGreen!60]\begin{lem1}[#1]%
        }{\end{lem1}\end{mdframed}%
        }
        \renewenvironment{observation}[1][]{%
        \begin{mdframed}[nobreak=false,backgroundcolor=Salmon!60]\begin{obs1}[#1]%
        }{\end{obs1}\end{mdframed}%
        }
        \renewenvironment{corollary}[1][]{%
        \begin{mdframed}[backgroundcolor=Mulberry!60]\begin{cor1}[#1]%
        }{\end{cor1}\end{mdframed}%
        }
\let\expandafter\oldproof\csname\string\proof\endcsname
        \let\oldendproof\endproof
        \renewenvironment{proof}[1][\proofname]{%
        \begin{mdframed}[nobreak=false,backgroundcolor=lightgray!60]\oldproof[#1]%
        }{\oldendproof\end{mdframed}}
        \renewenvironment{definition}[1][]{%
        \begin{mdframed}[innerbottommargin=0.1cm,innertopmargin=0.1cm,backgroundcolor=Apricot!60]\begin{defn1}[#1]%
        }{\end{defn1}\end{mdframed}%
        }
        \renewenvironment{remark}[1][]{%
        \begin{mdframed}[backgroundcolor=Salmon!60]\begin{rem1}[#1]%
        }{\end{rem1}\end{mdframed}%
        }
\title{List Update with Delays or Time Windows}
 \author{Yossi Azar
     \thanks{School of Computer Science, Tel-Aviv University, Israel. Email: azar@tau.ac.il. Research supported in part by the Israel Science Foundation (grant No. 2304/20).}
 	\and
 	Shahar Lewkowicz
 	\thanks{School of Computer Science, Tel-Aviv University, Israel. Email: shaharlewko22@gmail.com.}
 	\and
 	Danny Vainstein%
 	\thanks{School of Computer Science, Tel-Aviv University, Israel and Google Research. Email: dannyvainstein@gmail.com.}
 }
\begin{document}

\maketitle

\begin{abstract}
We address the problem of \textbf{List Update}, which is considered one of the fundamental problems in online algorithms and competitive analysis. In this context, we are presented with a list of elements and receive requests for these elements over time. Our objective is to fulfill these requests, incurring a cost proportional to their position in the list. Additionally, we can swap any two consecutive elements at a cost of $1$. The renowned "Move to Front" algorithm, introduced by Sleator and Tarjan, immediately moves any requested element to the front of the list. They demonstrated that this algorithm achieves a competitive ratio of 2. While this bound is impressive, the actual cost of the algorithm's solution can be excessively high. For example, if we request the last half of the list, the resulting solution cost becomes quadratic in the list's length.

To address this issue, we consider a more generalized problem called \textbf{List Update with Time Windows}. In this variant, each request arrives with a specific deadline by which it must be served, rather than being served immediately. Moreover, we allow the algorithm to process multiple requests simultaneously, accessing the corresponding elements in a single pass. The cost incurred in this case is determined by the position of the furthest element accessed, leading to a significant reduction in the total solution cost. We introduce this problem to explore lower solution costs, but it necessitates the development of new algorithms. For instance, Move-to-Front fails when handling the simple scenario of requesting the last half of the list with overlapping time windows. In our work, we present a natural $O(1)$ competitive algorithm for this problem. While the algorithm itself is intuitive, its analysis is intricate, requiring the use of a novel potential function.

Additionally, we delve into a more general problem called \textbf{List Update with Delays}, where the fixed deadlines are replaced with arbitrary delay functions. In this case, the cost includes not only the access and swapping costs, but also penalties for the delays incurred until the requests are served. This problem encompasses a special case known as the prize collecting version, where a request may go unserved up to a given deadline, resulting in a specified penalty. For this more comprehensive problem, we establish an $O(1)$ competitive algorithm. However, the algorithm for the delay version is more complex, and its analysis involves significantly more intricate considerations.
\end{abstract}

\thispagestyle{empty}
\newpage
\setcounter{page}{1}
\section{Introduction}
% The classical \textbf{List Update} problem is defined as follows. We have an ordered list of elements. Requests arrive over time. Each request is for an element in the list. Whenever a request for an element arrives, an algorithm must serve it immediately by accessing the required element. Accessing an element means paying the cost of it's position in the list. Moreover, each two consecutive elements in the list can always be swapped with a cost of \(1\). 

%The time freezes and new requests do not arrive while the algorithm performs accessing or swapping operations. 
% The cost of an algorithm on a sequence of requests is the sum of the access costs it pays in order to serve the requests (i.e. access the required elements) in addition to the number of swaps it performs between  consecutive elements in the list. The task of the algorithm is to decide when to perform swaps between elements in order to minimize that cost.

One of the fundamental problems in online algorithms is the \textbf{List Update} problem. In this problem we are given an ordered list of elements and requests for these elements that arrive over time. Upon the arrival of a request, the algorithm must serve it immediately by accessing the required element. The cost of accessing an element is equal to its position in the list. Finally, any two consecutive elements in the list may be swapped at a cost of 1. The goal in this problem is to devise an algorithm so as to minimize the total cost of accesses and swaps. Note that it is an online algorithm and hence does not have any knowledge of future requests and must decide what elements to swap only based on requests that have already arrived.

% Although the list update problem is a fundamental, simple problem it may suffer from very high cost. For example assume that we get requests to the last half elements of the list.  If we serve each request sequentially we would need to pay quadratic cost in the length of the sequence. However, in many cases we might not need to access an element immediately and we have some deadline to do so. In the case that the deadlines are a bit later than the arrival of requests then we can actually serve all of the requests in one pass and pay a linear cost. This is a major improvement over the previous quadratic cost. The above example motivates the following definition of List Update with Time Windows problem that can improve the cost of the algorithms significantly.

Although the list update problem is a fundamental and simple problem, its solutions may be costly. Consider the following example. Assume that we are given requests to each of the elements in the farther half of the list. Serving these requests sequentially results in quadratic cost (quadratic in the length of the list). However, in many scenarios, while the requests arrive simultaneously, they do not have to be served immediately. Instead, they arrive with some deadline such that they must be served some time in the (maybe near) future. If this is the case, and the requests' deadlines are further in the future than their arrival, they may be jointly served; thereby incurring a linear (rather than quadratic) cost in the former example. This example motivates the following definition of List Update with Time Windows problem which may improve the algorithms' costs significantly.

% The \textbf{List Update with Time Windows} problem is an extension of the classical List Update problem. Each request is for an element in the list but it also comes with a deadline. The request must be served during the time window between its arrival time and its deadline. This gives the algorithm the option to serve multiple requests together during the intersection of their time windows by accessing the corresponding elements together. Accessing multiple elements together means paying the cost of the position of the most further element in the list that is being accessed. In addition to that, as in the classical problem, swaps can always be performed (even without accessing) with the regular cost of \(1\). Accessing elements and swapping elements is done by the algorithm instantaneously and then the time continues to advance.

The \textbf{List Update with Time Windows} problem is an extension of the classical List Update problem. Requests are once again defined as requests that arrive over time for elements in the list. However, in this problem they arrive with future deadlines. Requests must be served during their time window which is defined as the time between the corresponding request's arrival and deadline. This grants some flexibility, allowing an algorithm to serve multiple requests jointly at a point in time which lies in the intersection of their time windows. For a pictorial example, see Figure \ref{fig.timeWindowsExample}. The cost of serving a set of requests is defined as the current position of the farthest of those elements (i.e. serving a request for the $i$-th item in the list causes all the other active requests for the first $i$ elements in the list to be served as well in this access operation). In addition, as in the classical problem, swaps between any two consecutive elements may be performed at a cost of 1. Note that both accessing elements (or, serving requests) and swapping consecutive elements is done instantaneously (i.e., time does not advance during these actions). The goal is then to devise an online algorithm so as to minimize the total cost of serving requests and swapping elements. Also note that this problem encapsulates the original List Update problem. In particular, the List Update problem can be viewed as List Update with Time Windows where each time window consist of a unique single point.

We also consider a generalization of the time-window version - \textbf{List Update with Delays}. In this problem each request is associated with an arbitrary delay function, such that an algorithm accumulates delay cost while the request remains pending (i.e., unserved). The goal is to minimize the cost of serving the requests plus the total delay. This provides an incentive for the algorithm to serve the requests early. 

Another interesting and related variant is the prize collecting variant, which has been heavily researched in other fields as well. The price collecting problem is a special case of List Update with Delay and a generalization of List Update with Time Windows. In the context of List Update, the prize collecting problem is defined such that a request must be either served until some deadline or incur some penalty. Note that List Update with Delays encapsulates this variant by defining a delay function that incurs 0 cost and thereafter (at the deadline) immediately jumps to the penalty cost. The prize collecting problem encapsulates List Update with Time Windows when the penalty is arbitrarily large.
While the flexibility introduced in the list update with time windows or delays problems allow for lower cost solutions, it also introduces complexity in the considered algorithms. In particular, the added lenience will force us to compare different algorithms (our online algorithm compared to an optimal algorithm, for instance) at different time points in the input sequence. Since the problem definition allows for serving requests at different time points, this results in different sets of unserved requests when comparing the algorithms - this divergence will prove to be the crux of the problem and will result in significant added complexity compared to the classical List Update problem.

\begin{comment}
Note that the original model allows free swaps to the access element - i.e., immediately after accessing an element $e$, the algorithm may move $e$ towards the head of the list - free of charge. All other swaps between consecutive elements cost 1. It is convenient to assume for our model that all swaps of consecutive elements cost 1. While these two models (i.e., with or without free swaps) may seem rather different, this is not the case. One may easily observe that the difference in costs between a given solution in the two models is at most a multiplicative factor of 2. This can be seen to be true since the cost of the free swaps may be attributed to the cost of accessing the corresponding element (that was swapped) which is always at least as large. Thus, our results extend easily to the model with free swaps to the accessed element (by losing a factor of 2 in the competitive ratio).
\end{comment}

Originally, the List Update problem was defined to allow for free swaps to the accessed element: i.e., immediately after serving an element $e$, the algorithm may move $e$ towards the head of the list - free of charge. All other swaps between consecutive elements still incur a cost of 1. In our work, it will be convenient for us to consider the version of the problem where these free swaps are not allowed and all swaps between two consecutive elements incur a cost of 1. We would like to stress that while these two settings may seem different, this is not the case. One may easily observe that the difference in costs between a given solution in the two models is at most a multiplicative factor of 2. This can be seen to be true since the cost of the free swaps may be attributed to the cost of accessing the corresponding element (that was swapped) which is always at least as large. Thus, our results extend easily to the model with free swaps to the accessed element (by losing a factor of 2 in the competitive ratio). In particular, an algorithm which is constant competitive for one of the models is also constant-competitive for the other.

Using the standard definitions an \textbf{offline algorithm} sees the entire sequence of requests in advance and thus may leverage this knowledge for better solutions. Conversely, an \textbf{online algorithm} only sees a request (i.e., its corresponding element and entire time window or a delay function) upon its arrival and thus must make decisions based only on requests that have already arrived \footnote{In principle the time when a request arrives (i.e., is revealed to the online algorithm) need not be the same as the time when its time window or delay begins (i.e., when the algorithm may serve the request). Note however that the change makes no difference with respect to the offline algorithms but only allows for greater flexibility of the online algorithms. Therefore, any competitiveness results for our problem will transcend to instances with this change.}.
To analyze the performance of our algorithms we use the classical notion of \textbf{competitive ratio}. An online algorithm is said to be \(c\)-competitive (for \(c\geq 1\)) if for every input, the cost of the online algorithm is at most \(c\) times the cost of the optimal offline algorithm \footnote{We note that the lists of both the online algorithm and the optimum offline algorithm are identical at the beginning.}.

\subsection{Our Results}
In this paper, we show the following results:
\begin{itemize}
    \item For the List Update with Time Windows problem we provide a 24-competitive algorithm.
    \item For the List Update with Delays we provide a 336-competitive algorithm.   
\end{itemize}

For the \textbf{time windows} version the algorithm is natural. Upon a deadline of a request for an element, the algorithm serves all requests up to twice the element's position and then moves that element to the beginning of the list. Note that the algorithm does not use the fact that the deadline is known when the request arrives. I.e. our result holds even if the deadline is unknown until it is reached (as in non-clairvoyant models). Also note that while the algorithm is deceptively straightforward - its resulting analysis is tremendously more involved.

In the \textbf{delay} version the algorithm is more sophisticated. (See Appendix \ref{section_failed_algorithms} for counter examples to some simpler algorithms). The algorithm maintains two types of counters: request counters and element counters. For every request, its request counter increases over time at a rate proportional to the delay cost the request incurred. The request counter will be deleted at some point in time after the request has been served (it may not happen immediately after the request is served, but rather further in the future). Unlike the request counters, an element counter's scope is the entire time horizon. The element counter increases over time at a rate that is proportional to the sum of delay costs of unserved requests to that element. Once the requests are served, the element counter ceases to increase. There are two types of events that cause the algorithm to take action: prefix-request-counter events and element-counter events. A prefix-request-counter event takes place when the sum of the request counters of the first $\ell$ elements reaches a value of $\ell$. This event causes the algorithm to access the first $2\ell$ elements in the list and delete the request counters for requests to the first $\mathbf{\ell}$ elements. The request counters of the elements in positions $\ell+1$ up to $2\ell$ remain undeleted but cease to increase (Note that this will also result in the first $2\ell$ element counters to also cease to increase). An element-counter event takes place when an element counter's value reaches the element's position. Let $\ell$ be that position. This event causes the algorithm to access the first $2\ell$ elements in the list. Thereafter, the algorithm deletes all request counters of requests to that element. Finally, the element's counter is zeroed and the algorithm moves the element to the front.

It is interesting to note that List Update with Delay in the clairvoyant case can be reduced to the special case of prize collecting List Update (which is a generalization of List Update with Time Windows) by creating multiple requests with appropriate penalties. However, neither our algorithm for Delay nor our proof are getting simplified for this case, therefore we present our algorithm and proof for the general case (i.e. for List Update with Delay). Moreover, the reduction from List Update with Delay to prize collecting holds only for the clairvoyant case while our algorithm works on the non-clairvoyant model as well. In Appendix \ref{section_failed_algorithms} we give counter examples to the competitive ratio of simpler algorithms for List Update with Delay which hold also for the price collecting version.

\subsection{Previous Work}
We begin by reviewing previous work relating to the classical List Update problem. Sleator and Tarjan \cite{sleator1985amortized} began this line of work by introducing the deterministic online algorithm Move to Front (i.e. \(MTF\)). Upon a request for an element \(e\), this algorithm accesses \(e\) and then moves \(e\) to the beginning of the list. They proved that \(MTF\) is \(2\)-competitive in a model where free swaps to the accessed element are allowed. The proof uses a potential function defined as the number of inversions between \(MTF\)'s list and \(OPT\)'s list. An inversion between two lists is two elements such that their order in the first list is opposite to their order in the second list. A simple lower bound of \(2\) for the competitive ratio of deterministic online algorithms is achieved when the adversary always requests the last element in the online algorithm's list and \(OPT\) orders the elements in its list according to the number of times they were requested in the sequence. Since the model with no free swaps differs in the cost by at most a factor of \(2\) this immediately yields that \(MTF\) is \(4\)-competitive for the model with no free swaps. The simple \(2\) lower bound also holds for this model. Previous work regarding randomized upper bounds for the competitive ratio have been done by many others \cite{irani1991two, reingold1994randomized, albers1997revisiting, albers1998improved, ambuhl2000optimal}. Currently, the best known competitiveness was given by Albers, Von Stengel, and Werchner \cite{albers1995combined}, who presented a random online algorithm and proved it is \(1.6\) competitive. Previous work regarding lower bounds for this problem have also been made \cite{teia1993lower, reingold1994randomized, ambuhl2000optimal}. The highest of which was achieved by Amb{\"u}hl, Gartner and Von Stengel \cite{ambuhl2001new}, who proved a lower bound of \(1.50084\) on the competitive ratio for the classical problem. 
%Note that this lower bound also applies to List Update with Time Windows, because the classical List Update problem is a special case of list Update with Time Windows. 
With regards to the offline classical problem: Amb{\"u}hl proved this problem is NP-hard \cite{ambuhl2000offline}. 
%More work can be find in \cite{albers1998self}

Problems with time windows have been considered for various online problems. Gupta, Kumar and Panigrahi \cite{Caching_with_Time_Windows_and_Delays} considered the problem of paging (caching) with time windows. Bienkowski et al. \cite{Online_Algorithms_for_Multi_Level_Aggregation} considered the problem of online multilevel aggregation. Here, the problem is defined via a weighted rooted tree. Requests arrive on the tree's leaves with corresponding time windows. The requests must be served during their time window. Finally, the cost of serving a set of requests is defined as the weight of the subtree spanning the nodes that contain the requests. Bienkowski et al. \cite{Online_Algorithms_for_Multi_Level_Aggregation} showed a $O(D^4 2^D)$ competitive algorithm where $D$ denotes the depth of the tree. Buchbinder et al. \cite{CompetitiveAlgorithmforOnlineMultilevelAggregation} improved this to $O(D)$ competitiveness. Later, Azar and Touitou \cite{General_Framework_for_Metric_Optimization_Problems_with_Delay_or_with_Deadlines, focs_AzarT20} provided a framework for designing and analyzing algorithms for these types of metric optimization problems. 

In addition, set cover with deadline \cite{SetCoverwithDelayClairvoyanceIsNotRequired} was also considered as well as online service in a metric space \cite{BienkowskiKS2018, AzarGGP2017}. To all these problems poly-logarithmic competitive algorithms were designed. It is interesting to note that in contrast to all these problems we show that for our list update problem constant competitive algorithms are achievable.
We note that problems with deadline can be also extended to problems with delay where there is a monotone penalty function for each request that is increasing over time until the request is served (and is added to the original cost). Many of the results mentioned above can be extended to arbitrary penalty function.
The main exception is matching with delays that can be efficiently solved (i.e. with poly-logarithmic competitive ratio) only for linear functions \cite{Online_matching_haste_makes_waste, Polylogarithmic_Bounds_on_the_Competitiveness_of_Min_cost_Perfect_Matching_with_Delays, AshlagiACCGKMWW2017} as well as for concave  functions \cite{TheMinCostMatchingwithConcaveDelaysProblem}. 
%For a more in depth view of the problems mentioned above and others that tackle deadlines and delays see: 
For other problems that tackle deadlines and delays see: 
\cite{OnlineFacilityLocationwithLinearDelay, On_bin_packing_with_clustering_and_bin_packing_with_delays, The_Price_of_Clustering_in_Bin-Packing_with_Applications_to_Bin_Packing_with_Delays,ApproximationAlgorithmsfortheJointReplenishmentProblemwithDeadlines,BetterApproximationBoundsfortheJointReplenishmentProblem}.

\subsection{Our Techniques}
\label{section_techniques}

While introducing delays or time windows introduces the option of serving multiple requests simultaneously thereby drastically improving the solution costs, this lenience requires the algorithms and their analyses to be much more intricate. 

The "freedom" given to the algorithm compared with the classical List Update problem requires more decisions to be made: for example, in the time windows version assume there are currently two active requests: a request for an element \(e_1\) which just reached its deadline and a request for a further element in the list, \(e_2\) but its deadline has not been reached yet. Should the algorithm access only \(e_1\), pay its position in the list and leave the request for \(e_2\) to be served later or access both \(e_1\) and \(e_2\) together and pay the position of \(e_2\) in the list?
If no more requests arrive until the deadline of the second active request, the latter option is better. However, requests that might arrive before the deadline of the second active request might cause the former option to be better after all.
In the delay version the decision is more complicated since it may be the case that there are various requests for elements, each request accumulated a small or medium delay but their total is large. Hence, we need to decide at what stage and to what extend serving these requests. Moreover it is more tricky to decide which element to move to the front of the list and at which point in time.

As for the analysis, we need to handle the fact that the online algorithm and the optimal algorithm serve requests at different times. Further, since both algorithms may serve different sets of requests at different times, we may encounter situations wherein a given request at a given time would have been served by the online algorithm and not the optimal algorithm (and vice versa). This, combined with the fact that the algorithms’ lists may be ordered differently at any given time, will prove to be the crux of our problem and its analysis.

To overcome these problems, we introduce new potential functions (one for the time windows case and one for the delays case). We note that the original List Update problem was also solved using a potential function \cite{sleator1985amortized}, however, due to the aforementioned issues, the original function failed to capture the resulting intricacies and we had to introduce novel (and more involved) functions. Ultimately, this resulted in constant competitiveness for both settings.

\textbf{List Update with Time Windows}: Here, the potential function consists of three terms. The first accounts for the difference (i.e., number of inversions) between the online and optimal algorithms’ lists at any given time (similar to that of Sleator and Tarjan \cite{sleator1985amortized}). The second term accounts for the difference in the set of served requests between the two algorithms. Specifically, whenever the optimal algorithm serves a request not yet served by the online algorithm, we add value to this term which will be subtracted once the online algorithm serves the request. The third term accounts for the movement costs made by the online algorithm incurred by requests that were already served by the optimal algorithm.

At any given time point, our proof considers separately elements that are positioned (significantly) further in the list in the online algorithm compared to the optimal algorithm, as opposed to all other elements (which we will refer to as “the closer” elements). To understand the flavor of our proofs, e.g., the incurred costs of “the further” elements is charged to the first term of the potential function. In contrast, the change in the first term is not be enough to cover the incurred costs of “the closer” elements (the term may even increase). Fortunately, the second term is indeed enough to cover both the incurred costs and the (possible) increase in the first term. Specifically, the added value is of the same order of magnitude as the access cost incurred by the optimal algorithm for serving the corresponding requests. This follows from (a) only requests for elements in $ALG$ which are located at a position which is of the same order of magnitude as the location in $OPT$ get "gifts" in the second term. (b) The fact that the number of trigger elements and their positions in $ALG$s list is bounded because upon a deadline of a trigger, $ALG$ serves all the elements located up to twice the position of the trigger in its list (The definition of the term "trigger" appears in the beginning of Section \ref{section_algorithm_deadlines}).

Note however that the analysis above holds only as long as the optimal algorithm does not move an element further in the list between the time it serves it and the time the online algorithm serves it. In such a case, the third term will offset the costs.

\textbf{List Update with Delays}: Here, the potential function consists of five terms. The first term is similar to that of the time windows setting with the caveat that defining the distance between the online and optimal algorithms’ lists should depend on the values of the element counters as well. 
Consider the following example. Assume that the ordering of $i,j$ is reversed when comparing it between the online and optimal algorithms and assume it is ordered $(i,j)$ in the online algorithm. As we defined our algorithm, once the element counter of $j$ is filled, it is moved to the front and therefore the ordering will be reversed. Therefore, intuitively, if $j$ element counter is almost filled we consider the distance between this pair smaller than the case where its element counter is completely empty. Therefore, we would like the contribution to the potential function to be smaller in the former case.

Note that the contribution of the inversion $(i,j)$ depends on the element counter of $j$ but not on the element counter of $i$ (i.e. the contribution is asymmetric). Even if the element counter of $j$ is very close to its position in the online algorithm’s list, we still need a big contribution of the pair $(i,j)$ in order to pay for the next element counter event on $j$. However, if the element counter of $j$ is far from its position in the online algorithm’s list, we will need even more contribution of the pair $(i,j)$ to the potential function in order to also cover future delay penalty which the algorithm may suffer on the element $j$ that will not cause an element counter event on $j$ to occur in the short term.

The second part of the potential function consists of the delay cost that both the online and optimal algorithms incurred for requests which were active in both algorithms. This term is used to cover the next element counter events for the elements required in these requests. The third part of the potential function offsets the requests which have been served by the optimal algorithm but not by the online algorithm. This part is very similar to the gifts in the second term of the potential function in time windows and the ideas behind it are similar. Again, the gifts are only given to requests which are located by the online algorithm at a position which is of the same order of magnitude as the location in the optimal algorithm. The gift is of the same order of magnitude as the total delay the online algorithm pays for the request (including the delay it will pay in the future). This is used in order to offset the next element counter event in the online algorithm on the element. However, this gift also decreases as the online algorithm suffers more delay for the request because we want this term in the potential function to also cover the future delay penalty the online algorithm will pay for the request.

The fourth and fifth terms in the potential function are very similar to the third term in the potential function of time windows but each one of them has its own purposes: The fourth term should cover the next element counter event on the element while the fifth term should cover the scenario in which the optimal algorithm served a request and then moved the element further in its list but the online algorithm will suffer more delay penalty for this request in the future. The fifth term should cover this delay cost that the online algorithm pays and thus it is proportional to the fraction between the future delay the online algorithm pays for the request and the position of the element in the online algorithm’s list.

\section{The Model for Time Windows and Delays}

Given an input $\sigma$ and algorithm $ALG$ we denote by $ALG(\sigma)$ the cost of its solution. Recall that in the \textbf{time windows} setting $ALG(\sigma)$ is defined as the sum of (1) the algorithm's access cost: the algorithm may serve multiple requests at a single time point and then the access cost is defined as the position of the farthest element in this set of requests.
%\footnote{We note that here too one may consider 2 models: the $i$ model (wherein the cost of accessing an element at position $i$ is $i$) and the $i-1$ model (wherein the same cost is $i-1$). In our paper we chose the $i$ model but our proofs can be adapted to the $i-1$ model as well. We chose the $i$ model for ease of presentation.}
$ALG(\sigma)$ also accounts for (2) the total number of element swaps performed by $ALG$. In total, $ALG(\sigma)$ is equal to the sum of access costs and swaps. In the \textbf{delay} setting $ALG(\sigma)$ accounts (1) and (2) as above in addition to (3) the sum of the delay incurred by all requests. The delay is defined via a delay function that is associated with each request. The delay functions may be different per request and are each a monotone non-decreasing non-negative function. In total, $ALG(\sigma)$ is equal to the sum of access costs, swaps and delay costs. As is traditional when analysing online algorithms, we denote by $OPT(\sigma)$ the cost of the optimal solution to input $\sigma$. Furthermore, we say that $ALG$ is $c$-competitive (for $c \geq 1$) if for every input $\sigma$, $ALG(\sigma) \leq c \cdot OPT(\sigma)$. Throughout our work, when clear from context, we use $ALG(\sigma)$ to denote both the cost of the solution and the solution itself. Our algorithms work also in the non-clairvoyant case: In the time windows version we only know the deadline of a request upon its deadline (and not upon its arrival). In the delay version we know the various delay functions of the requests only up to the current time. Next we introduce several notations that will aid us in our proofs.

% \begin{definition}
% Denote by \(n\) the number of elements in the list (i.e. the length of the list). 

% We assume that the request sequence \(\sigma\) contains \(m\) requests where for each \(k\in[m]=\{1,2,...,m\}\):
% \begin{itemize}
% \item \(r_k\) is the \(k\)-th request in \(\sigma\).
% \item \(e_k\) is the required element in the request \(r_k\).
% \item \(a_k\) is the arrival time of \(r_k\). An algorithm must serve \(r_k\) at time \(a_k\) or after it.
% \item \(d_k\) is \(r_k\)'s deadline. An algorithm must serve \(r_k\) at time \(d_k\) or before it. We have \(a_k\leq d_k\).

% \(OPT\) is one of the optimum algorithms for serving \(\sigma\).
% \end{itemize}
% \end{definition}
% \footnote{Note that we defined the cost as in the \(i\) model. It is also possible to define the cost as in \(i-1\) model (the cost of accessing the \(i\)-th element is \(i-1\)). The proof can be adapted to this case but for the simplicity of the presentation we use the \(i\) model.}

\begin{definition}
Let $\mathbb{E}$ be the set of the elements.
\begin{itemize}
    \item Let ${n}$ denote the number of elements in our list ($|\mathbb{E}|=n$) and ${m}$ the number of requests.
    \item Let ${r_k}$ denote the $k$th request and ${e_k}$ the requested element by $r_k$.
    \item Let $y_k\in[n]$ denote the position of $e_k$ in $OPT$s list at the time $OPT$ served $r_k$. Let $x_k\in[n]$ denote the position of $e_k$ in $ALG$s list at the time $OPT$ (and not $ALG$) served $r_k$\footnote{In the delay version, $x_k$ and $y_k$ are defined only in case $OPT$ indeed served the request $r_k$ at some time.}.
\end{itemize} 
\end{definition}
Throughout our work, given an element in the list, we oftentimes consider its neighboring elements in the list. We therefore introduce the following conventions to avoid confusion. Given an element in the list we refer to its \textbf{previous} element as its neighbor which is closer to the head of the list and its \textbf{next} element as its neighbor which is further from the head of the list.
\section{The Algorithm for Time Windows}
\label{section_algorithm_deadlines}
\begin{comment}
The list update with deadlines problem is the same as the classical list update problem except that each request comes with a deadline. The algorithm must serve each request in the time interval between the request's arrival time and the request's deadline. The list update with deadlines problem is a generalization of the classical list update problem because the adversary can define for each request to have a deadline which equals to the request's arrival time, which means the algorithm must serve it immediately, as in the classical list update problem. We denote by \(A(\sigma)\) the cost of algorithm \(A\) on a sequence of requests \(\sigma\).
\end{comment}

\noindent Prior to defining our algorithm, we need the following definitions.

\begin{definition}
We define the \textbf{triggering element}, when a deadline of a request is reached, as the farthest element in the list such that there exists an active request for it which just reached its deadline. We define the \textbf{triggering request} as one of the active requests for the triggering element that just reached its deadline - arbitrary.
%We define the \textbf{triggering element}, when a deadline of a request is reached, as the farthest element that contains a request that reached its deadline and define the \textbf{triggering request} as the corresponding request. If there are multiple active requests for the trigger element, the trigger request is defined arbitrarily as one of the corresponding requests that has reached its deadline.
\end{definition}

When clear from context we will use the term "trigger" instead of "triggering request" or "triggering element". Next, we define the algorithm.

\begin{algorithm}[H]
\caption{Algorithm for Time Windows (i.e. Deadlines)}
\label{deadlines.alg}

\textbf{Upon} deadline of a request \textbf{do}:

\Indp $i \leftarrow $ triggering element's position

Serve the set of requests in the first $2i-1$ elements in the list

Move-to-front the triggering element

\Indm

% Let \(i\) be the position of the trigger element in the list.

% Serve all the active requests (regardless of whether their deadline was reached or
% \hspace*{1cm}not) which are for the first \(2i-1\) elements in the list.

% Move the trigger element to the beginning of the list (i.e. MTF).
% }

% On a deadline of an active request do:

% \hspace*{0.5cm} 1. Let \(i\) be the position of the trigger element in the list.

% \hspace*{0.5cm} 2. Serve all the active requests (regardless of whether their deadline was reached or
% \hspace*{1cm}not) which are for the first \(2i-1\) elements in the list.

% \hspace*{0.5cm} 3. Move the trigger element to the beginning of the list (i.e. MTF).
\end{algorithm}
\begin{comment}
\begin{remark}
Since $i\leq 2i-1$, for $ i \ge 1$, the requests that reached their deadline are being served.
%Note that since $1\leq i$, we have that $i\leq 2i-1$ and therefore, in line 2 all the active requests that reached their deadline (including the trigger request) are being served, thus the algorithm is valid: each request is served between its arrival time and its deadline.
\end{remark}
\end{comment}

We prove the following theorem for the above algorithm in Appendix \ref{section_deadlines}.
% \newline\newline
\begin{theorem}
\label{deadlines.thm.1}
For each sequence of requests $\sigma$, we have that $$ALG(\sigma)\leq 24\cdot OPT(\sigma).$$
\end{theorem}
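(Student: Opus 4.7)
The plan is to use the potential method with a potential of the form $\Phi = \alpha\Phi_1 + \beta\Phi_2 + \gamma\Phi_3$ for suitably chosen constants, and to show event-by-event that the actual cost paid during the event plus the change $\Delta\Phi$ is at most $24$ times the cost $OPT$ incurs in that event. Here $\Phi_1$ counts inversions between the current lists of $ALG$ and $OPT$ (in the spirit of Sleator--Tarjan); $\Phi_2$ sums a ``gift'' $g_k$ over requests $r_k$ that $OPT$ has already served but $ALG$ has not, where $g_k$ is of order $x_k$ when $x_k$ is of the same order as $y_k$ and vanishes otherwise; and $\Phi_3$ accounts for swaps that $OPT$ performs on an element after $OPT$ has served, but $ALG$ has not yet served, some pending request for that element. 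Since $\Phi$ starts at $0$ and stays non-negative, summing the per-event amortized bound yields $ALG(\sigma) \le 24\cdot OPT(\sigma)$.

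I would first dispatch the easy events. A request arrival changes nothing. An $OPT$ swap pays $1$ and changes $\Phi_1$ by $\pm 1$, while growing $\Phi_3$ by an amount that its definition keeps $O(1)$ per swap (by integrating only over the elements currently pending in $ALG$ with their weights already folded in). An $OPT$ batch access paying cost $y$ (the farthest element served in the batch) inserts new gifts into $\Phi_2$; the eligibility condition $x_k = O(y_k)$ caps each gift at $O(y)$ and limits the number of eligible requests per batch, so the total new gift is $O(y)$.

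The heart of the argument is an $ALG$ trigger at position $i$, which costs $(2i-1)+(i-1) \le 3i$. Let $r$ be the triggering request; since the deadline has been reached, $OPT$ must already have served $r$, say at position $y$ in its list. I would split into two subcases. In the \emph{close} subcase $y \ge i/c$, the gift $g_r = \Theta(x_r) = \Theta(i)$ stored in $\Phi_2$ is released and, because $i = O(y)$, more than pays the access and swap costs plus the $\beta$-weighted worst-case increase in $\Phi_1$ that arises when the trigger is inserted at the head (it can create at most $y-1$ new inversions with elements preceding it in $OPT$). In the \emph{far} subcase $y < i/c$, at least $i-y = \Omega(i)$ elements currently preceding the trigger in $ALG$ lie behind it in $OPT$, so promoting the trigger to the head removes $\Omega(i)$ inversions and $\Phi_1$ drops by enough to pay the cost. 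The other requests served in the same trigger access (those located at positions $\le 2i-1$ in $ALG$) release their own gifts, and their $\Phi_3$ contributions compensate for any post-serve motion of their elements by $OPT$, so they contribute non-positively to the amortized budget.

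The main obstacle is balancing the three terms against conflicting demands: the gift function $g_k$ must be simultaneously \emph{small} enough that the new gifts created during any single $OPT$ batch access sum to $O(y)$, yet \emph{large} enough that in the close subcase above, the released gift of a triggering request covers both the access cost of $2i-1$, the swap cost of $i-1$, and the $\beta$-weighted worst-case increase in $\Phi_1$ caused by promoting the trigger to the head. Concretely, choosing the constants $\alpha, \beta, \gamma$ to satisfy these competing constraints simultaneously, while keeping $\Phi_3$ calibrated so that $OPT$'s post-serve swaps on elements still pending in $ALG$ are absorbed ``for free'' relative to $OPT$'s swap cost, is what ultimately fixes the specific constant $24$.
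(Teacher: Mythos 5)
Your architecture is exactly the paper's: the same three-term potential (inversions, gifts for requests served by $OPT$ but not yet by $ALG$, and a counter for $OPT$'s post-serve forward swaps), the same event-by-event amortization, and the same close/far case split at the trigger event (the far case paid by the drop in inversions, the close case by the released gift). The trigger-event analysis and the swap-event analysis you sketch match Lemmas \ref{deadlines.lemma.18}, \ref{deadlines.lemma.10} and \ref{deadlines.lemma.16}.

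There is, however, a genuine gap in your treatment of the $OPT$ batch-access event, which is where the constant (rather than logarithmic) ratio is actually won. You assert that the eligibility condition $x_k=O(y_k)$ ``limits the number of eligible requests per batch, so the total new gift is $O(y)$.'' It does not: a single batch can contain $\Theta(\log n)$ eligible requests, each with gift $\Theta(x_k)$ and $x_k$ as large as $\Theta(y)$, so eligibility alone only yields a per-batch gift of $O(y\log n)$ and hence $O(\log n)$-competitiveness. What saves the paper is a structural fact you never establish: after reducing w.l.o.g.\ to inputs consisting only of triggering requests (Lemma \ref{deadlines.lemma.22}), any two requests pending simultaneously in $ALG$ must occupy positions differing by a factor of at least $2$ (since a trigger at position $x$ serves everything up to position $2x-1$; Lemmas \ref{deadlines.lemma.17} and \ref{deadlines.lemma.8}). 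The gifts of a batch are then a sum of $\psi(\cdot,y)$ over a geometrically separated set of positions, and an explicit optimization (Lemma \ref{deadline.lemma.24}) shows this sum is at most $24y$ --- this is precisely where the constant $24$ comes from, not from balancing $\alpha,\beta,\gamma$ as you suggest. The same triggering-only reduction is also silently needed elsewhere in your sketch, e.g.\ to guarantee that an $OPT$ swap increments the $\Phi_3$ term for at most one pending request, so the per-swap potential increase is $O(1)$. Without the reduction and the doubling/geometric-sum argument, the batch-access step fails quantitatively.
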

\section{The Algorithm for Delays}
\label{section_algorithm_delay}
\begin{comment}
The algorithm will maintain two types of counters: request counters and element counters. For each $k\in[m]$, the \textbf{request counter} ${RC}_k$ is initialized with the value $0$ when the request $r_k$ arrives and increases as $r_k$ suffers delay. In other words, as long as $r_k$ is active in the algorithm, $RC_k$s value at a certain time is the total delay penalty $r_k$ suffered ever since it arrived. After our algorithm accesses the element $e_k$ (i.e. serves $r_k$), the algorithm will not pay any more delay penalty for $r_k$. Some time after our algorithm serves $r_k$, it will delete the request counter ${RC}_k$ \footnote{The deletion of $RC_k$ may occur immediately after our algorithm serves $r_k$ but it can also happen some time later.}, thus ending this counters scope. During the time between the serving of $r_k$ and the deletion of $RC_k$, the value of $RC_k$ will remain the same because a request cannot suffer delay penalty after it is served.
\end{comment}

Our algorithm maintains two types of counters in order to process the input: requests counters and element counters. We begin by defining the \textbf{request counters}. The algorithm  maintains a separate request counter for every incoming request. For a given request $r_k$ we denote its corresponding counter as $RC_k$. The counter is initialized to 0 the moment the request arrives and increases at the same rate that the request incurs delay. Once the request is served, the counter ceases to increase. Finally, our algorithm deletes the request counters - it will do so at some point in the future after the request is served (but not necessarily immediately when the request is served).
\begin{comment}
For each element $e\in\mathbb{E}$, the \textbf{element counter} ${EC}_e$ is initialized with the value $0$ at time $t=0$. Unlike the request counters, an element counter's scope is the entire time horizon. Whenever a request $r_k$ for $e$ gets delay penalty, ${EC}_e$ is increased by the amount of the delay penalty (the same as ${RC}_k$). Once the element $e$ is accessed, the element counter ${EC}_e$ stops to increase (until new requests for $e$ arrive and these requests get more delay penalty).
\end{comment}

Next we define the \textbf{element counters}. Unlike the request counters, element counters exist throughout the entire input (i.e., they are initialized at the start of the input and do not get deleted). We define an element counter ${EC}_e$ for every element $e\in\mathbb{E}$. These counters are initialized to 0 and increase at a rate equal to the total delay incurred by requests to the specific element.

\begin{comment}
The algorithm will act upon two kinds of events: prefix request counters event and element counter event. Let $\ell\in[n]$. A \textbf{prefix request counters event on $\ell$} will occur when the sum of all the request counters of requests for the first $\ell$ elements in the list reaches the value of $\ell$. In this event, firstly, the algorithm accesses the first $2\ell$ elements in the list, serving all the current active requests for them. Then the algorithm deletes all the request counters of requests for the first $\ell$ elements in the list (these are the request counters which caused this event to occur in the first place). The request counters of the elements in positions $\ell+1$ up to $2\ell$ remain intact but since the corresponding requests have just been served by the algorithm - their request counters will not increase anymore. The element counters for the first $2\ell$ elements in the list will also stop to increase, until future requests for these elements arrive and these future requests will suffer delay penalty that will increase the element counters.
\end{comment}

We define two types of events that cause the algorithm to act: prefix-request-counter events and element-counter events. A \textbf{prefix request counters event on $\ell$} for $\ell \in [n]$ occurs when the sum of all the request counters of requests for the first $\ell$ elements in the list reaches the value of $\ell$. When this type of event takes place, the algorithm performs the following two actions. First, it serves the requests of the first $2\ell$ elements. Second, it deletes the request counters that belong to the first $\ell$ elements. Note that these are the request elements that contributed to this event and are therefore deleted. Also note that the request counters of the elements $\ell + 1$ to $2\ell$ and the element counters of the first $2\ell$ elements cease to increase since their requests have been served.

\begin{comment}
Let $e\in\mathbb{E}$. An \textbf{element counter event on $e$} occurs when ${EC}_e$ reaches the value of $\ell$, where $\ell\in[n]$ is the position of the element $e$ in the list currently. Firstly, the algorithm accesses the  first $2\ell$ elements in the list. Secondly, it deletes all the request counters for requests of the element $e$. Thirdly, ${EC}_e$ is zeroed and the algorithm performs move-to-front to $e$.
\end{comment}

An \textbf{element counter event on $e$} for $e\in\mathbb{E}$ occurs when ${EC}_e$ reaches the value of $\ell$, where $\ell\in[n]$ is the position of the element $e$ in the list, currently. When this type of event takes place, the algorithm performs the following three actions. First, it serves the requests on the first $2\ell$ elements. Second, it deletes all request counters of requests to the element $e$. Third, it sets ${EC}_e$ to 0 and perform move-to-front to $e$.

Note that the increase in an element counter equals to the sum of the increase of all the request counters to this element. In particular, the value of the element counter is at least the sum of the non-deleted request counters for the element (It may be larger since request counters may be deleted in request counters events while the element counter maintains its value). Hence when we zero an element counter, we also delete the request counters of requests for this element in order to maintain this invariant.
\begin{comment}
We note that the request counters of requests to the element $e$ are deleted even though this is an element counter event. The reason for that is as follows. When the delay penalty of an active request $r_k$ for $e$ increases, both $RC_k$ and $EC_e$ increase at the same rate. If the algorithm deletes $RC_k$ in a request counters event, there will still be a leftover in the algorithm as a result of $r_k$: that is the increase of $EC_e$. In an element counter event for $e$, $EC_e$ is zeroed and $e$ is moved to the beginning of the list. Therefore, it makes sense to get rid of any leftovers of the requests for $e$ which have been served in the past and that is the motivation for deleting the request counters of requests for $e$ in an element counter on $e$. This also yields to the nice property that each request will have a defined life cycle: it will firstly be active and suffer delay penalty, some time later it will be served, some time later its request counter will be deleted by the algorithm and some time later the required element will have an element counter event which means it is now really should be considered "deleted" by the algorithm.
\textcolor{red}{DV: Is there some intuition you could add as to why you're deleting request counters even though this is an element-event?}
\end{comment}

Next, we present the algorithm.

\begin{algorithm}[H]
\caption{Algorithm for Delay}
\label{delay.alg}

\textbf{Initialization:}

\Indp

\textbf{For each} $e\in\mathbb{E}$ \textbf{do}:

\Indp

${EC}_e\leftarrow 0$

\Indm

\Indm

\textbf{Upon} arrival of a new request $r_k$ \textbf{do}:

\Indp 

${RC}_k\leftarrow 0$

\Indm

\textbf{Upon} prefix-request-counters event on $\ell\in[n]$ \textbf{do}:

\Indp

Serve the set of requests in the first $2\ell$ elements in the list

Delete the request counters for the first $\ell$ elements in the list

\Indm

\textbf{Upon} element-counter event on $e$ (let $\ell$ denote $e$'s current position) \textbf{do}:

\Indp

Serve the set of requests in the first $2\ell$ elements in the list

Delete all the request counters of requests for the element $e$

${EC}_e\leftarrow 0$

Move-to-front the element $e$

\Indm
\end{algorithm}

\noindent We prove the following theorem for the above algorithm in Appendix \ref{section_delay}.
\begin{theorem}
\label{delay.thm.1}
For each sequence of requests $\sigma$, we have that $$ALG(\sigma)\leq 336\cdot OPT(\sigma).$$
\end{theorem}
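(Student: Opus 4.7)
The proof plan is an amortized analysis with a potential function $\Phi = \Phi_1 + \Phi_2 + \Phi_3 + \Phi_4 + \Phi_5$ as sketched in the techniques section. I will maintain $\Phi \geq 0$ throughout, $\Phi = 0$ at the start, and show that for every infinitesimal time step or discrete event, $\Delta ALG(\sigma) + \Delta \Phi \leq 336 \cdot \Delta OPT(\sigma)$; summing and using $\Phi_{\text{final}} \geq 0$ will give the claim.

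First I would formalize the five terms precisely. $\Phi_1$ is a sum over inversion pairs between the lists of $ALG$ and $OPT$, weighted asymmetrically by the value of $EC_e$ relative to the current position of $e$ (where $e$ is the element further back in $ALG$), so that the upcoming move-to-front triggered by $e$'s element-counter event is paid for in advance. $\Phi_2$ keeps a constant multiple of the delay paid jointly by $ALG$ and $OPT$ for requests still active in both; this reserve funds the next element-counter event on the corresponding elements. $\Phi_3$ gives a ``gift'' for every request served by $OPT$ but not yet by $ALG$, sized to cover both the (possible) increase in $\Phi_1$ when $ALG$ finally serves it and the residual delay $ALG$ will still incur on it, and restricted to requests whose $ALG$-position is of the same order as their $OPT$-position (the ``close'' requests). $\Phi_4$ handles $OPT$-moves of elements that occur after $OPT$ serves a request but before $ALG$ does, funding the next element-counter event on the element. $\Phi_5$ is proportional to the ratio between the remaining delay $ALG$ will pay on such a request and the current position of the element in $ALG$'s list.

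Next I would case-analyze the events and verify the amortized inequality in each. (i) Upon arrival of a request, nothing changes; (ii) during continuous delay accumulation, $ALG$ and $OPT$ both pay delay, and $\Phi_2, \Phi_3, \Phi_5$ absorb the net change (note that for requests active in both, $OPT$'s delay payment funds the growth of $\Phi_2$; for requests already served by $OPT$, the gift $\Phi_3$ shrinks by the delay $ALG$ pays, converting future liability into present payment); (iii) when $OPT$ accesses or swaps, we pay $O(1)\cdot\Delta OPT$ on the right-hand side and bound the induced growth of $\Phi_1, \Phi_3, \Phi_4, \Phi_5$ by it. (iv) At an $ALG$ prefix-request-counter event on $\ell$, $ALG$ pays $\Theta(\ell)$ for accessing the first $2\ell$ elements; the trigger condition (sum of request counters on the prefix equals $\ell$) guarantees $\Omega(\ell)$ delay has already been absorbed, and this delay has either fed $\Phi_2$ (to be released now) or is covered by $\Phi_3$ for requests where $OPT$ was faster. (v) At an element-counter event on $e$ at position $\ell$, the invariant $EC_e \geq \ell$ together with $\Phi_2$ provides a $\Theta(\ell)$ credit, and the move-to-front of $e$ reduces $\Phi_1$ (thanks to the asymmetric weighting that priced $e$'s inversions lightly when $EC_e$ was near $\ell$).

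The main obstacle will be event types (iv) and (v), where $ALG$ incurs nearly all its cost. The subtlety is that after serving the first $2\ell$ elements, the positions of ``close'' versus ``far'' elements relative to $OPT$ change in opposite ways: inversions with far elements are resolved and pay $\Phi_1$'s decrease, while close elements may create \emph{new} inversions, so $\Phi_1$ can actually increase. The gift term $\Phi_3$ must be large enough to fund these increases, yet the total gifts accrued over the input must stay proportional to $OPT$'s access cost; this requires exploiting the doubling window ($2\ell$ elements touched per event of ``level'' $\ell$) to bound the number and positions of triggers, analogously to the time-windows argument, but now coupled with the ongoing delay streams handled by $\Phi_2$, with subsequent $OPT$-moves handled by $\Phi_4$, and with residual $ALG$-delay handled by $\Phi_5$. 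Interleaving these accountings consistently, while keeping all five terms nonnegative, is the delicate part and determines the constant $336$.
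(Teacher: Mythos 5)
Your outline reproduces the paper's five-term potential and event case-analysis, but it stops at the level of a plan, and the one place where you commit to a concrete accounting is the place where it breaks. You propose that at a prefix-request-counter event on $\ell$ the access cost $\Theta(\ell)$ is paid by ``releasing'' $\Phi_2$ (and $\Phi_3$). With the potential as defined, nothing is released at that event: the affected requests merely change from active to frozen, which keeps them in $\lambda_1(t)$ or $\lambda_2(t)$, so $\Delta\Phi=0$ there; and if you did release those terms you would double-spend, because the same $36\,d_k$ credit is exactly what must be released later at the element-counter event on $e_k$ to offset the $36x_e^t$ increase in $\sum_e\rho_e$ caused by the move-to-front. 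The mechanism the paper actually uses, and which your plan omits entirely, is a pre-charging step: $ALG$'s total access and swap cost is bounded by $6\sum_k d_k$ (each unit of delay deposits $1$ to the delay itself, $2$ to the request counter to fund the future prefix event, and $3$ to the element counter to fund the future element-counter event including its $\ell-1$ swaps), after which the charging model is changed so that $ALG^p=0$ at both event types and one only needs $\Delta\Phi^p\le 0$. Without this separation (or an explicit equivalent), your events (iv) and (v) compete for the same credit.

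Two further ingredients are missing and are not routine. First, the event where $OPT$ serves many requests together requires showing that the total gift added, $\sum 42d_k\cdot\mathbbm{1}[x_k\le 4y_k]$, is $O(y)$; this rests on a geometric lemma stating that the $d_k$ of all requests that are active or frozen-with-counter among the first $N$ positions sum to at most $2N$ (proved by a dyadic decomposition of the prefix and the fact that no prefix event has yet fired on each dyadic block). You invoke ``the doubling window'' only for $ALG$'s own events, where it is not the issue. Second, the case split between ``close'' ($x\le 4y$) and ``far'' ($x>4y$) elements needs the quantitative fact $3x-4|I|-3x\cdot\mathbbm{1}[x\le 4y]\le 0$, i.e., that a far element has at least $\frac34 x$ inversions; this inequality is what lets the inversion term absorb both the residual delay and the element-counter event for far elements, and it is used verbatim in two different event analyses. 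As written, your proposal identifies where the difficulty lies but does not supply the arguments that resolve it.
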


\section{Potential Functions for Time Windows and Delay}

Our proofs use potential functions. In particular we prove for each possible event that $$\Delta ALG+\Delta \Phi\leq c\cdot\Delta OPT$$ where $\Phi$ is the potential and $c$ is the competitive ratio. In this section we describe the potential functions. The detailed proofs that use these potential functions appear in Appendix \ref{section_deadlines} and Appendix \ref{section_delay}.

\subsection{Time Windows}
\label{section_potential_function_deadlines}
As mentioned earlier, our potential function used for the time windows setting is comprised of three terms. We will define them separately. We begin with the first term that aims to capture the difference between $ALG$ and $OPT$'s lists at any given moment.

\begin{definition}
    Let ${\phi(t)}$ denote the number of \textbf{inversions} between $ALG$'s and $OPT$'s lists at time $t$. Specifically, $\phi(t)=|\{(i,j)\in \mathbb{E}^2|\text{ At time }t \text{, }i \text{ is before } j \text{ in } ALG \text{'s list and after } j \text{ in } OPT \text{'s list}\}|$.
\end{definition}

The second term accounts for the difference in the set of served requests between the two algorithms. Specifically, whenever the optimal algorithm serves a request not yet served by the online algorithm, we add value to this term which will be subtracted
once the online algorithm serves the request. Before defining this term, we need the following definition.

\begin{definition}
For each time \(t\), let
\(\lambda(t)\subseteq[m]\) be the set of all the request indices \(k\) such that the request \(r_k\) arrived and was served by \(OPT\) but was not served by \(ALG\) at time \(t\).
\end{definition}

Recall that for request $r_k$ we denote by $y_k$ the position of $e_k$ in $OPT$'s list at the time that $OPT$ served $r_k$. Furthermore, we denote by $x_k$ the position of $e_k$ in $ALG$'s list at the time $OPT$ (and not $ALG$) served $r_k$. We are now ready to define the second term in our potential function.

\begin{definition}
    For $k \in \lambda (t)$ we define \(\psi(x_k, y_k)\geq 0\) as 
    \[
    \psi(x,y) = \begin{cases}
    7x & \text{if  } 1\leq x\leq y \\
    8y-x & \text{if  } y\leq x\leq 8y \\
    0 & \text{if  } 8y\leq x
    \end{cases}
    \]
\end{definition}

\noindent Next, we define the third term of our potential function.

\begin{definition}
    We define \(\mu_k(t)\) as the number of swaps \(OPT\) performed between \(e_k\) and its next element in the list from the time \(OPT\) served the request \(r_k\) until time \(t\). 
\end{definition}

\noindent Finally, we combine the terms and define our potential function.

\begin{definition}
We define our potential function for Time Windows as
\[\Phi(t)=4\cdot \phi(t)+\sum_{k\in\lambda(t)}\psi(x_k,y_k)+4\cdot\sum_{k\in\lambda(t)}\mu_k(t).\]
\end{definition}

\subsection{Delay}
\label{section_potential_function_delay}

In the delays setting, we define a different potential function that is comprised of five terms. We will define the terms separately first and thereafter use them to compose our potential function. We begin with the first term.

As mentioned in Our Techniques, the first term also aim to capture the distance between $ALG$'s and $OPT$'s lists. In the time windows setting, we defined this term as the number of element inversions. In the delays case this does not suffice; we have to take into the account the elements' counters as well. To gain some intuition as to why this addition is needed, consider the following example. Assume that elements $i,j$ are ordered $(i,j)$ in $ALG$ and reversed in $OPT$. Recall that $ALG$ is defined such that when $j$' element counter is filled, then we move it to the front (thereby changing the $ALG$'s ordering to $(j,i)$). Therefore, if it is the case that $j$'s element counter is nearly filled, intuitively we may say that $i,j$'s ordering in $ALG$ and $OPT$ are closer to each other than if $j$'s element counter would have been empty. Therefore, we would like the contribution to the potential function to be smaller in the former case.

Note that the contribution of the inversion $(i,j)$ depends on the element counter of $j$ but not on the element counter of $i$ (i.e. the contribution is asymmetric). Even if the element counter of $j$ is very close to its position in the online algorithm’s list, we still need a big contribution of the pair $(i,j)$ in order to pay for the next element counter event on $j$. However, if the element counter of $j$ is far from its position in the online algorithm’s list, we need even more contribution of the pair $(i,j)$ to the potential function in order to also cover future delay penalty which the algorithm may suffer on the element $j$ that does not cause an element counter event on $j$ to occur in the short term. Before formally defining this term, we define the following.

\begin{definition}
For a time $t$ and an element $e\in\mathbb{E}$ we define:
\begin{itemize}
    \item $EC_e^t$ to be the value of the element counter $EC_e$ at time $t$.
    \item $x_e^t\in[n]$ ($y_e^t\in[n]$ resp) to be the position of $e$ in $ALG$s ($OPT$s resp) list at time $t$.
    \item $I_e^t=\{i\in\mathbb{E}|i \text{ is before } e \text{ in } ALG\text{s list and after } e \text{ in } OPT\text{s list at time } t\}$.
\end{itemize}
\end{definition}

\begin{definition}
    For element $e \in \mathbb{E}$ we define $\rho_e(t) =|I_e^t|\cdot(28-8\cdot\frac{EC_e^t}{x_e^t})$
\end{definition}

Observe that each $i\in I_e^t$ contributes $20+8\cdot(1-\frac{EC_e^t}{x_e^t})$ to $\rho_e(t)$. The additive term of $20$ is used in order to cover the next element counter event for $e$ while the second term is used to cover the delay penalty $ALG$ will pay in the future for requests for $e$. Note that the term $1-\frac{EC_e^t}{x_e^t}$ is the fraction of $EC_e$ which is not "filled" yet. If this term is very low, $ALG$ is very close to have an element counter event on $e$, which causes the order of $i$ and $e$ in $ALG$s list and $OPT$s list to be the same, thus it makes sense that the contribution of $i$ to $\rho_e(t)$ is lower compared with the case where $1-\frac{EC_e^t}{x_e^t}$ would be higher.

Next, we consider the second term. First, we denote the total incurred delay by a request as $d_k(t)$. Formally, this is defined as follows.

\begin{definition}
For a given request $r_k$ and time $t$ let $d_k(t)$ denote the total delay incurred by the request by $ALG$ up to time $t$. (Note that it is defined as 0 before the request arrived and remains unchanged after the request is served). Let $d_k=\sup_{t}d_k(t)$. Note that this is a supremum and not maximum for the case that $r_k$ is never served. Note that $d_k\leq n$ because $ALG$ always serves $r_k$ before $d_k>n$.
\end{definition}

%We refer the reader to Our Techniques for greater intuition regarding this term. 
Our second term is a sum of incurred delay costs of specific elements. 

\begin{definition}
For each $k\in[m]$, the request $r_k$ is considered:
\begin{itemize}
    \item \textbf{active} in $ALG$ (resp. $OPT$) from the time it arrives until it is served by $ALG$ (resp. $OPT$).
    \item \textbf{frozen} from the time it is served by $ALG$ until ${EC}_{e_k}$ is zeroed in an $e_k$ element counter event.
\end{itemize}
\end{definition}

\begin{definition}
For time $t$ we define $\lambda(t)\subseteq[m]$ as the set of requests (request indices) which are either active \textbf{or} frozen in $ALG$ at time $t$. We define $\lambda_1(t) \subseteq \lambda(t)$ as the set of requests that are also active in $OPT$ at time $t$ and 
$\lambda_2(t) \subseteq \lambda(t)$ as the set of requests that are also not active in $OPT$ at time $t$.
\end{definition}

\noindent Finally, we define our second term.

\begin{definition}
    We define the second term of the Delays potential function as $\sum_{k\in\lambda_1(t)}d_k(t)$.
\end{definition}

\noindent The third term is defined as follows (we use $x_k$ and $y_k$ as previously defined).

\begin{definition}
    We define the third term as $\sum_{k\in\lambda_2(t)}(42d_k-6d_k(t))\cdot\mathbbm{1}{[x_k\leq4y_k]}$.
\end{definition}

Note that $42d_k-6d_k(t)=36d_k+6\cdot(d_k-d_k(t))$. Therefore each request index $k\in\lambda_2(t)$ contributes two terms to $\Phi$: $36d_k$ is used to cover the next element counter on $e_k$ while the second term is $6$ times the delay $ALG$ will pay for $r_k$ in the future, which will be used to cover this exact delay penalty that $ALG$ will pay in the future for $r_k$.

The fourth term is defined to cover the next element counter event on a given element as follows.

\begin{definition}
    Let $\mu_e(t)$, for $e \in \mathbb{E}$, be the number of swaps $OPT$ performed between $e$ and its next element in its list ever since the last element counter event before time $t$ on $e$ by $ALG$ (or the beginning of the time horizon if there was not such an event).
\end{definition}

Finally, we define the fifth term. The fifth term should cover the scenario in which the optimal algorithm served a request and then moved the element further in its list but the online algorithm will suffer more delay penalty for this request in the future. It will also cover the delay cost that the online algorithm will pay and thus it is proportional to the fraction between the future delay the online algorithm will pay for the request and the position of the element in the online algorithm’s list.
\begin{definition}
    Let $\mu_k(t)$, for $k\in\lambda_2(t)$, be the number of swaps $OPT$ performed between $e_k$ and its next element in its list ever since $OPT$ served the request $r_k$ (by accessing $e_k$).
\end{definition}
\begin{definition}
    We define the fifth term of the Delays potential function as $8\cdot\sum_{k\in\lambda_2(t)}\frac{d_k-d_k(t)}{x_{e_k}^t}\cdot\mu_k(t)$.
\end{definition}

\noindent We are now ready to define our potential function.

\begin{definition}
We define our potential function for the delays setting as 
\begin{align*}
    \Phi(t) & = \sum_{e\in\mathbb{E}}\rho_e(t)+36\cdot\sum_{k\in\lambda_1(t)}d_k(t)+\sum_{k\in\lambda_2(t)}(42d_k-6d_k(t))\cdot\mathbbm{1}{[x_k\leq4y_k]}+\\
    & +48\cdot\sum_{e\in\mathbb{E}}\mu_e(t)+8\cdot\sum_{k\in\lambda_2(t)}\frac{d_k-d_k(t)}{x_{e_k}^t}\cdot\mu_k(t)
\end{align*}
\end{definition}
\section{Conclusion and Open Problems}
In this paper, we presented the List Update with Time Windows and Delay, which generalize the classical List Update problem.
\begin{itemize}
\item 
We presented a $24$-competitive ratio algorithm for the List Update with Time Windows problem. 
%proven by Amb{\"u}hl et al. 
\item 
We presented a $336$-competitive ratio algorithm for the List Update with Delays problem. 
\begin{comment}
\item
Open problems: Closing the gap between the upper bounds and lower bounds. The highest lower bound known for these problems is 2 for deterministic algorithms 
%and $1.50084$ for randomized algorithms, 
which is the lower bounds for the classical List Update problem. One way to do it is to improve the analysis. Another way is to improve the algorithms. Observe that our algorithms do not use the fact that the deadline or delay function is known when the request arrives, i.e. our results hold even if the future deadlines/delays are unknown (as in non-clairvoyant models). Perhaps an algorithm which will use its knowledge about the deadlines or delays will perform better than our algorithms. Perhaps a random algorithm will also have a better performance than our algorithms, which are deterministic. A third possible way to close the gap mentioned above is to prove a lower bound for the competitive ratio of more than 2 %or $1.50084$, 
using a suitable request sequence.
\end{comment}
\item Open problems: The main issue left unsolved is the gap between the upper and lower bounds. Currently, the best lower bound for both problems considered is 2. Note that this is the same lower bound given to the original List Update problem. An interesting followup would be to improve upon this result and show a better lower bound. On the other hand, one may improve the upper bound - our algorithms are non-clairvoyant in the sense that our proofs and algorithms hold even when the deadlines/delays are unknown. It would be interesting to understand whether clairvoyance may improve the upper bound. Another interesting direction would be to consider randomization as a way of improving our bounds.

\end{itemize}
\bibliographystyle{plainnat}
\bibliography{bib.bib}

\begin{thebibliography}{27}
\providecommand{\natexlab}[1]{#1}
\providecommand{\url}[1]{\texttt{#1}}
\expandafter\ifx\csname urlstyle\endcsname\relax
  \providecommand{\doi}[1]{doi: #1}\else
  \providecommand{\doi}{doi: \begingroup \urlstyle{rm}\Url}\fi

\bibitem[Albers(1998)]{albers1998improved}
Susanne Albers.
\newblock Improved randomized on-line algorithms for the list update problem.
\newblock \emph{SIAM Journal on Computing}, 27\penalty0 (3):\penalty0 682--693,
  1998.

\bibitem[Albers and Mitzenmacher(1997)]{albers1997revisiting}
Susanne Albers and Michael Mitzenmacher.
\newblock Revisiting the counter algorithms for list update.
\newblock \emph{Information processing letters}, 64\penalty0 (3):\penalty0
  155--160, 1997.

\bibitem[Albers et~al.(1995)Albers, Von~Stengel, and
  Werchner]{albers1995combined}
Susanne Albers, Bernhard Von~Stengel, and Ralph Werchner.
\newblock A combined bit and timestamp algorithm for the list update problem.
\newblock \emph{Information Processing Letters}, 56\penalty0 (3):\penalty0
  135--139, 1995.

\bibitem[Amb{\"u}hl(2000)]{ambuhl2000offline}
Christoph Amb{\"u}hl.
\newblock Offline list update is np-hard.
\newblock In \emph{European Symposium on Algorithms}, pages 42--51. Springer,
  2000.

\bibitem[Amb{\"u}hl et~al.(2000)Amb{\"u}hl, G{\"a}rtner, and
  Stengel]{ambuhl2000optimal}
Christoph Amb{\"u}hl, Bernd G{\"a}rtner, and Bernhard~von Stengel.
\newblock Optimal projective algorithms for the list update problem.
\newblock In \emph{International Colloquium on Automata, Languages, and
  Programming}, pages 305--316. Springer, 2000.

\bibitem[Amb{\"u}hl et~al.(2001)Amb{\"u}hl, G{\"a}rtner, and
  Von~Stengel]{ambuhl2001new}
Christoph Amb{\"u}hl, Bernd G{\"a}rtner, and Bernhard Von~Stengel.
\newblock A new lower bound for the list update problem in the partial cost
  model.
\newblock \emph{Theoretical Computer Science}, 268\penalty0 (1):\penalty0
  3--16, 2001.

\bibitem[Ashlagi et~al.(2017)Ashlagi, Azar, Charikar, Chiplunkar, Geri, Kaplan,
  Makhijani, Wang, and Wattenhofer]{AshlagiACCGKMWW2017}
Itai Ashlagi, Yossi Azar, Moses Charikar, Ashish Chiplunkar, Ofir Geri, Haim
  Kaplan, Rahul~M. Makhijani, Yuyi Wang, and Roger Wattenhofer.
\newblock Min-cost bipartite perfect matching with delays.
\newblock In \emph{{APPROX/RANDOM}}, pages 1:1--1:20, 2017.

\bibitem[Azar and
  Touitou(2019)]{General_Framework_for_Metric_Optimization_Problems_with_Delay_or_with_Deadlines}
Yossi Azar and Noam Touitou.
\newblock General framework for metric optimization problems with delay or with
  deadlines.
\newblock In David Zuckerman, editor, \emph{60th {IEEE} Annual Symposium on
  Foundations of Computer Science, {FOCS} 2019, Baltimore, Maryland, USA,
  November 9-12, 2019}, pages 60--71. {IEEE} Computer Society, 2019.
\newblock \doi{10.1109/FOCS.2019.00013}.
\newblock URL \url{https://doi.org/10.1109/FOCS.2019.00013}.

\bibitem[Azar and Touitou(2020)]{focs_AzarT20}
Yossi Azar and Noam Touitou.
\newblock Beyond tree embeddings - a deterministic framework for network design
  with deadlines or delay.
\newblock In Sandy Irani, editor, \emph{61st {IEEE} Annual Symposium on
  Foundations of Computer Science, {FOCS} 2020, Durham, NC, USA, November
  16-19, 2020}, pages 1368--1379. {IEEE}, 2020.
\newblock \doi{10.1109/FOCS46700.2020.00129}.
\newblock URL \url{https://doi.org/10.1109/FOCS46700.2020.00129}.

\bibitem[Azar et~al.(2017{\natexlab{a}})Azar, Chiplunkar, and
  Kaplan]{Polylogarithmic_Bounds_on_the_Competitiveness_of_Min_cost_Perfect_Matching_with_Delays}
Yossi Azar, Ashish Chiplunkar, and Haim Kaplan.
\newblock Polylogarithmic bounds on the competitiveness of min-cost perfect
  matching with delays.
\newblock In \emph{Proceedings of the Twenty-Eighth Annual {ACM-SIAM} Symposium
  on Discrete Algorithms, {SODA} 2017, Barcelona, Spain, Hotel Porta Fira,
  January 16-19}, pages 1051--1061, 2017{\natexlab{a}}.

\bibitem[Azar et~al.(2017{\natexlab{b}})Azar, Ganesh, Ge, and
  Panigrahi]{AzarGGP2017}
Yossi Azar, Arun Ganesh, Rong Ge, and Debmalya Panigrahi.
\newblock Online service with delay.
\newblock In \emph{{STOC}}, pages 551--563, 2017{\natexlab{b}}.

\bibitem[Azar et~al.(2019)Azar, Emek, van Stee, and
  Vainstein]{The_Price_of_Clustering_in_Bin-Packing_with_Applications_to_Bin_Packing_with_Delays}
Yossi Azar, Yuval Emek, Rob van Stee, and Danny Vainstein.
\newblock The price of clustering in bin-packing with applications to
  bin-packingwith delays.
\newblock In \emph{The 31st {ACM} on Symposium on Parallelism in Algorithms and
  Architectures, {SPAA} 2019, Phoenix, AZ, USA, June 22-24, 2019}, pages 1--10,
  2019.

\bibitem[Azar et~al.(2020)Azar, Chiplunkar, Kutten, and
  Touitou]{SetCoverwithDelayClairvoyanceIsNotRequired}
Yossi Azar, Ashish Chiplunkar, Shay Kutten, and Noam Touitou.
\newblock Set cover with delay - clairvoyance is not required.
\newblock In Fabrizio Grandoni, Grzegorz Herman, and Peter Sanders, editors,
  \emph{28th Annual European Symposium on Algorithms, {ESA} 2020, September
  7-9, 2020, Pisa, Italy (Virtual Conference)}, volume 173 of \emph{LIPIcs},
  pages 8:1--8:21. Schloss Dagstuhl - Leibniz-Zentrum f{\"{u}}r Informatik,
  2020.
\newblock \doi{10.4230/LIPIcs.ESA.2020.8}.
\newblock URL \url{https://doi.org/10.4230/LIPIcs.ESA.2020.8}.

\bibitem[Azar et~al.(2021)Azar, Ren, and
  Vainstein]{TheMinCostMatchingwithConcaveDelaysProblem}
Yossi Azar, Runtian Ren, and Danny Vainstein.
\newblock The min-cost matching with concave delays problem.
\newblock In D{\'{a}}niel Marx, editor, \emph{Proceedings of the 2021
  {ACM-SIAM} Symposium on Discrete Algorithms, {SODA} 2021, Virtual Conference,
  January 10 - 13, 2021}, pages 301--320. {SIAM}, 2021.
\newblock \doi{10.1137/1.9781611976465.20}.
\newblock URL \url{https://doi.org/10.1137/1.9781611976465.20}.

\bibitem[Bienkowski et~al.(2013)Bienkowski, Byrka, Chrobak, Dobbs, Nowicki,
  Sviridenko, Swirszcz, and
  Young]{ApproximationAlgorithmsfortheJointReplenishmentProblemwithDeadlines}
Marcin Bienkowski, Jaroslaw Byrka, Marek Chrobak, Neil~B. Dobbs, Tomasz
  Nowicki, Maxim Sviridenko, Grzegorz Swirszcz, and Neal~E. Young.
\newblock Approximation algorithms for the joint replenishment problem with
  deadlines.
\newblock In Fedor~V. Fomin, Rusins Freivalds, Marta~Z. Kwiatkowska, and David
  Peleg, editors, \emph{Automata, Languages, and Programming - 40th
  International Colloquium, {ICALP} 2013, Riga, Latvia, July 8-12, 2013,
  Proceedings, Part {I}}, volume 7965 of \emph{Lecture Notes in Computer
  Science}, pages 135--147. Springer, 2013.
\newblock \doi{10.1007/978-3-642-39206-1\_12}.
\newblock URL \url{https://doi.org/10.1007/978-3-642-39206-1\_12}.

\bibitem[Bienkowski et~al.(2014)Bienkowski, Byrka, Chrobak, Jez, Nogneng, and
  Sgall]{BetterApproximationBoundsfortheJointReplenishmentProblem}
Marcin Bienkowski, Jaroslaw Byrka, Marek Chrobak, Lukasz Jez, Dorian Nogneng,
  and Jir{\'{\i}} Sgall.
\newblock Better approximation bounds for the joint replenishment problem.
\newblock In Chandra Chekuri, editor, \emph{Proceedings of the Twenty-Fifth
  Annual {ACM-SIAM} Symposium on Discrete Algorithms, {SODA} 2014, Portland,
  Oregon, USA, January 5-7, 2014}, pages 42--54. {SIAM}, 2014.
\newblock \doi{10.1137/1.9781611973402.4}.
\newblock URL \url{https://doi.org/10.1137/1.9781611973402.4}.

\bibitem[Bienkowski et~al.(2016)Bienkowski, B{\"{o}}hm, Byrka, Chrobak,
  D{\"{u}}rr, Folwarczn\'y, Jez, Sgall, Nguyen, and
  Vesel{\'{y}}]{Online_Algorithms_for_Multi_Level_Aggregation}
Marcin Bienkowski, Martin B{\"{o}}hm, Jaroslaw Byrka, Marek Chrobak, Christoph
  D{\"{u}}rr, Luk\'a\v{s} Folwarczn\'y, Lukasz Jez, Jiri Sgall, Kim~Thang
  Nguyen, and Pavel Vesel{\'{y}}.
\newblock Online algorithms for multi-level aggregation.
\newblock In Piotr Sankowski and Christos~D. Zaroliagis, editors, \emph{24th
  Annual European Symposium on Algorithms, {ESA} 2016, August 22-24, 2016,
  Aarhus, Denmark}, volume~57 of \emph{LIPIcs}, pages 12:1--12:17. Schloss
  Dagstuhl - Leibniz-Zentrum f{\"{u}}r Informatik, 2016.
\newblock \doi{10.4230/LIPIcs.ESA.2016.12}.
\newblock URL \url{https://doi.org/10.4230/LIPIcs.ESA.2016.12}.

\bibitem[Bienkowski et~al.(2018)Bienkowski, Kraska, and
  Schmidt]{BienkowskiKS2018}
Marcin Bienkowski, Artur Kraska, and Pawel Schmidt.
\newblock Online service with delay on a line.
\newblock In \emph{{SIROCCO}}, 2018.

\bibitem[Bienkowski et~al.(2022)Bienkowski, B{\"{o}}hm, Byrka, and
  Marcinkowski]{OnlineFacilityLocationwithLinearDelay}
Marcin Bienkowski, Martin B{\"{o}}hm, Jaroslaw Byrka, and Jan Marcinkowski.
\newblock Online facility location with linear delay.
\newblock In Amit Chakrabarti and Chaitanya Swamy, editors,
  \emph{Approximation, Randomization, and Combinatorial Optimization.
  Algorithms and Techniques, {APPROX/RANDOM} 2022, September 19-21, 2022,
  University of Illinois, Urbana-Champaign, {USA} (Virtual Conference)}, volume
  245 of \emph{LIPIcs}, pages 45:1--45:17. Schloss Dagstuhl - Leibniz-Zentrum
  f{\"{u}}r Informatik, 2022.
\newblock \doi{10.4230/LIPIcs.APPROX/RANDOM.2022.45}.
\newblock URL \url{https://doi.org/10.4230/LIPIcs.APPROX/RANDOM.2022.45}.

\bibitem[Buchbinder et~al.(2017)Buchbinder, Feldman, Naor, and
  Talmon]{CompetitiveAlgorithmforOnlineMultilevelAggregation}
Niv Buchbinder, Moran Feldman, Joseph~(Seffi) Naor, and Ohad Talmon.
\newblock \emph{O}(depth)-competitive algorithm for online multi-level
  aggregation.
\newblock In Philip~N. Klein, editor, \emph{Proceedings of the Twenty-Eighth
  Annual {ACM-SIAM} Symposium on Discrete Algorithms, {SODA} 2017, Barcelona,
  Spain, Hotel Porta Fira, January 16-19}, pages 1235--1244. {SIAM}, 2017.
\newblock \doi{10.1137/1.9781611974782.80}.
\newblock URL \url{https://doi.org/10.1137/1.9781611974782.80}.

\bibitem[Emek et~al.(2016)Emek, Kutten, and
  Wattenhofer]{Online_matching_haste_makes_waste}
Yuval Emek, Shay Kutten, and Roger Wattenhofer.
\newblock Online matching: haste makes waste!
\newblock In \emph{Proceedings of the 48th Annual {ACM} {SIGACT} Symposium on
  Theory of Computing, {STOC} 2016, Cambridge, MA, USA, June 18-21, 2016},
  pages 333--344, 2016.

\bibitem[Epstein(2019)]{On_bin_packing_with_clustering_and_bin_packing_with_delays}
Leah Epstein.
\newblock On bin packing with clustering and bin packing with delays.
\newblock \emph{CoRR}, abs/1908.06727, 2019.

\bibitem[Gupta et~al.(2022)Gupta, Kumar, and
  Panigrahi]{Caching_with_Time_Windows_and_Delays}
Anupam Gupta, Amit Kumar, and Debmalya Panigrahi.
\newblock Caching with time windows and delays.
\newblock \emph{{SIAM} J. Comput.}, 51\penalty0 (4):\penalty0 975--1017, 2022.
\newblock \doi{10.1137/20m1346286}.
\newblock URL \url{https://doi.org/10.1137/20m1346286}.

\bibitem[Irani(1991)]{irani1991two}
Sandy Irani.
\newblock Two results on the list update problem.
\newblock \emph{Information Processing Letters}, 38\penalty0 (6):\penalty0
  301--306, 1991.

\bibitem[Reingold et~al.(1994)Reingold, Westbrook, and
  Sleator]{reingold1994randomized}
Nick Reingold, Jeffery Westbrook, and Daniel~D Sleator.
\newblock Randomized competitive algorithms for the list update problem.
\newblock \emph{Algorithmica}, 11\penalty0 (1):\penalty0 15--32, 1994.

\bibitem[Sleator and Tarjan(1985)]{sleator1985amortized}
Daniel~D Sleator and Robert~E Tarjan.
\newblock Amortized efficiency of list update and paging rules.
\newblock \emph{Communications of the ACM}, 28\penalty0 (2):\penalty0 202--208,
  1985.

\bibitem[Teia(1993)]{teia1993lower}
Boris Teia.
\newblock A lower bound for randomized list update algorithms.
\newblock \emph{Information Processing Letters}, 47\penalty0 (1):\penalty0
  5--9, 1993.

\end{thebibliography}
\clearpage

\appendix
\section{The Analysis for the Algorithm for Time Windows}
\label{section_deadlines}
In this section we will prove Theorem \ref{deadlines.thm.1}. Throughout we will denote Algorithm \ref{deadlines.alg} as $ALG$.
\begin{definition}
    For each $k\in[m]$ we use ${a_k}$ and ${q_k}$ to denote the arrival time and deadline of the request $r_k$. 
\end{definition}
As a first step towards proving Theorem \ref{deadlines.thm.1} we prove in Lemma \ref{deadlines.lemma.22} that it is enough to consider inputs that only contain triggering requests. 
\begin{lemma}
\label{deadlines.lemma.22}
Let \(\sigma\) be a sequence of requests and let \(\sigma^{'}\) be \(\sigma\) after omitting all the non-triggering requests (with respect to \(ALG\)). Then \[\frac{ALG(\sigma)}{{OPT}(\sigma)}\leq \frac{ALG(\sigma^{'})}{{OPT}(\sigma^{'})}.\]
\end{lemma}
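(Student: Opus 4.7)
The plan is to establish two facts and combine them. First, that $ALG(\sigma) = ALG(\sigma')$. Second, that $OPT(\sigma') \leq OPT(\sigma)$. Together these yield $\frac{ALG(\sigma)}{OPT(\sigma)} = \frac{ALG(\sigma')}{OPT(\sigma)} \leq \frac{ALG(\sigma')}{OPT(\sigma')}$, as claimed. The bound on $OPT$ is essentially immediate: any feasible offline solution for the larger instance $\sigma$ remains feasible for the sub-instance $\sigma'$ without modification, since the same accesses and swaps already serve the requests of $\sigma$, which are a superset of those of $\sigma'$.

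The substantive part is the equality $ALG(\sigma) = ALG(\sigma')$. Here I would induct on the time-ordered sequence of $ALG$'s triggering events. The inductive hypothesis is that through event $j$ the list configuration of $ALG$ on $\sigma$ agrees with its list on $\sigma'$, and the cumulative access and swap costs agree as well. For the step, I would show that event $j+1$ happens at the same time and targets the same triggering element in both instances. By definition of the algorithm, a deadline produces an action only when some still-active request reaches that deadline, and the triggering element is the farthest such. A non-triggering request is, by definition, never itself the triggering request at its own deadline; hence removing it neither creates nor destroys a triggering event, nor changes the identity of the farthest element with a request reaching its deadline at any such event. Consequently $ALG$ performs identical accesses and move-to-front operations in both instances, so the costs coincide.

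The main obstacle I anticipate is the bookkeeping through the edge cases of the inductive step: ties when several requests share a deadline (the arbitrary choice in the definition of triggering element), non-triggering requests for the same element as the triggering one at a given deadline, and non-triggering requests whose deadline is only reached after they have already been served passively by an earlier triggering event. In each case the removal of the non-triggering request must not alter the triggering element nor the action $ALG$ takes, which follows from the observation that the triggering element is always a farthest-position element and that no non-triggering request lies strictly farther than the chosen triggering one at its deadline. Once these cases are handled cleanly, both facts combine to give the stated inequality.
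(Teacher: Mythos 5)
Your proposal is correct and follows essentially the same route as the paper: both reduce the claim to $OPT(\sigma')\leq OPT(\sigma)$ plus $ALG(\sigma)=ALG(\sigma')$, and both establish the latter by arguing that omitting a non-triggering request cannot change the triggering element or the action taken at any deadline (the paper does this via a three-way case analysis on each omitted request, which matches the edge cases you list, while you phrase it as an induction over triggering events). No gap.
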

\begin{proof}
We have that \({OPT}(\sigma^{'})\leq {OPT}(\sigma)\), since \(\sigma\) contains all the requests in \(\sigma^{'}\).
Therefore, in order to prove the lemma, it it is sufficient to prove that \(ALG(\sigma^{'})=ALG(\sigma)\). In order to prove that, it is sufficient to prove that \(ALG\) behaves in the same way for the two sequences. Indeed, assume that \(r\) is a non-triggering request in \(\sigma\) for an element \(e\) which \(ALG\) served at time \(t\). Let \(x\) be the position of \(e\) in \(ALG\)'s list at time \(t\). Let \(r^*\) be the trigger request in \(\sigma\) for an element \(e^*\) which \(ALG\) served at time \(t\). Observe that such \(r^*\) must exist. Note that \(r^*\) and \(r\) are different requests because \(r^*\) is a trigger request and \(r\) is not. Let \(x^*\) be the position of \(e^*\) in \(ALG\)'s list at time \(t\). Since \(r^*\) is a trigger request which \(ALG\) served at time \(t\), we have that \(t\) is the deadline of \(r^*\).
There are three possible cases: (1) \(e=e^*\), (2) \(e\neq e^*\) and the deadline of \(r\) is exactly \(t\) which implies that \(x<x^*\) (3) \(e\neq e^*\) and the deadline of \(r\) is after time \(t\) which implies  that \(x\leq 2x^*-1\).
%\begin{enumerate}
%\item \(e=e^*\).
%\item \(e\neq e^*\) and the deadline of \(r\) is exactly \(t\). It means that \(x<x^*\).
%\item \(e\neq e^*\) and the deadline of \(r\) is after time \(t\). It means that \(x\leq 2x^*-1\).
%\end{enumerate}
It is easy to verify that for each of the cases above, omitting \(r\) from \(\sigma\) does not change the behavior of \(ALG\) on the sequence. This holds for any non-triggering request \(r\) in \(\sigma\) and thus \(ALG\) behaves on \(\sigma^{'}\) in the same as its behavior on \(\sigma\).
\end{proof}
\begin{corollary}
\label{deadlines.lemma.3}
% Henceforth, due to Lemma \ref{deadlines.lemma.22}, all the requests are assumed to be triggering requests with respect to \(ALG\).

We may assume w.l.o.g. that the input $\sigma$ only contains triggering requests (with respect to $ALG$).

\end{corollary}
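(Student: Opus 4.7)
The plan is to derive this corollary directly from Lemma \ref{deadlines.lemma.22}, which has just been established. The key observation is that the goal of the remainder of the analysis is to bound the competitive ratio $ALG(\sigma)/OPT(\sigma)$ by $24$ over all input sequences $\sigma$. Lemma \ref{deadlines.lemma.22} shows that this ratio is dominated by the analogous ratio on the sequence $\sigma'$ obtained from $\sigma$ by stripping away every non-triggering request. Consequently, if one proves a ratio bound on inputs consisting only of triggering requests, the same bound automatically transfers to arbitrary inputs.

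Concretely, I would argue as follows. Fix an arbitrary request sequence $\sigma$ and let $\sigma'$ denote the subsequence of $\sigma$ consisting only of triggering requests with respect to $ALG$. By Lemma \ref{deadlines.lemma.22}, we have
\[
\frac{ALG(\sigma)}{OPT(\sigma)} \;\leq\; \frac{ALG(\sigma')}{OPT(\sigma')}.
\]
Moreover, by construction, every request in $\sigma'$ is a triggering request with respect to $ALG$'s execution on $\sigma'$ (the lemma's proof also shows that $ALG$ behaves identically on $\sigma$ and $\sigma'$, so the set of triggering requests is preserved). Hence it suffices to upper bound $ALG(\sigma')/OPT(\sigma')$ by the desired constant, treating $\sigma'$ itself as the input.

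The only subtlety worth checking is that stripping out non-triggering requests does not create a situation in which $\sigma'$ has a request that was previously non-triggering becoming a triggering request whose arrival time or deadline now falls inside an $ALG$ action on $\sigma$ that has no counterpart on $\sigma'$; but this is exactly what the proof of Lemma \ref{deadlines.lemma.22} establishes by case analysis on the three scenarios $e=e^*$, $e\neq e^*$ with deadline equal to $t$, and $e\neq e^*$ with later deadline. So the corollary then follows by a one-line appeal to the lemma together with the observation that from the analyst's perspective we may simply relabel $\sigma'$ as the input $\sigma$, allowing us to assume throughout the remainder of the proof that every request in $\sigma$ is triggering. I do not anticipate any real obstacle here, since the heavy lifting was already done in Lemma \ref{deadlines.lemma.22}.
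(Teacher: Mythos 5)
Your proposal is correct and matches the paper's intent exactly: the corollary is stated as an immediate consequence of Lemma \ref{deadlines.lemma.22}, which already establishes both that the ratio on $\sigma$ is dominated by the ratio on $\sigma'$ and that $ALG$ behaves identically on the two sequences (so the triggering requests of $\sigma$ remain triggering in $\sigma'$). The paper gives no separate proof for the corollary, so your one-line appeal to the lemma is precisely the intended argument.
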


% From now on, we assume that \(\sigma\) contains the trigger requests \(r_1,r_2,...r_m\) where for each \(k\in[m]=\{1,...,m\}\), the trigger request \(r_k\) is for an element \(e_k\). \(r_k\) arrives at time \(a_k\) and its deadline is \(d_k\) (where \(a_k\leq d_k\)). Due to \(r_k\) being a trigger request, \(d_k\) is the time when \(ALG\) serves \(r_k\). Due to Corollary \ref{deadlines.lemma.3}, we can assume that the sequence of requests \(\sigma\) contains only trigger requests, since the best strategy of the adversary is to define \(\sigma\) using only triggering requests.

% \danny{I think that the following explanation needs to be moved to where it is used. Also, \ref{deadlines.lemma.4} is more of a remark than an observation.}
% For convenience, we now declare that during the time horizon, when both \(ALG\) and \(OPT\) are serving \(\sigma\), in case both \(OPT\) and \(ALG\) perform access or swapping operations at the same time - we first let \(OPT\) perform its operations and only then \(ALG\) will perform its operations. A consequence from this declaration is that for each \(k\in[m]\), \(OPT\) serves the request \(r_k\) before \(ALG\) serves \(r_k\). That is because the request \(r_k\) is a trigger request (due to Corollary \ref{deadlines.lemma.3}), therefore \(ALG\) serves it at time \(d_k\). \(OPT\) serves \(r_k\) between times \(a_k\) and \(d_k\).
% \begin{observation}
% \label{deadlines.lemma.4}
% For each \(k\in[m]\), \(OPT\) serves the request \(r_k\) before \(ALG\) serves \(r_k\).
% \end{observation}

% The proof of the following lemma is be given in the appendix.

\noindent The following lemma is simple but will be very useful later. Recall that $k$ refers to the index of the $k$'th request in the input $\sigma$ and that $e_k$ denotes its requested element.

\begin{lemma}
\label{deadlines.lemma.5}
% For each \(k\in[m]\), the position of the element \(e_k\) in \(ALG\)'s list remained the same during the entire time interval between \(a_k\) and \(d_k\).

For every $k \in [m]$, the position of $e_k$ in $ALG$'s list remains unchanged throughout the time interval $[a_k, q_k)$. Hence $x_k$ denotes the location of $e_k$ in $ALG$s list during the time interval $[a_k,q_k)$.

\end{lemma}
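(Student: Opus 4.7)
The plan is to invoke Corollary \ref{deadlines.lemma.3} and work with the assumption that $\sigma$ contains only triggering requests. Under this assumption, the defining property of a triggering request forces $ALG$ to serve $r_k$ exactly at its deadline $q_k$: if $ALG$ served $r_k$ at some time $t$, then by hypothesis $r_k$ would be the trigger at time $t$, and the trigger's deadline equals the current time, so $t = q_k$. This is the one property I will use repeatedly.

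Next, I would observe that the position of $e_k$ in $ALG$'s list can change only via a Move-to-Front operation, and $ALG$ performs MTF only inside a triggering event. So it suffices to show that no triggering event occurring at a time $t \in [a_k, q_k)$ can change $e_k$'s position. I would fix such a hypothetical event, let $e'$ be its triggering element sitting at position $i$, and let $x$ denote $e_k$'s position in $ALG$'s list just before the event; then I would split into two cases according to whether $i < x$ or $i \geq x$.

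If $i < x$ (which in particular forces $e' \neq e_k$), then the MTF of $e'$ only shifts the elements at positions $1, \ldots, i - 1$ one step backwards and places $e'$ at the front; the elements at positions $i + 1, \ldots, n$, which include $e_k$, are untouched, so $e_k$ stays at position $x$. If instead $i \geq x$, then the access step of $ALG$ serves every active request sitting in the first $2i - 1$ positions, and since $i \geq x \geq 1$ we have $2i - 1 \geq x$, so the active request $r_k$ on $e_k$ is served at this time $t < q_k$; this contradicts the fact that $r_k$ is a triggering request, which by the opening remark must be served exactly at $q_k$.

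The main insight driving the argument is that Corollary \ref{deadlines.lemma.3} pins every request to be served precisely at its deadline, which is what makes the contradiction in the second case work; I do not foresee a real obstacle beyond being careful that the degenerate subcase $e' = e_k$ (which corresponds to $i = x$) is absorbed into the $i \geq x$ branch rather than treated separately, and that the arithmetic $2i - 1 \geq x$ holds in the whole range $i \geq x \geq 1$.
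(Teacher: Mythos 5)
Your proposal is correct and follows essentially the same route as the paper: any serve operation during $[a_k,q_k)$ that reaches $e_k$ or beyond would serve $r_k$ before its deadline, contradicting that $r_k$ is a triggering request, so all move-to-fronts in that interval are of elements strictly before $e_k$ and hence leave its position unchanged. Your explicit case split on the trigger's position $i$ versus $x$ is just a slightly more structured phrasing of the paper's one-paragraph argument.
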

\begin{proof}
During the time interval between \(a_k\) and \(q_k\), \(ALG\) did not access the element \(e_k\) and did not access any element located after \(e_k\) in its list, because otherwise it would be a contradiction to our assumption that \(r_k\) is a trigger request. Therefore, during the time interval mentioned above, \(ALG\) only accessed (served) elements which were before \(e_k\) in its list and performed move-to-fronts on them. But these move-to-fronts did not change the position of \(e_k\) in \(ALG\)'s list.
\end{proof}

% \begin{definition}
% \noindent\begin{itemize}
% \item Denote by \(x_k\) the position of the element \(e_k\) in \(ALG\)'s list during the time interval between \(a_k\) and \(d_k\), prior to \(ALG\) moving it to the front.
% \item Denote by \(y_k\) the position of the element \(e_k\) in \(OPT\)'s list when \(OPT\) accessed the element \(e_k\) in order to serve the request \(r_k\).
% \item Denote by \(z_k\) the position of the farthest element \(OPT\) accesses when it serves the request \(r_k\).
% \end{itemize}
% \end{definition}
% \footnote{Recall that \(r_k\)'s deadline (i.e. \(d_k\)) is the time when \(ALG\) serves \(r_k\), since \(r_k\) is a trigger request (Corollary \ref{deadlines.lemma.3})). Lemma \ref{deadlines.lemma.5} guarantees that this definition of \(x_k\) is valid.} For a pictorial example see Figure \ref{fig.alg_behavior_with_distances} and Figure \ref{fig.alg_behavior_actual_list}.

%\begin{definition}
%Let $x_k$ denote the position of $e_k$ in $ALG$'s list during the time interval $[a_k, d_k)$. 
%\end{definition}

%\noindent Note that due to Lemma \ref{deadlines.lemma.5} $x_k$ is indeed well defined.

\begin{definition}
Let $z_k$ denote the position of the farthest element $OPT$ accesses at the time it served $r_k$.
\end{definition}
Note that $z_k$ defines the cost $OPT$ pays for serving the set of requests that contain $r_k$. 

Recall that \(ALG\) serves all requests separately (since all requests are triggering requests - Corollary \ref{deadlines.lemma.3}). \(OPT\), on the other hand, may serve multiple requests simultaneously. Note that at the time \(OPT\) serves the request \(r_k\), it pays access cost of \(z_k\) (and it is guaranteed that \(z_k\geq y_k\)). The strict inequality \(z_k> y_k\) occurs in case \(OPT\) serves a request for an element located further than \(e_k\) in its list and by accessing this far element, \(OPT\) also accesses \(e_k\), thus serving \(r_k\).
\begin{lemma}
\label{deadlines.lemma.6}
The cost of \(ALG\) is bounded by
\[ALG(\sigma)\leq 3\cdot\sum_{k=1}^{m}x_k\]
\end{lemma}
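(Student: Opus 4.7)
The plan is to use the reduction from Corollary \ref{deadlines.lemma.3} that every request in $\sigma$ is a triggering request with respect to $ALG$. Under this assumption, every request $r_k$ is served by $ALG$ at its own deadline $q_k$ in a distinct event (not bundled as a non-trigger with an earlier trigger), so the total cost of $ALG$ decomposes cleanly as a sum over $k \in [m]$ of the access cost plus swap cost incurred by the $r_k$-triggered event.

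Next I would compute each of these two costs in terms of $x_k$. By Lemma \ref{deadlines.lemma.5}, the position of $e_k$ in $ALG$'s list is exactly $x_k$ throughout $[a_k,q_k)$ and in particular at time $q_k$, so when $r_k$ triggers, the triggering element is $e_k$ at position $x_k$. By the description of Algorithm \ref{deadlines.alg}, $ALG$ accesses the first $2x_k - 1$ elements, paying access cost $2x_k - 1$, and then performs move-to-front on $e_k$, which costs $x_k - 1$ swaps (bringing $e_k$ from position $x_k$ to position $1$).

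Finally I would sum these contributions over all $k$ to get
\[
ALG(\sigma) \;\leq\; \sum_{k=1}^{m}\bigl((2x_k-1) + (x_k-1)\bigr) \;=\; \sum_{k=1}^{m}(3x_k - 2) \;\leq\; 3\sum_{k=1}^{m} x_k,
\]
which is the desired bound. There is no real obstacle here; the only subtlety is justifying that the total cost of $ALG$ is genuinely a sum over the individual triggering events with no extra charges. This is immediate because Algorithm \ref{deadlines.alg} only acts upon deadlines, and under Corollary \ref{deadlines.lemma.3} every deadline corresponds to exactly one triggering request $r_k$, and all accesses and swaps the algorithm performs are those listed in the pseudocode for that event.
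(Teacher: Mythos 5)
Your proposal is correct and follows essentially the same argument as the paper: each triggering request $r_k$ costs at most $(2x_k-1)$ for the access plus $(x_k-1)$ for the move-to-front, and summing over $k$ gives the bound. The only cosmetic difference is that the paper explicitly notes the access cost may be strictly less than $2x_k-1$ when $n<2x_k-1$, which your inequality already absorbs.
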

\begin{proof}
For each \(k\in[m]\), \(e_k\) is located at position \(x_k\) at the time when \(ALG\) serves \(r_k\). \(ALG\) pays an access cost of at most \(2x_k-1\) when it serves the request \(r_k\) (Observe that \(ALG\) may pay an access cost of less than \(2x_k-1\) in case \(n<2x_k-1\)). \(ALG\) also pays a cost of \(x_k-1\) for performing move-to-front on \(e_k\). Therefore, \(ALG\) suffers a total cost of at most \((2x_k-1)+(x_k-1)\leq 3x_k\) for serving this request. If we sum for all the requests, we get that \(ALG(\sigma)\leq 3\cdot\sum_{k=1}^{m}x_k\).
\end{proof}
\begin{lemma}
\label{deadlines.lemma.17}
Let \(t\) be a time when the active requests (indices) in \(ALG\) are \(R=\{k_{1},k_{2},...,k_{d}\}\) where \(1\leq x_{k_{1}}< x_{k_{2}}< ... < x_{k_{d}}\leq n\). We have:
\begin{enumerate}
\item For each \(\ell\in[d-1]\), we have \(q_{k_{{\ell}}}\leq q_{k_{{\ell+1}}}\), i.e., \(ALG\) serves the request \(r_{k_{{\ell}}}\) before it serves  \(r_{k_{{\ell+1}}}\).
\item For each \(\ell\in[d-1]\) we have that \(2x_{k_{\ell}}\leq x_{k_{{\ell+1}}}\).
\item \(d\leq{\log n}+1\), i.e., at any time, there are at most \({\log n}+1\) active requests in \(ALG\).
\end{enumerate}
\end{lemma}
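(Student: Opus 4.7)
The plan is to lean on two structural facts throughout: by Corollary \ref{deadlines.lemma.3}, every request is triggering, so $ALG$ serves each $r_k$ at exactly time $q_k$ and at that moment accesses positions $1,\ldots,2x_k-1$ before moving $e_k$ to the front; and by Lemma \ref{deadlines.lemma.5}, the position $x_k$ of $e_k$ in $ALG$'s list is stable throughout $[a_k,q_k)$. These two facts together imply the "sweeping" principle: whenever $ALG$ fires on a triggering request, it also serves every active request whose element occupies any of the first $2x-1$ positions — and any such swept-up request cannot itself be a triggering request.

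For part 1, I would argue by contradiction. Suppose $q_{k_{\ell+1}}<q_{k_{\ell}}$. Then $ALG$ serves $r_{k_{\ell+1}}$ strictly before $r_{k_{\ell}}$, and at that instant it accesses positions $1,\ldots,2x_{k_{\ell+1}}-1$. By Lemma \ref{deadlines.lemma.5}, the position of $e_{k_{\ell}}$ is still $x_{k_{\ell}}<x_{k_{\ell+1}}\le 2x_{k_{\ell+1}}-1$ at that moment, so $r_{k_{\ell}}$ is swept up and served together with $r_{k_{\ell+1}}$. But then $r_{k_{\ell}}$ is served strictly before its own deadline $q_{k_{\ell}}$, contradicting Corollary \ref{deadlines.lemma.3} that asserts it is a triggering request.

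For part 2, assume for contradiction that $x_{k_{\ell+1}}\le 2x_{k_{\ell}}-1$. By part 1 we have $q_{k_{\ell}}\le q_{k_{\ell+1}}$, so when $ALG$ fires at time $q_{k_{\ell}}$ on $r_{k_{\ell}}$, Lemma \ref{deadlines.lemma.5} still places $e_{k_{\ell+1}}$ at position $x_{k_{\ell+1}}\le 2x_{k_{\ell}}-1$, and hence $r_{k_{\ell+1}}$ is swept up. If $q_{k_{\ell}}<q_{k_{\ell+1}}$ this serves $r_{k_{\ell+1}}$ early, violating its triggering status; the subtle case is $q_{k_{\ell}}=q_{k_{\ell+1}}$, which I would handle by invoking the tie-breaking in the definition of the triggering element — since $x_{k_{\ell+1}}>x_{k_{\ell}}$, the farther element $e_{k_{\ell+1}}$ (not $e_{k_{\ell}}$) is the triggering element at that common deadline, so $r_{k_{\ell}}$ itself would fail to be triggering. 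This tie-breaking check is the one step I expect to require the most care.

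Part 3 then follows at once by iterating part 2: $x_{k_d}\ge 2x_{k_{d-1}}\ge\cdots\ge 2^{d-1}x_{k_1}\ge 2^{d-1}$, and combining with $x_{k_d}\le n$ yields $d\le \log n+1$.
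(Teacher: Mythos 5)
Your proof is correct and follows essentially the same route as the paper's: both parts 1 and 2 are proved by contradiction using the fact that a firing on a trigger at position $x$ sweeps all active requests in positions $1,\ldots,2x-1$, so a swept request cannot be triggering, and part 3 is the same doubling induction. Your explicit treatment of the tie case $q_{k_{\ell}}=q_{k_{\ell+1}}$ via the "farthest element" tie-breaking is a detail the paper glosses over, and it is handled correctly.
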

\begin{proof}
If \(ALG\) serves \(r_{k_{{\ell+1}}}\) before it serves \(r_{k_{{\ell}}}\), it serves also \(r_{k_{{\ell}}}\) by passing through % (and thus accessing) 
\(e_{k_{{\ell}}}\) when it accesses \(e_{k_{{\ell+1}}}\), contradicting our assumption that \(r_{k_{{\ell}}}\) is a trigger request (Observation \ref{deadlines.lemma.3}).

If we have \(x_{k_{{\ell+1}}}\leq 2x_{k_{\ell}}-1\), then when \(ALG\) serves \(r_{k_{{\ell}}}\), it will also access \(e_{k_{{\ell+1}}}\), thus serving \(r_{k_{{\ell+1}}}\), contradicting our assumption that \(r_{k_{{\ell+1}}}\) is a trigger request (Observation \ref{deadlines.lemma.3}).

Using what we have already proved, a simple induction can be used in order to prove that for each \(\ell\in[|R^{OPT}_t|]\), we have that \(2^{\ell-1}x_{k^t_{1}}\leq x_{k^t_{\ell}}\). Therefore, we have that \(2^{|R^{OPT}_t|-1}x_{k^t_{1}}\leq x_{k^t_{|R^{OPT}_t|}}\). We also have that \(1\leq x_{k^t_{1}}\) and \(x_{k^t_{|R^{OPT}_t|}}\leq n\). These three inequalities yield to \(|R^{OPT}_t|\leq{\log n}+1\).
\end{proof}
\noindent Next we consider $OPT$'s solution.

\begin{definition}
Let \(T_{OPT}\) be the set of times when \(OPT\) served requests. We then define:
\noindent\begin{itemize}
\item For each time \(t\in T_{OPT}\), let \(R^{OPT}_t=\{k^t_1,k^t_2,...,k^t_{|R^{OPT}_t|}\}\) be the non-empty set of request indices that \(OPT\) served at time \(t\) where \(1\leq x_{k^t_1}< x_{k^t_2}< ...< x_{k^t_{|R^{OPT}_t|}}\leq n\).
\item Let \(J(t)=\underset{k\in {R^{OPT}_t}}{\arg\max}\{y_{k}\}\).
\end{itemize}
\end{definition}

For a pictorial example see Figure \ref{fig.alg_behavior_with_distances} below and Figure \ref{fig.alg_behavior_actual_list} at the end of this section. By definition for each \(t\in T_{OPT}\), we have\[1\leq y_{J(t)}=z_{k^t_1}=z_{k^t_2}=...=z_{k^t_{|R^{OPT}_t|-1}}=z_{k^t_{|R^{OPT}_t|}}\leq n\]
Observe that at time \(t\in{T_{OPT}}\), \(OPT\) serves the requests \(R^{OPT}_t\) together by accessing the \(y_{J(t)}\)'s element in its list. Therefore, \(OPT\) pays an access cost of \(y_{J(t)}\) at time \(t\).
\begin{observation}
\label{deadlines.lemma.7}
For any \(t\in{T_{OPT}}\), the total cost \(OPT\) pays for accessing elements at time \(t\) is \(y_{J(t)}\).
\end{observation}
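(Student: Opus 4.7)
The statement is essentially a packaging of the problem's cost rule together with the definitions of $J(t)$ and $z_k$ into a named, reusable observation, so my plan is to derive it directly rather than invoke any lemma. Concretely, I would argue as follows. At time $t \in T_{OPT}$, the set of request indices that OPT serves simultaneously is $R^{OPT}_t$. By definition, $J(t) = \arg\max_{k \in R^{OPT}_t} y_k$, so $e_{J(t)}$ sits at position $y_{J(t)}$ in OPT's list at time $t$ and is the farthest element among the elements whose requests OPT serves in this batch. The problem's cost model specifies that when an algorithm serves a set of requests simultaneously, it pays an access cost equal to the position of the farthest accessed element, and therefore OPT's access cost at time $t$ equals $y_{J(t)}$.

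To tighten this into the form used later, I would explicitly note the consistency with the $z_k$'s. Recall $z_k$ denotes the position of the farthest element OPT accesses at the moment it serves $r_k$. Since every $k \in R^{OPT}_t$ is served in the same batch at time $t$, the farthest accessed element is common to all of them, which gives the chain $z_{k^t_1} = z_{k^t_2} = \cdots = z_{k^t_{|R^{OPT}_t|}} = y_{J(t)}$ already displayed immediately before the observation. Combined with the previous paragraph, this yields the claimed access cost. It is also worth remarking that no swap or delay contributions enter the statement, since the observation concerns only the access cost incurred at the single time instant $t$, and by assumption accesses and swaps are instantaneous.

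There is no genuine obstacle here: every ingredient is a definitional fact, and the substantive work will lie in subsequent lemmas that exploit this identity to bound OPT's total access cost by $\sum_t y_{J(t)}$ and then relate this quantity to $\sum_k x_k$ via the potential argument. Accordingly, I would keep the proof to one or two sentences, essentially a pointer to the cost model and the definition of $J(t)$, and move on.
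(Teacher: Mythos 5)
Your proposal is correct and matches the paper's own justification, which likewise derives the claim directly from the cost model and the definition of \(J(t)\) (the paper simply notes that \(OPT\) serves \(R^{OPT}_t\) together by accessing the \(y_{J(t)}\)-th element, after recording the chain \(y_{J(t)}=z_{k^t_1}=\cdots=z_{k^t_{|R^{OPT}_t|}}\)). No gap; the observation is definitional in both treatments.
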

\begin{lemma}
\label{deadlines.lemma.8}
Let \(t\in{T_{OPT}}\). We have:
\begin{enumerate}
\item For each \(\ell\in[|R^{OPT}_t|-1]\) we have that \(2x_{k^t_\ell}\leq x_{k^t_{\ell+1}}\).
\item \(|R^{OPT}_t|\leq{\log n}+1\), i.e, \(OPT\) serves at most \({\log n}+1\) triggers at the same time.
\end{enumerate}
\end{lemma}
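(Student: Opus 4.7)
The plan is to derive both parts of Lemma \ref{deadlines.lemma.8} directly from Lemma \ref{deadlines.lemma.17}, by establishing that $R^{OPT}_t$ is a subset of the requests that are active in $ALG$ at time $t$. Once this embedding is set up, both conclusions essentially transfer from the $ALG$-side statement to the $OPT$-side statement.

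First I would argue that for every $k \in R^{OPT}_t$, the request $r_k$ is active in $ALG$ at time $t$. By Corollary \ref{deadlines.lemma.3} we may assume every request is a triggering request, so $ALG$ serves $r_k$ exactly at time $q_k$. Since $OPT$ serves $r_k$ at time $t \leq q_k$, the request $r_k$ has not been served by $ALG$ strictly before $t$ (in the boundary case $t=q_k$ the request is active in $ALG$ immediately prior to $t$, and this is what is needed for $x_k$ to be well-defined as in Lemma \ref{deadlines.lemma.5}). Thus $R^{OPT}_t$ embeds into the indexed set $\{k_1,\ldots,k_d\}$ of Lemma \ref{deadlines.lemma.17}, say via indices $i_1 < i_2 < \cdots < i_{|R^{OPT}_t|}$ with $x_{k^t_j} = x_{k_{i_j}}$.

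For part $1$, I would use Lemma \ref{deadlines.lemma.17}(2), which gives $2x_{k_\ell} \leq x_{k_{\ell+1}}$ for each pair of consecutive active requests in $ALG$. Since the sequence $x_{k_1} < x_{k_2} < \cdots < x_{k_d}$ is strictly increasing, for any $j$ we get
\[
2 x_{k^t_j} = 2 x_{k_{i_j}} \leq x_{k_{i_j+1}} \leq x_{k_{i_{j+1}}} = x_{k^t_{j+1}},
\]
which is exactly part $1$.

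For part $2$, I would iterate part $1$: by induction, $2^{|R^{OPT}_t|-1} x_{k^t_1} \leq x_{k^t_{|R^{OPT}_t|}}$. Combined with $x_{k^t_1} \geq 1$ and $x_{k^t_{|R^{OPT}_t|}} \leq n$, this yields $|R^{OPT}_t| \leq \log n + 1$. (Alternatively, this follows immediately from the fact that $R^{OPT}_t$ is a subset of the active requests in $ALG$ at time $t$ together with Lemma \ref{deadlines.lemma.17}(3).) The main conceptual step is really just the embedding argument in the first paragraph; there is no genuine obstacle, as the key spacing lemma has already been established for active requests in $ALG$.
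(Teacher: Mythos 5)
Your proposal is correct and follows essentially the same route as the paper: the paper also observes that every request in \(R^{OPT}_t\) has arrived by time \(t\) and, by Observation \ref{deadlines.lemma.4}, has not yet been served by \(ALG\), so \(R^{OPT}_t\) is a subset of the active requests in \(ALG\) at time \(t\), and both parts then follow from Lemma \ref{deadlines.lemma.17}. Your write-up is merely more explicit about the embedding and the chain of inequalities that transfers the doubling property to a subset, which the paper leaves implicit.
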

\begin{proof}
Since \(OPT\) served the requests \(R^{OPT}_t\) at time \(t\), all these requests arrived at time \(t\) or before it. Therefore, from Observation \ref{deadlines.lemma.4} we get that all the requests \(R^{OPT}_t\) were active in \(ALG\) at time \(t\). Therefore, we get that this lemma holds due to Lemma \ref{deadlines.lemma.17}. Note that there may be additional requests which were active in \(ALG\) at time \(t\) but \(OPT\) did not serve at time \(t\) (meaning they were not in \(R^{OPT}_t\)), but this does not contradict the conclusion.
\end{proof}
The following lemma allows us to consider from now on only algorithms such that if they serve requests at time \(t\), then at least one of these requests has a deadline at \(t\). In particular, we can assume that \(OPT\) has this property. Observe that \(ALG\) also has this property.
\begin{lemma}
\label{deadlines.lemma.1}
For every algorithm \(A\), there exists an algorithm \(B\) such that for each sequence of request \(\sigma\) we are guaranteed that:
\begin{enumerate}
\item \(B\) only serves requests upon some deadline.
\item \(B(\sigma)\leq A(\sigma)\).
\end{enumerate}
\end{lemma}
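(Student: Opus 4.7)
The plan is to construct $B$'s schedule on each input $\sigma$ directly from $A$'s schedule on $\sigma$, by shifting each of $A$'s accesses forward in time to a deadline while otherwise preserving the sequence of actions. Concretely, for each access of $A$ at time $t$ serving a set of requests $D$, let $\tau = \min_{k \in D} q_k$; since $A$ must serve every request in $D$ by its deadline we have $\tau \geq t$, and by construction $\tau$ is the deadline of some request in $D$, so an access performed at time $\tau$ serving $D$ satisfies the ``only serves requests upon some deadline'' condition. The candidate definition of $B$ is therefore to mimic $A$ action by action, but push each access of $A$ forward from its original time $t$ to the corresponding deadline time $\tau$.

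The observation driving the proof is that the total cost of a schedule depends only on its ordered sequence of actions, not on the wall-clock times at which they are executed, as long as arrival and deadline feasibility constraints are respected. Each swap contributes cost $1$ regardless of when it happens, and each access cost is the largest position of elements in its serve set in the list state immediately preceding it, which is determined entirely by the sequence of previously executed swaps (since accesses do not change the list). Consequently, if we can preserve $A$'s exact action sequence while shifting only the access times forward to deadlines, the resulting schedule $B$ has exactly the same cost as $A$, giving $B(\sigma) = A(\sigma)$ and in particular $B(\sigma) \leq A(\sigma)$.

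The main obstacle is temporal feasibility of the shifted schedule. If $A$'s access at time $t$ is pushed forward to $\tau$, other actions that were originally performed in the interval $(t,\tau)$, in particular other accesses with earlier deadlines, may now end up before the shifted access in wall-clock time, breaking the sequence ordering. My plan for handling this is to process $A$'s non-deadline accesses one at a time in reverse chronological order, and whenever two consecutive accesses $a_i$ and $a_{i+1}$ would end up with conflicting target times $\tau_{i+1} < \tau_i$, to coalesce them into one combined access performed at the earlier deadline $\tau_{i+1}$ serving $D_i \cup D_{i+1}$. The cost of this combined access is the maximum position of any element in $D_i \cup D_{i+1}$, which is at most the sum of the two original access costs, so the coalescing step never increases the total cost. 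Iterating the shift-and-coalesce procedure over all non-deadline accesses yields a schedule $B$ in which every access occurs at a deadline and whose total cost is at most $A(\sigma)$, as required.
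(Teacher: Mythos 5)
Your proposal is correct and takes essentially the same route as the paper: the paper's $B$ likewise delays each of $A$'s accesses and batches all currently-delayed requests into a single access performed at the first deadline reached among them, relying on exactly your two facts that merging accesses cannot increase total access cost and that the swaps are unchanged. Your reverse-chronological shift-and-coalesce is just an offline restatement of that forward simulation, and it even shares the same glossed-over point as the paper (the bound on the merged access cost tacitly ignores that swaps of $A$ occurring between the original and shifted access times may change element positions).
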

\begin{proof}
Given the algorithm \(A\), we define the algorithm \(B\) as follows. At a time \(t\) when \(A\) serves requests with deadlines greater than \(t\), \(B\) is defined not to serve any request. Instead, \(B\) waits and delays the serving of these active requests. \(B\) keeps delaying the serving of requests that \(A\) is serving until one of the delayed requests reaches its deadline - only then \(B\) serves all the active requests that have been delayed so far. \(B\) also changes its list so it will match \(A\)'s current list.

Clearly, \(B\) is a valid algorithm, meaning that \(B\) serves each request during the time interval between the request's arrival and its deadline. This follows from the definition of \(B\) and the assumption that \(A\) is a valid algorithm. We have that the total cost \(B\) pays for accessing elements is not bigger than the total cost \(A\) pays for accessing elements because by serving more requests together - the total access cost can only decrease. We also have that the number of paid swaps \(A\) does is the same as the number of swaps \(B\) does. Therefore we have that \(B(\sigma)\leq A(\sigma)\). Moreover, due to the definition of \(B\), it is guaranteed that at each time \(B\) serves requests - at least one of them reached its deadline.
\end{proof}
\begin{figure}[t]
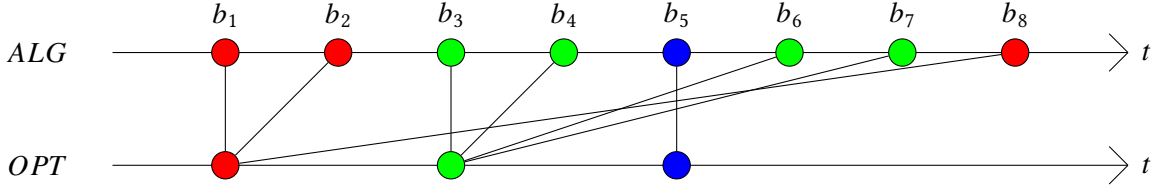

\centering
\ctikzfig{figs/deadlineFig}

\caption{An example of \(ALG\)'s behavior compared to \(OPT\)'s behavior on a sequence \(\sigma\) during the time horizon. The horizontal lines are the time horizon of \(ALG\) and \(OPT\) running on \(\sigma\). For each request (trigger) in \(\sigma\) - there is a dot in the top horizontal line at the time \(ALG\) served this request which is its deadline  (Corollary \ref{deadlines.lemma.3}). Therefore, the requests (dots) in the top horizontal line are ordered according to their deadlines (which may be different from their arrival times order). The dots in the bottom line are at times when \(OPT\) served requests. For each request (trigger) in \(\sigma\) there is a non-horizontal segment between the time when \(ALG\) served this request and the time when \(OPT\) served this request. \(OPT\) served first the 3 red dot requests (together), then the 4 gre`en dot requests (together) and then the blue dot request. By Lemma \ref{deadlines.lemma.1}, each  monochromatic requests are served by \(OPT\) at their minimum deadline. Denote by \(b_k\) the position of the requested element in \(ALG\)'s list. By Lemma \ref{deadlines.lemma.8} we have \(4b_1\leq 2b_2\leq b_8\) and \(8b_3\leq 4b_4 \leq 2b_6 \leq b_7\). By Lemma \ref{deadlines.lemma.17} we have that \(8b_5\leq 4b_6 \leq 2b_7 \leq b_8\) since the 4  rightmost dots requests in the top horizontal line were active in \(ALG\) at the deadline of the blue request.}
\label{fig.alg_behavior_with_distances}
\end{figure}

For convenience, we assume that when both \(ALG\) and \(OPT\) are serving \(\sigma\), in case both \(OPT\) and \(ALG\) perform access or swapping operations at the same time - we first let \(OPT\) perform its operations and only then \(ALG\) will perform its operations. 

On the other hand, for elements which are not served at the same time by $OPT$ and $ALG$, by combining the fact that $ALG$ serves requests at their deadline (see Corollary \ref{deadlines.lemma.3}) with the fact  that $OPT$ must serve requests before the deadline, we get that again $OPT$ serves the request before $ALG$. Combining the two cases yields Observation \ref{deadlines.lemma.4}.

% Therefore, for each \(k\in[m]\), \(OPT\) serves the request \(r_k\) before \(ALG\) serves \(r_k\). 

% That is because the request \(r_k\) is a trigger request (due to Corollary \ref{deadlines.lemma.3}), therefore \(ALG\) serves it at time \(q_k\). \(OPT\) serves \(r_k\) between times \(a_k\) and \(d_k\).
\begin{observation}
\label{deadlines.lemma.4}
For each \(k\in[m]\), \(OPT\) serves the request \(r_k\) before \(ALG\) serves \(r_k\).
\end{observation}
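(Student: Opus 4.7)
The plan is a short case analysis on the time at which $OPT$ serves $r_k$ relative to its deadline $q_k$. By Corollary \ref{deadlines.lemma.3}, we may assume every request in $\sigma$ is a triggering request with respect to $ALG$; from the definition of Algorithm \ref{deadlines.alg}, a triggering request is served by $ALG$ at the exact moment its deadline is reached, so $ALG$ serves $r_k$ precisely at time $q_k$. Since $OPT$ is a valid algorithm it must serve $r_k$ at some time $t^{OPT}_k \in [a_k, q_k]$.

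The argument then splits according to whether $t^{OPT}_k < q_k$ or $t^{OPT}_k = q_k$. In the strict case $t^{OPT}_k < q_k$, the claim is immediate: $OPT$ serves $r_k$ at a strictly earlier time than $ALG$ does. In the boundary case $t^{OPT}_k = q_k$, both $OPT$ and $ALG$ perform operations at the same instant, so I would invoke the tiebreaking convention stated in the paragraph immediately preceding the observation, namely that when $OPT$ and $ALG$ act simultaneously we let $OPT$ act first; under this convention $OPT$ still serves $r_k$ before $ALG$ does. Combining the two cases yields the observation.

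The only real subtlety, and therefore the only place where care is required, is the boundary case $t^{OPT}_k = q_k$: without the tiebreaking convention the observation would fail there, whereas with the convention it holds trivially. No new machinery beyond the algorithm's definition, the validity of $OPT$, and the assumption that all requests are triggering is needed.
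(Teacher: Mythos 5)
Your proof is correct and follows essentially the same route as the paper: the paper also combines the tiebreaking convention (when $OPT$ and $ALG$ act at the same instant, $OPT$ acts first) with the fact that $ALG$ serves each (triggering) request exactly at its deadline while $OPT$ must serve it no later than that deadline. Your explicit case split on $t^{OPT}_k < q_k$ versus $t^{OPT}_k = q_k$ is just a slightly more formal presentation of the same argument.
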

\begin{definition}
We define the set of \textbf{events} \({P}\) which contains the following \(3\) types of events:
\begin{enumerate}
\item \(ALG\) serves the request \(r_k\) at time \(q_k\).
\item \(OPT\) serves the requests \(R^{OPT}_t\) at time \(t\).
\item \(OPT\) swaps two elements.
\end{enumerate}
\end{definition}
Recall that the potential function $\Phi$ is defined in Section \ref{section_potential_function_deadlines} as follows:
\[\Phi(t)=4\cdot \phi(t)+\sum_{k\in\lambda(t)}\psi(x_k,y_k)+4\cdot\sum_{k\in\lambda(t)}\mu_k(t)\]
where the terms $\phi$, $\lambda$, $\psi$ and $\mu_k$ are also defined in that section.
\begin{definition}
For each event \(p\in P\), we define:
\begin{itemize}
\item \(ALG^p\) (\(OPT^p\)) to be the cost \(ALG\) (\(OPT\)) pays during \(p\).
\item For any parameter $z$, $\Delta z^p$ to be the value of $z$ after $p$ minus the value of $z$ before $p$.
\end{itemize}
\end{definition}
Clearly, we have \(ALG(\sigma)=\sum_{p\in P}ALG^p\) and
\(OPT(\sigma)=\sum_{p\in P}OPT^p\). Observe that \(\Phi\) starts with \(0\) (since at the beginning, the lists of \(ALG\) and \(OPT\) are identical) and is always non-negative. Therefore, if we prove that for each event \(p\in P\), we have
\[ALG^p+\Delta\Phi^p\leq 24\cdot OPT^p\]
then, by summing it up for over all the events, we will be able to prove Theorem \ref{deadlines.thm.1}. Note that we do not care about the actual value \(\Phi(t)\) by itself, for any time \(t\). We will only measure the change of \(\Phi\) as a result of each type of event in order to prove that the inequality mentioned above indeed holds. The three types of events that we will discuss are:
\begin{enumerate}
\item The event where \(ALG\) serves the request \(r_k\) at time \(q_k\) (event type 1) is analyzed in Lemma \ref{deadlines.lemma.10}.
\item The event where \(OPT\) serves the requests \(R^{OPT}_t\) at time \(t\) (event type 2) is analyzed in Lemma \ref{deadlines.lemma.14}.
\item The event where \(OPT\) swaps two elements (event type 3) is analyzed in Lemma \ref{deadlines.lemma.16}.
\end{enumerate}
We begin by analyzing the event where \(ALG\) served a request.
\begin{lemma}
\label{deadlines.lemma.10}
Let \(p\in P\) be the event where \(ALG\) served the request \(r_k\) (where \(k\in[m]\)) at time \(q_k\). We have
\[ALG^p+\Delta\Phi^p\leq 0\ (=OPT^p)\]
\end{lemma}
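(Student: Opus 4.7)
The plan is to carefully account for every quantity that changes when $ALG$ processes the trigger event for $r_k$ at time $q_k$ and then show that the net effect is non-positive. First I would pin down $ALG^p$ precisely: Lemma~\ref{deadlines.lemma.17} ensures that at time $q_k$ the only active triggers lying in positions $[1, 2x_k-1]$ are at positions $\leq x_k$ (any farther trigger must sit at position $\geq 2x_k$), so the access cost equals the position of the farthest served request, namely $x_k$. Combined with the $x_k - 1$ swaps of the move-to-front of $e_k$, we obtain $ALG^p = 2x_k - 1$, or the loose form $ALG^p \leq 3x_k$ from Lemma~\ref{deadlines.lemma.6}.

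Next I would compute $\Delta\Phi^p$. The move-to-front of $e_k$ changes the inversion count by $\Delta\phi = 2b - x_k + 1$, where $b$ is the number of elements currently in $ALG$-positions $1, \ldots, x_k - 1$ that lie before $e_k$ in $OPT$. Using Lemma~\ref{deadlines.lemma.5} I would argue that the set of elements occupying $ALG$-positions $1, \ldots, x_k - 1$ is invariant on $[a_k, q_k]$, so $b$ is controlled by its value $b_0 \leq \min(x_k - 1, y_k - 1)$ at the moment $OPT$ served $r_k$ plus the number of $OPT$-swaps that pushed an element past $e_k$, yielding $b \leq \min(x_k - 1,\; y_k - 1 + \mu_k(q_k))$. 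In addition, every request $k'$ served at this event leaves $\lambda$ and removes $\psi(x_{k'}, y_{k'}) + 4\mu_{k'}(q_k^-) \geq 0$ from $\Phi$; the dominant contribution is from $r_k$ itself, and I would freely discard the non-negative contributions from the other served requests.

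The core of the proof is then a three-case analysis matching the piecewise definition of $\psi$. When $x_k \leq y_k$, the trivial bound $b \leq x_k - 1$ gives $4\Delta\phi \leq 4(x_k - 1)$, and the gift $\psi(x_k, y_k) = 7x_k$ alone dominates $ALG^p + 4\Delta\phi$. When $y_k < x_k \leq 8y_k$ so that $\psi(x_k, y_k) = 8y_k - x_k$, I would use the tighter bound $b \leq y_k - 1 + \mu_k$: the $+8\mu_k$ that this introduces into $4\Delta\phi$ is to be cancelled against the $-4\mu_k$ released when $r_k$ leaves $\lambda$, after which the inequality reduces to showing that $-2x_k + 8y_k$ is dominated by $\psi(x_k, y_k)$, which holds by construction. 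Finally, when $x_k > 8y_k$ so that $\psi = 0$, the same bound on $b$ forces $4\Delta\phi$ to be sufficiently negative that $ALG^p + 4\Delta\phi$ is already non-positive, with no $\psi$-gift needed.

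The main obstacle, and the reason the potential requires the $\mu_k$ component, is the regime in which $OPT$ has moved $e_k$ significantly further into its list after serving $r_k$. In that regime the move-to-front performed by $ALG$ reintroduces many of the inversions that $OPT$'s earlier swaps had cancelled, producing a large positive $4\Delta\phi$ that threatens to overwhelm both $\psi$ and any $\phi$-savings. The purpose of the $4\mu_k$ counter---paid into $\Phi$ at exactly those swap events so that each individual swap satisfied its per-event inequality---is to be released precisely now, when $r_k$ exits $\lambda$, thereby neutralising the re-introduced $\phi$ mass. Aligning the constants so that this release, together with $\psi$, fully absorbs $ALG^p$ is what dictates the factors $4, 7, 8$ appearing in $\Phi$ and $\psi$, and is the delicate algebraic point of the argument.
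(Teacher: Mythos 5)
Your overall route is the same as the paper's: bound $ALG^p\leq 3x_k$, compute the effect of the move-to-front on $\phi$ through the number $b$ of elements among $ALG$'s first $x_k-1$ positions that precede $e_k$ in $OPT$'s list (so $\Delta\phi^p=2b-x_k+1$), split into the three regimes of $\psi$, and use the $\mu_k$ term to absorb $OPT$'s backward movement of $e_k$. The first case and the skeleton of the other two match Lemma \ref{deadlines.lemma.18} and the surrounding argument, and discarding the nonnegative $\psi$ and $\mu$ contributions of the other requests served at this event is also what the paper implicitly does.

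The genuine gap is in the $\mu_k$ step. From $b\leq y_k-1+\mu_k$ you get $4\Delta\phi^p\leq 8y_k-4x_k-4+8\mu_k$: each backward swap of $e_k$ counted by $\mu_k$ can flip the contribution of the swapped element to $\Delta\phi^p$ from $-1$ to $+1$, a change of $2$, which after multiplying by the coefficient $4$ costs up to $+8$ per swap. The potential only releases $4\mu_k$ when $r_k$ leaves $\lambda$, so the claimed cancellation leaves an uncancelled residual of $+4\mu_k$; in the middle regime your bound becomes $-4+4\mu_k$, which is positive already for $\mu_k\geq 2$, and the same residual appears in the $x_k>8y_k$ case, where "$4\Delta\phi$ sufficiently negative" fails once $\mu_k$ is large. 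The paper closes this step differently: it first proves the inequality under the assumption $\mu_k=0$ (Lemma \ref{deadlines.lemma.18}) and then perturbs swap by swap, asserting that each backward swap increases $\Delta\phi^p$ by at most $1$ (hence $4\Delta\phi^p$ by at most $4$), which is exactly offset by the increment of $4$ in $4\mu_k$. To repair your write-up you must either justify that per-swap increase of only $1$ --- note that your own, more direct pair count gives $2$ whenever the swapped element lies among $ALG$'s first $x_k-1$ positions, so this is precisely the point of tension with the paper's accounting --- or accept that with your bound the coefficient of $\mu_k$ in $\Phi$ would need to be $8$ rather than $4$ for the inequality to close.
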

In order to prove Lemma \ref{deadlines.lemma.10}, we separate the movement of $ALG$ versus the movement of $OPT$. The final proof is the superposition of the two movements. Firstly we assume that $OPT$ did not increase the position of $e_k$ in its list ever since it served the request $r_k$ until $ALG$ served it, then we remove this assumption.
\begin{lemma}
\label{deadlines.lemma.18}
Let \(p\in P\) be the event where \(ALG\) served the request \(r_k\) (where \(k\in[m]\)) at time \(q_k\). Assume that ever since \(OPT\) served \(r_k\) until \(ALG\) served \(r_k\), \(OPT\) did not increase the position of \(e_k\) in its list. We have that
\[3x_k+4\cdot\Delta\phi^p-\psi(x_k,y_k)\leq 0\]
\end{lemma}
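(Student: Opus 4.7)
The plan is to isolate the single list-changing action performed by $ALG$ in this event, namely the move-to-front of $e_k$ (the access of the first $2x_k-1$ elements does not reorder the list), and compute $\Delta\phi^p$ exactly in terms of a single quantity that I can bound via the assumption on $OPT$.

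First I would set up notation: let $y_k'$ denote the position of $e_k$ in $OPT$'s list at time $q_k$. By Observation \ref{deadlines.lemma.4}, $OPT$ served $r_k$ strictly before this event, and the hypothesis that $OPT$ never increased $e_k$'s position afterward gives $y_k' \le y_k$. Next, let $a$ be the number of elements that lie before $e_k$ in $ALG$'s list (there are $x_k-1$ such elements) and also lie before $e_k$ in $OPT$'s list at time $q_k$. For each of the $x_k-1$ elements before $e_k$ in $ALG$, moving $e_k$ to position $1$ flips its position relative to $e_k$ in $ALG$'s list, while its position in $OPT$'s list is unchanged. A quick case check on whether each such element sits before or after $e_k$ in $OPT$ shows that the move-to-front contributes $+1$ to $\phi$ for each element counted by $a$ and $-1$ for each of the remaining $x_k-1-a$ elements. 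Elements originally after $e_k$ in $ALG$ stay after $e_k$ in $ALG$, so they contribute nothing. Therefore $\Delta\phi^p = 2a - x_k + 1$, and since $a \le \min\{x_k-1,\,y_k'-1\} \le \min\{x_k-1,\,y_k-1\}$, I obtain
\[
\Delta\phi^p \;\le\; \min\{\,x_k - 1,\; 2y_k - x_k - 1\,\}.
\]

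With this bound in hand, I split into the three cases defining $\psi(x_k,y_k)$. If $1 \le x_k \le y_k$, use $\Delta\phi^p \le x_k - 1$ and $\psi(x_k,y_k)=7x_k$, giving $3x_k + 4(x_k-1) - 7x_k = -4 \le 0$. If $y_k \le x_k \le 8y_k$, use $\Delta\phi^p \le 2y_k - x_k - 1$ and $\psi(x_k,y_k)=8y_k - x_k$, giving $3x_k + 4(2y_k - x_k - 1) - (8y_k - x_k) = -4 \le 0$. If $x_k > 8y_k$, use the same bound on $\Delta\phi^p$ and $\psi(x_k,y_k)=0$, giving $3x_k + 4(2y_k - x_k - 1) = -x_k + 8y_k - 4 < -4 \le 0$.

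The proof has essentially no obstacle beyond correctly identifying the $-\psi(x_k,y_k)$ term as arising from $k$ leaving $\lambda(t)$ at this event, and the fact that the assumption on $OPT$ pins down $y_k' \le y_k$ so that the change in $\phi$ can be bounded using $y_k$ rather than the current (possibly smaller) position. The real work is the count $\Delta\phi^p = 2a - x_k + 1$ with the correct $a \le \min\{x_k-1, y_k-1\}$, after which the three cases fall out because $\psi$ was defined precisely to make the inequality tight (up to an additive slack of $4$) in each regime.
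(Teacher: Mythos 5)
Your proposal is correct and follows essentially the same route as the paper: it isolates the move-to-front as the only list change, bounds $\Delta\phi^p$ by counting which of the $x_k-1$ elements ahead of $e_k$ in $ALG$'s list gain or lose an inversion (using the hypothesis to ensure $e_k$'s position in $OPT$'s list at time $q_k$ is at most $y_k$), and then checks the three regimes of $\psi$. Your exact identity $\Delta\phi^p = 2a - x_k + 1$ with $a \le \min\{x_k-1, y_k-1\}$ is a slightly sharper packaging of the paper's two one-sided estimates, but the argument is the same.
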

\begin{proof}
The assumption means that at time \(q_k\), the position of \(e_k\) in \(OPT\)'s list is at most \(y_k\) (it may be even lower, due to movements which may be performed by \(OPT\) to \(e_k\) towards the beginning of its list, after \(OPT\) served \(r_k\)). Recall that after \(ALG\) serves \(r_k\), the position of \(e_k\) in \(ALG\)'s list changes from \(x_k\) to $1$, as a result of the move-to-front \(ALG\) performs on \(e_k\). In order to prove the required inequality, we consider the following cases, depending on the value of \(y_k\):
\begin{itemize}
\item The case \(1\leq x_k\leq y_k\). We have \(\psi(x_k,y_k)=7x_k\).

Therefore, observe that it is sufficient to prove that \(\Delta\phi^p\leq x_k\).

This is indeed the case, because moving \(e_k\) from position \(x_k\) to position \(1\) in \(ALG\)'s list required \(ALG\) to perform \(x_k-1\) swaps, each one of those caused \(\phi\) to either increase by \(1\) or decrease by \(1\). Therefore, all these \(x_k-1\) swaps cause \(\phi\) to increase by at most \(x_k-1\).
\item The case \(y_k\leq x_k\leq n\).

On one hand, there are at least \(x_k-y_k\) elements which were before \(e_k\) in \(ALG\)'s list and after \(e_k\) in \(OPT\)'s list before the move-to-front \(ALG\) performed on \(e_k\), but they will be after \(e_k\) in \(ALG\)'s list after this move-to-front. This causes \(\phi\) to decrease by at least \(x_k-y_k\). On the other hand, there are at most \(y_k-1\) elements which were before \(e_k\) in both \(OPT\)'s list and \(ALG\)'s list before the move-to-front \(ALG\) performed on \(e_k\), but they will be after \(e_k\) in \(ALG\)'s list after this move-to-front. This causes \(\phi\) to increase by at most \(y_k-1\). Therefore, we have that \[\Delta\phi^p\leq -(x_k-y_k)+(y_k-1)=2y_k-x_k-1\]
Hence,
\begin{align*}
3x_k+4\cdot\Delta\phi^p
& \leq 3x_k+4\cdot(2y_k-x_k-1) \leq 8y_k-x_k
\end{align*}
Now we distinguish between these two following cases, depending on the value of \(y_k\):
\begin{itemize}
\item The case \(y_k\leq x_k \leq 8y_k\). We have that \(\psi(x_k,y_k)=8y_k-x_k\).
Hence
\begin{align*}
3x_k+4\cdot\Delta\phi^p-\psi(x_k,y_k) \leq 8y_k-x_k-\psi(x_k,y_k)  = 8y_k-x_k-(8y_k-x_k) = 0
\end{align*}
\item The case \(8y_k\leq x_k \leq n\). We have that \(\psi(x_k,y_k)=0\).
Hence
\begin{align*}
3x_k+4\cdot\Delta\phi^p-\psi(x_k,y_k)
 \leq 8y_k-x_k-\psi(x_k,y_k) = 8y_k-x_k \leq 8y_k-8y_k = 0
\end{align*}
\end{itemize}
\end{itemize}
\end{proof}
Now we are ready to complete the proof of Lemma \ref{deadlines.lemma.10}.
\begin{proof}[Proof of Lemma \ref{deadlines.lemma.10}]
Since $OPT$ has already served this request, we have \(OPT^p=0\). As explained in the proof of Lemma \ref{deadlines.lemma.6}, we have \(ALG^p\leq 3x_k\). Therefore, we are left with the task to prove that
\[3x_k+\Delta\Phi^p\leq 0\]
Observe that \(\psi(x_k,y_k)\) and \(\mu_k(t)\) are dropped (and thus are subtracted) from \(\Phi\) as a result of \(ALG\) serving \(r_k\). Therefore, we are left with the task to prove that
\[3x_k+4\cdot\Delta\phi^p-\psi(x_k,y_k)-4\cdot\mu_k(t)\leq 0\]
We first assume that ever since \(OPT\) served \(r_k\) until \(ALG\) served \(r_k\), \(OPT\) did not increase the position of \(e_k\) in its list (later we remove this assumption). This assumption means that \(\mu_k(t)=0\).
Therefore, due to Lemma \ref{deadlines.lemma.18}, we have that the above inequality holds. We are left with the task to prove that the above inequality continues to hold even without this assumption. 

Assume that ever since \(OPT\) served \(r_k\) until \(ALG\) served \(r_k\), \(OPT\) performed a swap between \(e_k\) and another element where \(e_k\)'s position has been increased as a result of this swap. We shall prove that the above inequality continues to hold nonetheless.

On one hand, this swap causes either an increase of \(1\)  or a decrease of \(1\) to \(\Delta\phi^p\). Therefore,
the left term of the inequality will be increased by at most \(4\). On the other hand, the left term of the inequality will certainty be decreased by \(4\) as a result of this swap, because \(\mu_k\) will certainty be increased by \(1\). To conclude, a decrease of at least \(4-4=0\) will be applied to the left term of the inequality, thus the inequality will continue to hold after this swap as well.

By using the argument above for each swap of the type mentioned above, we get that the above inequality continues to hold even without the assumption that \(OPT\) did not increase the position of \(e_k\) in its list since it served \(r_k\) until \(ALG\) served \(r_k\), thus the lemma has been proven.
\end{proof}
Now that we analyzed the event when \(ALG\) serves a request, the next target is to analyze the event where \(OPT\) serves multiple request together.
\begin{figure}[H]
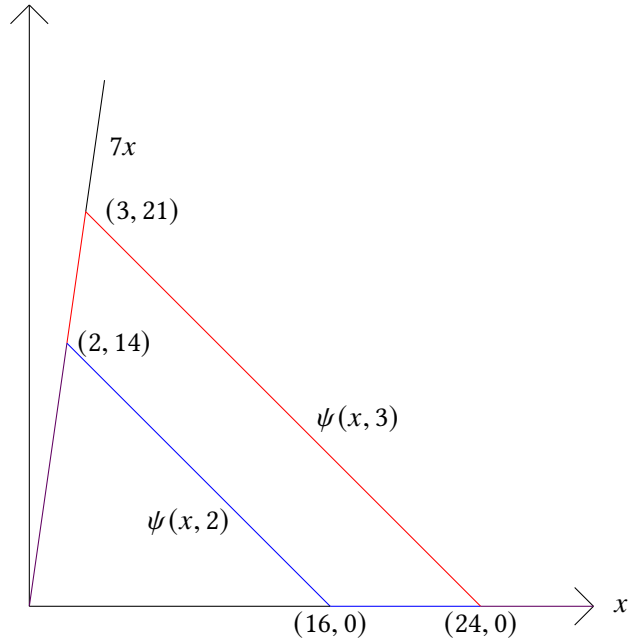

\centering
\ctikzfig{figs/psiIllustrateHelpLemma}
\caption{The graphs of \(\psi(x,2)\) (in blue and purple), \(\psi(x,3)\) (in red and purple) and the function \(7x\).}
\label{fig.illustrate_observation}
\end{figure}
The following observation contains useful properties of \(\psi\) that will be used later on. The reader may prove them algebraically. Alternatively, he can look at Figure \ref{fig.illustrate_observation}, which illustrates these claims and convince himself that they indeed hold.
\begin{observation}
\label{deadlines.lemma.15}
For each \(x,x',y,y'\in[1,\infty)\) such that \(x\leq x'\) and \(y\leq y'\), the function \(\psi\) satisfies the following claims:
\begin{enumerate}
\item \(0\leq\psi(x,y)\leq 7x\).
\item If \(y\leq x\leq x'\) then \(\psi(x',y)\leq\psi(x,y)\).
\item \(\psi(x,y)\leq\psi(x,y')\).
\end{enumerate}
\end{observation}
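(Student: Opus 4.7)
The plan is to prove each of the three properties by a straightforward case analysis on the piecewise definition of $\psi$, after noting that the piecewise formula is continuous at both breakpoints: at $x=y$ we have $7x = 8y-x$ (both equal $7y$), and at $x=8y$ we have $8y-x = 0$. Thus no care is needed when a variable lands exactly on a boundary.

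For property (1), in the first piece we have $\psi(x,y) = 7x$, which attains the upper bound and is non-negative since $x \geq 1$. In the second piece $\psi(x,y) = 8y - x$; non-negativity follows from $x \leq 8y$, and the upper bound $8y - x \leq 7x$ is equivalent to $y \leq x$, which is the hypothesis of this piece. In the third piece $\psi \equiv 0$, trivial.

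For property (2), I would fix $y$ and view $\psi(\cdot,y)$ on the region $x \geq y$. On $[y,8y]$ it equals $8y-x$, strictly decreasing, and on $[8y,\infty)$ it equals $0$; by the continuity observation above, the function is non-increasing throughout $[y,\infty)$, which gives the claim.

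For property (3), I would fix $x$ and describe $\psi(x,\cdot)$ as a function of $y$. Writing the three pieces in terms of $y$: $\psi(x,y) = 0$ when $y \leq x/8$, $\psi(x,y) = 8y - x$ when $x/8 \leq y \leq x$, and $\psi(x,y) = 7x$ when $y \geq x$. The first piece is constant, the middle piece is strictly increasing in $y$, and the last piece is constant; continuity at $y = x/8$ (both sides give $0$) and at $y = x$ (both sides give $7x$) implies $\psi(x,\cdot)$ is non-decreasing on $[1,\infty)$, yielding $\psi(x,y) \leq \psi(x,y')$.

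There is no real obstacle here beyond careful bookkeeping of the three pieces; the only subtlety is ensuring the definition is consistent at the breakpoints so that the monotonicity claims in (2) and (3) do not fail because of an ambiguous case assignment, which is exactly the continuity check above.
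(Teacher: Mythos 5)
Your proof is correct and is precisely the routine algebraic case analysis that the paper itself leaves to the reader (the paper offers no written proof of this observation, only an appeal to a figure and the remark that the claims can be verified algebraically). Your continuity check at the breakpoints $x=y$ and $x=8y$ is a sensible precaution and all three verifications go through.
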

The target now is to analyze the event when \(OPT\) serves the requests \(R^{OPT}_t\) together at time \(t\). Recall that when \(OPT\) serves a request \(r_k\), the value \(\psi(x_k,y_k)\) is added to \(\Phi\). The following lemma will be needed in order to analyze this event. 
\begin{lemma}
\label{deadline.lemma.24}
Let \(a>0\) and
%\begin{itemize}
%\item 
let \(f:[0,\infty)\rightarrow[0,\infty)\) be the function defined as follows: 
\[
f(x) = \begin{cases}
7x & \text{if  } 0\leq x\leq a \\
8a-x & \text{if  } a\leq x\leq 8a \\
0 & \text{if  } 8a\leq x
\end{cases}
\]
%\item 
Consider the optimization problem $Q$ of choosing a (possibly infinite) subset \(U\subseteq (0,8a)\) that will maximize \(\sum_{x\in U}f(x)\) with the requirement \(\forall x,y\in U:x<y\implies 2x\leq y\).
%\end{itemize}
Then the optimal value of $Q$ is $24a$.
\end{lemma}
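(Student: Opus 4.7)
The plan is to show both that $24a$ is attained and that it is an upper bound. For attainment, I would exhibit $U^\star = \{a/2^i : i \geq 0\} \cup \{2a, 4a\}$, whose consecutive ratios are exactly $2$, so it satisfies the doubling constraint and lies in $(0, 8a)$. On the ramp-up portion, $\sum_{i \geq 0} f(a/2^i) = 7 \sum_{i \geq 0} a/2^i = 14a$; on the ramp-down portion, $f(2a) + f(4a) = 6a + 4a = 10a$. Summing yields $24a$, so the optimum is at least $24a$.

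For the upper bound, let $U$ be any feasible set and split $U = U_1 \cup U_2$ with $U_1 = U \cap (0, a]$ and $U_2 = U \cap (a, 8a)$. Let $\alpha a := \max U_1$ when $U_1 \neq \emptyset$ (this maximum exists because consecutive elements of $U_1$ differ by a factor of at least $2$), and set $\alpha := 0$ otherwise. Since the elements of $U_1$ are doubly spaced with maximum $\alpha a$, listed in decreasing order they are dominated term-by-term by $\alpha a, \alpha a/2, \alpha a/4, \ldots$, giving
\[ S_1 := \sum_{x \in U_1} f(x) = 7 \sum_{x \in U_1} x \;\leq\; 7 \sum_{i \geq 0} \alpha a / 2^i \;=\; 14 \alpha a. \]

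For $U_2$, enumerate its elements as $y_1 < y_2 < \cdots$. The doubling constraint together with $y_k < 8a$ forces $|U_2| \leq 3$, since $y_4 \geq 2^3 y_1 > 8a$. Two lower bounds on $y_k$ are available: from $U_2 \subseteq (a, 8a)$ together with doubling within $U_2$, one gets $y_k > 2^{k-1} a$; and from doubling across $U_1$ and $U_2$, one gets $y_k \geq 2^k \alpha a$ when $U_1 \neq \emptyset$. Using $f(y_k) = 8a - y_k$, a short case analysis on $\alpha$ finishes the proof. When $\alpha \leq 1/2$, the first bound dominates: $S_2 \leq (8-1)a + (8-2)a + (8-4)a = 17a$ while $S_1 \leq 14 \alpha a \leq 7a$, so $S_1 + S_2 \leq 24a$. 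When $\alpha > 1/2$, the second bound dominates: $S_2 \leq (8-2\alpha)a + (8-4\alpha)a + (8-8\alpha)a = (24 - 14\alpha)a$ (and dropping any terms whose implied lower bound exceeds $8a$ only strengthens the bound), which combines with $S_1 \leq 14 \alpha a$ to give $S_1 + S_2 \leq 24a$.

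The main technical obstacle is aligning the two $\alpha$-dependent bounds on $S_1$ and $S_2$ so that they cancel and sum to exactly $24a$. The asymmetry between the two lower bounds on $y_k$, one independent of $\alpha$ (tight for small $\alpha$) and one scaling with $\alpha$ (tight for large $\alpha$), is precisely what makes the case analysis meet at $\alpha = 1/2$ and delivers the tight constant $24$.
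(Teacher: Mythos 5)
Your proof is correct. It rests on the same decomposition as the paper's --- a geometric tail in $(0,a]$ contributing at most $7\cdot 2\cdot\max = 14\alpha a$, plus at most three elements in $(a,8a)$ whose ramp-down values telescope against that --- and the same cancellation $14\alpha a + (24a-14\alpha a)$ in the tight case. The organization differs in a worthwhile way: the paper argues by exchange/normalization on a putative optimal solution (adding $a/2$ or $2a$ if certain windows are empty, sliding $z$ up to at least $a/2$, forcing $w=2z$), whereas you bound an \emph{arbitrary} feasible $U$ directly, splitting on whether $\alpha=\max(U\cap(0,a])/a$ is above or below $1/2$ and choosing between the two lower bounds $y_k> 2^{k-1}a$ and $y_k\ge 2^k\alpha a$ accordingly. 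This sidesteps the need to reason about modifying an optimum (and implicitly about its existence), and it makes the verification purely arithmetic; the cost is the explicit two-case analysis, which the paper avoids by having pinned $w=2z$ in advance. One small point worth keeping explicit in a final write-up: the maximum of $U\cap(0,a]$ exists because two elements of $U$ within a factor of $2$ of the supremum would violate the doubling constraint, and in the $\alpha>1/2$ case the extension of the sum to three terms is legitimate because $8a-2^k\alpha a\ge 0$ for $k\le 3$ when $\alpha\le 1$. Both proofs deliver the same tight constant via the same extremal set $\{4a/2^i : i\ge 0\}$.
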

\begin{proof}
Let $U$ be a solution of $Q$. By feasibility we have $|U\cap(\frac{1}{4}a,a)|\leq2$; we assume the intersection is not empty otherwise we could add $\frac{1}{2}a$ to $U$ and improve its value while maintaining feasibility. Let $z=\max U\cap(\frac{1}{4}a,a)$. 
%$z$ is defined properly and exists because $1\leq|U\cap(\frac{1}{4}a,a)|\leq2$: the right inequality holds because $U$ is a feasible solution and the left inequality holds because otherwise we could add $\frac{1}{2}a$ to $U$ and improve its value, contradicting our assumption that $U$ is an optimal solution. 
We have $z\geq\frac{1}{2}a$ because otherwise we could %remove $z$ from $U$, 
replace $z$ with $\frac{1}{2}a$ and get a feasible solution with a bigger value.
%than $U$, thus a contradiction to $U$ being an optimal solution.
Similarly, by feasibility we have $|U\cap[a,4a)|\leq2$; as before we assume it is not empty otherwise we could add $2a$ to $U$ and improve its value, while maintaining feasibility. Let $w=\min U\cap[a,4a)$. 
%$w$ is defined properly and exists because $1\leq|U\cap[a,4a)|\leq2$: the right inequality holds because $U$ is a feasible solution and the left inequality holds because otherwise we could add $2a$ to $U$ and improve its value, while maintaining feasibility. %contradicting our assumption that $U$ is an optimal solution. 

Note that $a\leq 2z<2a$. Since $U$ is a feasible solution and $z,w\in U$, we have $2z\leq w$. Therefore we may assume that $w=2z$ otherwise we could %remove $w$ from $U$, 
replace $w$ with $2z$ and get an increased feasible solution.
%which has a bigger value than $U$ thus a contradiction to the assumption that $U$ is an optimal solution.

Since $U$ is a feasible solution, we must have $|U\cap[a,8a)|\leq 3$. Among these 3 numbers, the minimum number is $w$,
the middle number is at least $2w$ and the maximum number is at least twice the middle number i.e. at least $4w$.
Also, for each $i\in\mathbb{N}$, the $i$-th largest number in the set $U\cap(0,a)$ is at most $\frac{1}{2^{i-1}}z$ (by easy induction on $i$ from feasibility).
%(This can be proven using induction on $i$ and the assumption that $U$ is a feasible solution).
Since $f$ is monotone-increasing in the interval $[0,a]$ and monotone-decreasing in the interval $[a,8a]$, we can bound the value of $U$ %(and thus bound the optimal value of $Q$) 
as follows:
\begin{align*}
\sum_{x\in U}f(x) & \leq \sum_{i=1}^{\infty}f(\frac{1}{2^{i-1}}z)+f(w)+f(2w)+f(4w)\\ & = \sum_{i=1}^{\infty}7\frac{1}{2^{i-1}}z+(8a-w)+(8a-2w)+(8a-4w)\\
& = 7z\cdot\underbrace{\sum_{i=0}^{\infty}\frac{1}{2^{i}}}_{=2}+24a-7\cdot\underbrace{w}_{=2z} =24a
\end{align*}
Finally, it is easy to see that the feasible solution $\{\frac{4a}{2^{i}}|i\in\mathbb{N}\cup\{0\}\}$ has a value of $24a$ and therefore the optimal value of $Q$ is exactly $24a$. 
\end{proof}
\begin{comment}
We now present the general idea of the proof of Lemma \ref{deadline.lemma.24}.  Assume that $U$ is an optimal solution. If there are $x_1,x_2\in U$ such that $x_1<x_2$ and for each $y\in(x_1,x_2)$ we have $y\not\in U$ then we must have $x_2=2x_1$ because on one hand, since $U$ is a feasible solution, we must have $2x_1\leq x_2$ but on the other hand, if we had $2x_1< x_2$ then we could either increase $x_1$ or decrease $x_2$ and improve the value of the solution (while also keeping it feasible), contradicting our assumption that $U$ is an optimal solution. If $\max U<4a$ then we can add $2\max U$ to $U$ and improve the value of the solution, thus get into a contradiction. If $U$ has a minimum, we can add $\frac{1}{2}\min U$ to $U$ and improve the value of the solution, thus get into a contradiction. Therefore we must have $$U=\{\frac{u}{2^i}|i\in\mathbb{N}\cup\{0\}\}$$
where $u=\max U$ and $4a\leq u<8a$. The value of $U$ is $24$ and therefore this is the optimal value of $Q$. The detailed proof is in Appendix \ref{appendix_for_time_windows}. 
\end{comment}
\noindent Now we can use Lemma \ref{deadline.lemma.24} in order to analyze the event when \(OPT\) serves multiple requests together.
\begin{lemma}
\label{deadlines.lemma.14}
Let \(p\in P\) be the event where \(OPT\) served the requests \(R^{OPT}_t\) at time \(t\). We have
\[ALG^p+\Delta\Phi^p\leq 24\cdot OPT^p\]
\end{lemma}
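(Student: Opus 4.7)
The plan is to decompose $\Delta\Phi^p$ into its three component contributions and observe that only one of them is nonzero. During this event only $OPT$ acts, performing an access but no swaps, so $ALG^p = 0$ and the ordering of $ALG$'s and $OPT$'s lists is unchanged, giving $\Delta\phi^p = 0$. For every $k \in R^{OPT}_t$, Observation \ref{deadlines.lemma.4} tells us that $r_k$ has not yet been served by $ALG$, so each such $k$ newly enters $\lambda(t)$ at this event; the corresponding $\mu_k$ counters are just being initialized and therefore start at $0$, contributing nothing to $\Delta\Phi^p$. What remains is
\[
\Delta\Phi^p \;=\; \sum_{k \in R^{OPT}_t} \psi(x_k, y_k),
\]
while by Observation \ref{deadlines.lemma.7} we have $OPT^p = y_{J(t)}$.

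The crux is therefore to bound $\sum_{k \in R^{OPT}_t} \psi(x_k, y_k)$ by $24\, y_{J(t)}$. Set $a := y_{J(t)}$. Since $y_k \leq a$ for every $k \in R^{OPT}_t$ by the definition of $J(t)$, monotonicity in the second argument (Observation \ref{deadlines.lemma.15}(3)) gives $\psi(x_k, y_k) \leq \psi(x_k, a)$. Notice that the single-variable function $x \mapsto \psi(x, a)$ is precisely the function $f$ appearing in Lemma \ref{deadline.lemma.24}. Hence it suffices to exhibit the set $U := \{x_k : k \in R^{OPT}_t\}$ as a feasible solution to the optimization problem $Q$ of that lemma, after which the lemma immediately yields $\sum_{x \in U} f(x) \leq 24a$, completing the proof.

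Feasibility of $U$ is where I would invoke Lemma \ref{deadlines.lemma.8}(1): the values $x_{k^t_1} < x_{k^t_2} < \cdots < x_{k^t_{|R^{OPT}_t|}}$ satisfy $2 x_{k^t_\ell} \leq x_{k^t_{\ell+1}}$ for every consecutive pair, and a one-line induction extends this to any two elements of $U$. Indices $k$ with $x_k \geq 8a$ have $\psi(x_k, a) = 0$ and can be discarded, matching the domain $(0, 8a)$ used in $Q$. I expect the main conceptual obstacle to already be encapsulated in Lemma \ref{deadline.lemma.24}; once that is granted, the present lemma is a straightforward assembly, with the only subtleties being the two easy structural observations that $\phi$ and the $\mu_k$'s do not move during an $OPT$-access event.
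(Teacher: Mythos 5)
Your proposal is correct and follows essentially the same route as the paper's proof: zero out $ALG^p$, $\Delta\phi^p$, and the newly initialized $\mu_k$ terms, bound $\psi(x_k,y_k)\leq\psi(x_k,y_{J(t)})$ via Observation \ref{deadlines.lemma.15}(3), and then feed the set of $x_k$'s (spaced by Lemma \ref{deadlines.lemma.8}(1), truncated at $8y_{J(t)}$) into Lemma \ref{deadline.lemma.24} to get the bound $24\,y_{J(t)}$. No gaps.
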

\begin{proof}
We have \(OPT^p=y_{J(t)}\) (Observation \ref{deadlines.lemma.7}). We also have \(ALG^p=0\). Therefore, the task is to prove that \[\Delta\Phi^p\leq 24\cdot y_{J(t)}\]
Since \(OPT\) did not change its list, we have \(\Delta\phi^p=0\). For each \(\ell\in[R^{OPT}_t]\), the value \(\psi(x_{k^t_\ell},y_{k^t_\ell})\) is added to \(\Phi\). No other changes are applied to \(\Phi\) as a result of \(OPT\) serving the requests \(R^{OPT}_t\). Let $X=\{x\in[8y_{J(t)}-1]|\exists\ell\in [R^{OPT}_t]:x_{k^t_\ell}=x\}$ We have
\begin{align*}
\Delta\Phi^p
& = \sum_{\ell=1}^{|R^{OPT}_t|}\psi(x_{k^t_\ell},y_{k^t_\ell}) \leq \sum_{\ell=1}^{|R^{OPT}_t|}\psi(x_{k^t_\ell},y_{J(t)}) \\
& = \sum_{\ell\in [R^{OPT}_t]:x_{k^t_\ell}\in[8y_{J(t)}-1]}\psi(x_{k^t_\ell},y_{J(t)}) = \sum_{x\in X}\psi(x,y_{J(t)})\leq 24\cdot y_{J(t)}
\end{align*}
where the first inequality is due to Observation \ref{deadlines.lemma.15} (part 3) and the fact that \(y_{k^t_l}\leq z_{k^t_l}=y_{J(t)}\). The second equality is because \(\psi(x,y_{J(t)})=0\) for each \(x\geq 8y_{J(t)}\). We are left with the task of explaining the second inequality. 

Consider the optimization problem \(Q^{'}\) where \(y_{J(t)}\) is fixed and we need to choose a subset \(X\subseteq[8y_{J(t)}-1]\) that will maximize the term \(\sum_{x\in X}\psi(x,y_{J(t)})\) with the requirement \(\forall x_1,x_2\in X: x_1<x_2 \implies 2x_1\leq x_2\). This requirement must hold due to Lemma \ref{deadlines.lemma.8} (part 1). It can be seen that each feasible solution of the optimization problem \(Q^{'}\) is also a feasible solution of the optimization problem \(Q\) discussed in Lemma \ref{deadline.lemma.24} (where \(a=y_{J(t)}\) and the function \(f\) is \(f(x)=\psi(x,a)\)). Of course, there are feasible solutions of \(Q\) which are not feasible solutions of \(Q^{'}\): these are the feasible solutions of \(Q\) which contain non-integer values (and in particular, the feasible solutions of \(Q\) which are infinite sets). Also, when choosing values for \(X\) in \(Q^{'}\), it is required not choose values which are greater than \(n\) - a constraint which is not present in \(Q\). To conclude, since each feasible solution of $Q^{'}$ is a feasible solution of $Q$, the optimal value of \(Q^{'}\) is bounded by the optimal value of \(Q\), which is \(24y_{J(t)}\) according to Lemma \ref{deadline.lemma.24}. This explains the second inequality.
\end{proof}
\begin{figure}[t]
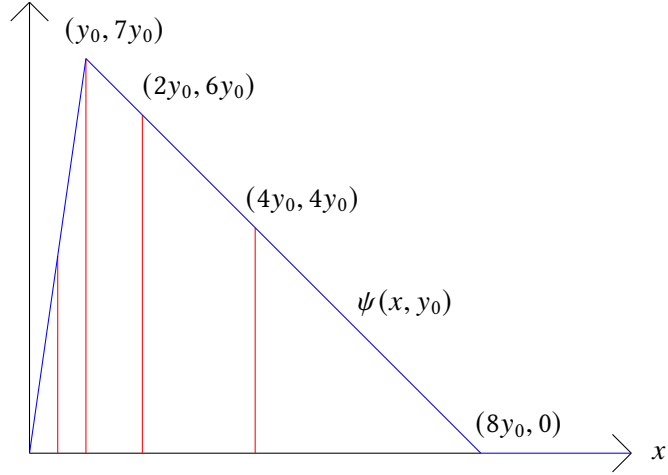

\centering
\ctikzfig{figs/psiBoundPotentialIncreaseLemma}
\caption{The usage of Lemma \ref{deadline.lemma.24} and Lemma \ref{deadlines.lemma.14}.}
\label{fig.bound_lemma}
\end{figure}
In Figure \ref{fig.bound_lemma}, we see how we got the bound of \(\sum_{l=1}^{|R^{OPT}_t|}\psi(x_{k^t_l},y_{k^t_l})\) in Lemma \ref{deadline.lemma.24} and Lemma \ref{deadlines.lemma.14}. For a fixed \(y_0\in\mathbb{N}\), we have the graph \(\psi(x,y_0)\) in blue. The red segments correspond to the \(x\) values chosen in order to bound the term \(\sum_{l\in [R^{OPT}_t]:1\leq x_{k^t_l}\leq n}\psi(x_{k^t_l},y_{k^t_l})\) in Lemma \ref{deadline.lemma.24}. There are additional red segments which should have been included between the most left one and the line $x=0$ but they are not included in this figure.
\begin{figure}[t]
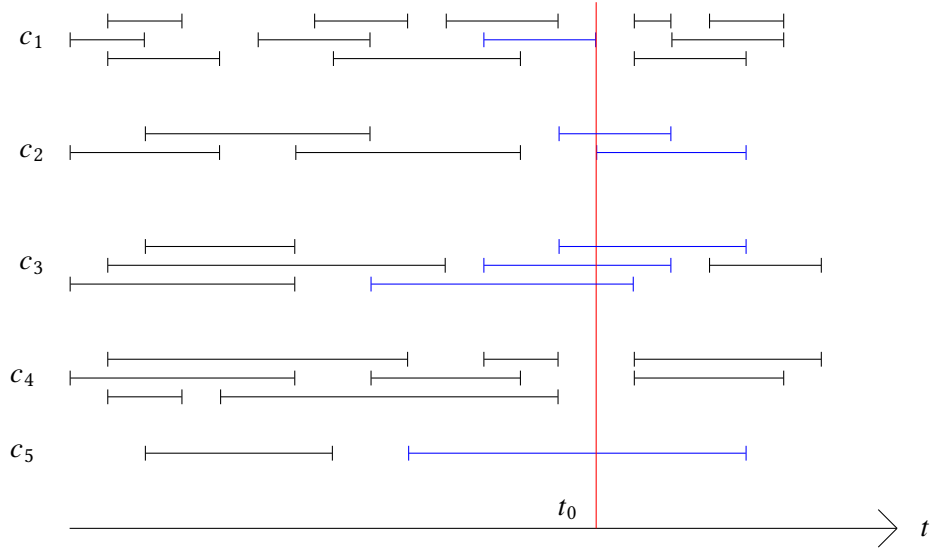

\centering
\ctikzfig{figs/timeWindowsExample}
\caption{An example of a sequence of requests. For each \(1\leq i\leq 5\), we see the time windows of the requests for the element \(c_i\) during the time horizon. Each request is represented by a segment which is the request's time window. At time \(t=t_0\) (which is represented by a vertical red line) - an algorithm may serve requests for the elements \(c_1\), \(c_2\), \(c_3\) and \(c_5\) together: these requests are blue in this figure. One of these requests (the request for the element \(c_1\)) just reached its deadline at time \(t_0\), i.e. its time window ends. Another request (a request for the element \(c_2\)) just arrived at time \(t_0\), i.e. its time window begins. Note that, as seen in the example, we may have multiple requests on a single element at a given moment.}
\label{fig.timeWindowsExample}
\end{figure}
We have analyzed the event where \(ALG\) serves a request and the event when \(OPT\) serves multiple requests together. The only event which is left to be analyzed is the event when \(OPT\) performs a swap. We analyze it below.
\newline
\begin{lemma}
\label{deadlines.lemma.16}
Let \(p\in P\) be the event where \(OPT\) performed a swap at time $t$. We have
\[ALG^p+\Delta\Phi^p\leq 8\cdot OPT^p\]
\end{lemma}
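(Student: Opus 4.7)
The plan is to observe that for this event $ALG^p = 0$ (the online algorithm does nothing) and $OPT^p = 1$ (the cost of one swap), so it suffices to prove $\Delta \Phi^p \leq 8$. Let $e$ denote the element of $OPT$'s list whose position increases (from some $i$ to $i+1$) as a result of the swap, and let $e'$ denote its former right-neighbor, which correspondingly moves from position $i+1$ to position $i$.

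I would decompose $\Delta \Phi^p$ according to the three summands of $\Phi$. For the inversion term $4\phi$, only the relative order between $e$ and $e'$ in $OPT$'s list is affected while $ALG$'s list is untouched, so exactly one ordered pair changes its inversion status, giving $\Delta \phi^p \in \{-1, +1\}$ and a contribution of at most $4$. For $\sum_{k \in \lambda(t)} \psi(x_k, y_k)$, both $x_k$ and $y_k$ are fixed at the historical moment $OPT$ served $r_k$, and $\lambda(t)$ does not change at a swap event (no request arrives or is served), so this term contributes $0$. For the third term $4 \sum_{k \in \lambda(t)} \mu_k(t)$, the definition of $\mu_k$ gives $\Delta \mu_k^p = 1$ for those indices $k \in \lambda(t)$ with $e_k = e$ (the element pushed back) and $\Delta \mu_k^p = 0$ for all other $k$; in particular, indices $k$ with $e_k = e'$ contribute nothing, since the swap is between $e'$ and its \emph{previous} element rather than its next one.

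The crux is therefore to bound $|\{k \in \lambda(t) : e_k = e\}|$. Here I would invoke the convention — already implicit in the statement of Lemma \ref{deadlines.lemma.14}, where the positions $x_{k^t_\ell}$ are required to be strictly increasing — that at any time $t$ there is at most one request in $\lambda(t)$ per element. This can be enforced WLOG by merging: whenever $OPT$ would serve several requests for the same element $e$ between two consecutive $ALG$-serves of $e$, we combine them into a single request whose deadline is the earliest of their deadlines, which does not increase $OPT(\sigma)$ and does not affect $ALG$. Under this convention $|\{k \in \lambda(t) : e_k = e\}| \leq 1$, so the third term contributes at most $4$, yielding $\Delta \Phi^p \leq 4 + 0 + 4 = 8 = 8 \cdot OPT^p$ as required. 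The main obstacle I anticipate is making this merging convention fully watertight and verifying it is consistent with the accounting used in Lemma \ref{deadlines.lemma.14} and the matching between $OPT$-serve and $ALG$-serve events underlying Lemma \ref{deadlines.lemma.10}.
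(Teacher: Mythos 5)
Your proposal follows essentially the same route as the paper's proof: $ALG^p=0$, $OPT^p=1$, the inversion term changes by at most $1$ (contributing at most $4$), the $\psi$-term is unchanged, and at most one $\mu_k$ increments (contributing at most $4$), giving $\Delta\Phi^p\leq 8$. The one place you deviate is the justification that at most one request in $\lambda(t)$ concerns the pushed-back element: you propose a new request-merging convention and rightly worry about making it watertight (it is delicate, since modifying the instance based on $ALG$'s behavior risks circularity, and merging requests with different arrival times can change both $OPT$'s feasible serving times and $ALG$'s deadline-driven actions). This detour is unnecessary: Corollary \ref{deadlines.lemma.3} already lets one assume every request is a triggering request, and two simultaneously active requests for the same element would force at least one of them to be non-triggering, which is exactly how the paper closes this step.
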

\begin{proof}
Let us assume \(OPT\) performs the swap between two elements \(i\) and \(j\), where \(j\) was the next element after \(i\) in \(OPT\)'s list prior to this swap. We have \(ALG^p=0\), \(OPT^p=1\). Therefore, the target is to prove that
\[\Delta\Phi^p\leq 8\]
The swap causes \(\phi\) to either increase by \(1\) (in case \(i\) is before \(j\) in \(ALG\)'s list at time \(t\)) or to decrease by \(1\) (otherwise). Therefore, we have \(\Delta\phi^p\leq 1\). In case there is an active request \(r_k\) for \(i\) in \(ALG\) which has already been served by \(OPT\) but has not been served by \(ALG\) yet - \(\mu_k\) will be increased by \(1\) as well. Therefore we have \[\Delta\Phi^p\leq 4\cdot 1+0+4\cdot 1=8\]
Observe that the amount of active requests for \(i\) which have been served by \(OPT\) but haven't been served by \(ALG\) is at most \(1\): if there were \(2\) (or more) such requests, at least one of them would be a non-triggering request, contradicting Corollary \ref{deadlines.lemma.3}. Therefore, the reader can verify that there are no changes to \(\Phi\) as a result of this swap that \(OPT\) performed, other than the changes mentioned above.
\end{proof}

\noindent We are now ready to prove Theorem \ref{deadlines.thm.1}.

\begin{proof}[Proof of Theorem \ref{deadlines.thm.1}]
Due to Lemma \ref{deadlines.lemma.10}, Lemma \ref{deadlines.lemma.14} and Lemma \ref{deadlines.lemma.16}, we have for each event \(p\in P\) that
\[ALG^p+\Delta\Phi^p\leq 24\cdot OPT^p\]
The theorem follows by summing it up for over all events and use the fact that \(\Phi\) starts with \(0\) and is always non-negative.
\end{proof}

\begin{figure}
\centering
\includegraphics[width= 14cm]{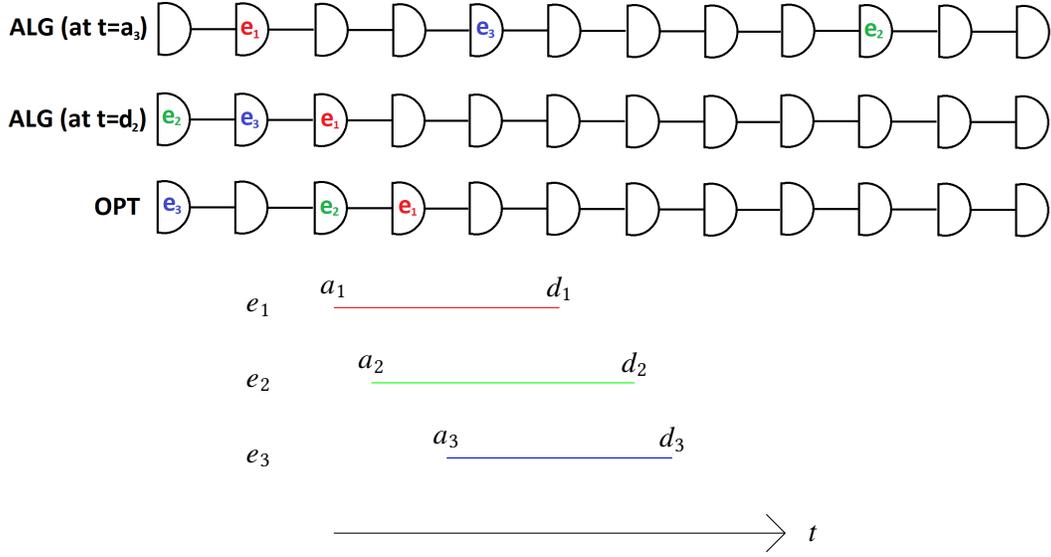}
\ctikzfig{figs/timeWindows3RequestsExample}
\caption{Another example of \(ALG\)'s behavior compared to \(OPT\)'s behavior on a sequence \(\sigma\) where \(OPT\) served all of them together, this time the time windows of the requests also appear. The trigger requests \(r_1\), \(r_2\) and \(r_3\) were served separately by \(ALG\): the top list is \(ALG\)'s list before \(ALG\) served these requests, the middle list is \(ALG\)'s list after \(ALG\) served these requests. We have \(x_1=2,x_2=10,x_3=5\). At time \(q_1\), \(ALG\) served \(r_1\): it paid an access cost of \(3\) and a swapping cost of \(1\) for moving \(e_1\) to the beginning of its list. Then, at time \(q_3\), \(ALG\) served \(r_3\): it paid an access cost of \(9\) and a swapping cost of \(4\) for moving \(e_3\) to the beginning of its list. Then, at time \(q_2\), \(ALG\) served \(r_2\): it paid an access cost of \(12\) and a swapping cost of \(9\) for moving \(e_2\) to the beginning of its list. Therefore we have \(A(\sigma)=3+1+9+4+12+9=38\). The bottom list is \(OPT\)'s list. We assume that \(OPT\) served the \(3\) requests together without performing any swaps. 
%For the purpose of this example, the lists of \(ALG\) and \(OPT\) don't begin as identical, although they should be (perhaps there were requests in the past that both \(ALG\) and \(OPT\) served which caused their lists to become different from each other). 
Due to Lemma \ref{deadlines.lemma.1}, we have that \(OPT\) served these requests at time \(q_1\), which is the earliest deadline of the requests. Therefore, we have that \(T_{OPT}=\{q_1\}\) and \(R^{OPT}_{q_1}=\{1,2,3\}=\{k^{q_1}_1,k^{q_1}_2,k^{q_1}_3\}\) where \(k^{q_1}_1=1,k^{q_1}_2=3,k^{q_1}_3=2\). We have \(y_1=4,y_2=3,y_3=1\). Therefore we have \(J(q_1)=1\) and \(z_1=z_2=z_3=y_1=4\). We have \(OPT(\sigma)=y_{J(q_1)}=4\).}
\label{fig.alg_behavior_actual_list}
\end{figure}
\section{The Analysis for the Algorithm for Delay}
\label{section_delay}
In addition to the terms defined in Section \ref{section_potential_function_delay}, we will use the following as well.
\begin{definition}
For time $t$, element $e\in\mathbb{E}$, request index $k\in[m]$ and position in the list $\ell\in[n]$ we define:
\begin{itemize}
    \item $RC_k^t$ to be the value of the request counter $RC_k$ at time $t$.
    %\item $x_e^t\in[n]$ to be the position of $e$ in $ALG$s list at time $t$.
    %\item $y_e^t\in[n]$ to be the position of $e$ in $OPT$s list at time $t$.
    %\item $x_k\in[n]$ to be the position of the element $e_k$ in $ALG$s list when $OPT$ served $r_k$.
    %\item $y_k\in[n]$ to be the position of the element $e_k$ in $OPT$s list when $OPT$ served $r_k$.
    \item $loc_\ell^t\in\mathbb{E}$ to be the element located in position $\ell$ at time $t$.
    \item $NI_e^t=\{i\in\mathbb{E}|i \text{ is before } e \text{ in } ALG\text{s list and before } e \text{ in } OPT\text{s list at time } t\}$.
\end{itemize}
\end{definition}
%\footnote{$x_k$ and $y_k$ are defined only in case $OPT$ indeed served the request $r_k$ at some time.}
In Section \ref{section_potential_function_delay} we defined when a request is considered active in $ALG$ and when it is considered frozen in $ALG$. Now we also define the following.\footnote{Clearly, $r_k$ is frozen in $ALG$ if and only if it is frozen with $RC_k$ \textbf{or} frozen without $RC_k$ in $ALG$.}
\begin{definition}
For each $k\in[m]$, the request $r_k$ is considered:
\begin{itemize}
    \item \textbf{frozen with $RC_k$} from the time it is served by $ALG$ until the request counter $RC_k$ is deleted by $ALG$.
    \item \textbf{frozen without $RC_k$} from the time $RC_k$ is deleted by $ALG$ until ${EC}_{e_k}$ is zeroed in an element counter event of $e_k$.
    \item \textbf{deleted} after the element counter event for $e_k$ mentioned above occurs.
\end{itemize}
\end{definition}
\begin{observation}
For each $k\in[m]$, at  a time $t$ when the request $r_k$ is active in $ALG$, we have 

$RC_k^t=d_k(t)\leq d_k$.
\end{observation}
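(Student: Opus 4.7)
The plan is to unpack the definitions of $RC_k^t$ and $d_k(t)$ and observe that, while $r_k$ is active in $ALG$, both quantities evolve in lockstep from a common initial value of $0$.

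First I would note the common initialization: at the arrival time $a_k$, the algorithm sets $RC_k \leftarrow 0$ (as specified in Algorithm~\ref{delay.alg}), while by definition $d_k(a_k)=0$ since no delay has yet been incurred. Next I would argue that, for every time $t$ in the interval during which $r_k$ is active in $ALG$ (i.e.\ between $a_k$ and the moment $ALG$ serves $r_k$), the counter $RC_k$ increases at exactly the instantaneous rate at which $r_k$ accrues delay; this is precisely how the request counter is defined in Section~\ref{section_algorithm_delay} (``the counter is initialized to $0$ the moment the request arrives and increases at the same rate that the request incurs delay''). On the other hand, by the definition of $d_k(t)$, the total delay incurred by $r_k$ under $ALG$ up to time $t$ is an antiderivative of the same instantaneous delay rate. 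Since both functions start at $0$ and share the same derivative on the interval $[a_k,t]$, they coincide, giving the equality $RC_k^t = d_k(t)$.

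For the inequality $d_k(t)\le d_k$, I would simply appeal to the definition $d_k=\sup_{s}d_k(s)$ and to the fact that the delay function associated with any request is monotone non-decreasing and non-negative, so $d_k(s)$ is monotone non-decreasing in $s$; hence every value it attains is at most its supremum.

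I do not anticipate a main obstacle here: the observation is essentially a book-keeping statement that follows directly from the definitions introduced in Section~\ref{section_algorithm_delay} and Section~\ref{section_potential_function_delay}, and the only care needed is to restrict attention to times $t$ at which $r_k$ is still active in $ALG$, so that neither the deletion of $RC_k$ nor the freezing of $d_k$ at the serving time is yet in effect.
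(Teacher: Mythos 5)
Your proposal is correct and matches the intended justification: the paper states this as an Observation without proof, and the argument is exactly the definitional bookkeeping you give ($RC_k$ and $d_k(t)$ both start at $0$ at arrival and grow at the same rate while $r_k$ is active, and $d_k(t)\le\sup_s d_k(s)=d_k$ by monotonicity of the delay function). The only point worth noting explicitly is that $RC_k$ cannot have been deleted while $r_k$ is still active, since both deletion rules in Algorithm~\ref{delay.alg} only delete counters of requests that are served in the same event; your restriction to active times already handles this.
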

\begin{observation}
\label{delay.lemma.dk_frozen}
For each $k\in[m]$ we have $d_k(t)=d_k$ for each time $t$ in which $r_k$ is frozen in $ALG$ (And in particular, if $r_k$ is frozen with $RC_k$ in $ALG$ at time $t$ then we have $RC_k^t=d_k(t)=d_k$). ${EC}_{e_k}$ does not increase either during the time when $r_k$ is frozen, unless requests for $e_k$ arrive after $r_k$ has been served by $ALG$ (i.e. after it has became frozen) and these requests suffer delay penalty.
\end{observation}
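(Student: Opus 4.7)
The plan is to unpack the definitions directly, since the statement is essentially a consistency check on the bookkeeping of counters. The observation has three pieces, and I would prove each by tracing how the quantities $d_k(t)$, $RC_k^t$, and $EC_{e_k}^t$ evolve once $ALG$ has served $r_k$.

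First I would establish $d_k(t)=d_k$ during the frozen period. Recall that $d_k(t)$ is the total delay cost $ALG$ pays for $r_k$ up to time $t$. By definition, once $ALG$ serves $r_k$ (which is exactly when the frozen period begins), $r_k$ is no longer active in $ALG$, so it accumulates no further delay; hence $d_k(t)$ is constant on $[t^\star,\infty)$, where $t^\star$ is the time at which $ALG$ served $r_k$. Since $d_k(t)$ is monotone non-decreasing and stabilises at that constant value, and $d_k=\sup_t d_k(t)$, we conclude $d_k(t)=d_k(t^\star)=d_k$ for every frozen time $t$.

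Next, for the ``in particular'' clause, I would observe that the algorithm increments $RC_k$ precisely at the same rate as the delay cost of $r_k$, so while $r_k$ is active we have $RC_k^t=d_k(t)$. As soon as $r_k$ is served, both $RC_k$ and $d_k(\cdot)$ stop changing; $RC_k$ then remains at that value until the algorithm deletes it, which by definition happens exactly when the ``frozen with $RC_k$'' phase ends. Combined with the first part, this gives $RC_k^t=d_k(t)=d_k$ throughout the frozen-with-$RC_k$ phase.

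Finally, for the claim about $EC_{e_k}$, the element counter increases at a rate equal to the sum of delay rates of currently active requests for $e_k$. Since $r_k$ is already served during the frozen period, it contributes nothing to this rate; hence $EC_{e_k}$ can grow only from the delay rates of \emph{other} active requests for $e_k$. Such requests either arrived before $r_k$ was served (and were themselves served simultaneously with $r_k$ when $ALG$ accessed $e_k$, and so contribute nothing after $t^\star$) or arrived strictly after $t^\star$; only the latter can cause $EC_{e_k}$ to increase, which is exactly the stated exception. I do not expect a significant obstacle here: the whole observation is a direct consequence of the algorithm's update rules and the definitions in Section~\ref{section_potential_function_delay}, and the only care required is to match the endpoints of the ``frozen'' and ``frozen with $RC_k$'' intervals to the moments when $r_k$ is served, $RC_k$ is deleted, and an element-counter event zeros $EC_{e_k}$.
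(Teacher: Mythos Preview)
Your argument is correct. The paper states this as an Observation with no accompanying proof, treating it as an immediate consequence of the algorithm's update rules and the definitions of \emph{active}, \emph{frozen}, $d_k(t)$, $RC_k$, and $EC_{e_k}$; your write-up simply makes those consequences explicit, and in particular your justification for the $EC_{e_k}$ claim (that any request for $e_k$ active at the moment $ALG$ serves $r_k$ is served in the same access) is exactly the reason the exception clause is needed.
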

\begin{lemma}
\label{delay.lemma.xk_lemma}
Assume the request $r_k$ (where $k\in[m]$) has been arrived at time $t$ and has been served by $OPT$ at time $t^{'}$ (where $t^{'}\geq t$). At time $t^{''}>t^{'}$ we have:
\begin{itemize}
    \item $x_{e_k}^{t^{''}}=x_k$ if $r_k$ was active in $ALG$ at time $t^{''}$.
    \item $x_{e_k}^{t^{''}}\geq x_k$ if $r_k$ was frozen in $ALG$ at time $t^{''}$.
\end{itemize}
\end{lemma}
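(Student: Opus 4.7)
My proof plan begins with a structural observation about the algorithm: the position $x_{e_k}^{t''}$ of $e_k$ in $ALG$'s list can change only when $ALG$ itself acts ($OPT$'s swaps and accesses do not touch $ALG$'s list), and among $ALG$'s two event types, only an element-counter event changes any positions (it performs a single move-to-front on the element triggering the event). Prefix-request-counter events merely serve requests and delete counters. So to track $e_k$'s position during $[t',t'']$, I only need to enumerate which element-counter events are compatible with $r_k$ having the status (active or frozen) specified at $t''$.

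For Part 1, assume $r_k$ is active in $ALG$ throughout $[t',t'']$. I would rule out, one by one, every event that could move $e_k$. An element-counter event on $e_k$ itself would serve $r_k$ (and in fact delete it), contradicting activeness. An element-counter event on some $e'\neq e_k$ at position $\ell'\geq x_{e_k}$ serves the first $2\ell'\geq x_{e_k}$ elements and would therefore serve $r_k$, also contradictory. The only allowed element-counter events are those on elements $e'$ strictly before $e_k$ with $\ell'<x_{e_k}/2$ (so that $2\ell'<x_{e_k}$ and $r_k$ is not served); a move-to-front of such an $e'$ leaves $e_k$'s position unchanged because only elements at positions $1,\dots,\ell'-1$ shift. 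Prefix-request-counter events that would serve $r_k$ are excluded similarly, and those that are allowed do not move anything. Hence $x_{e_k}^{t''}=x_k$.

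For Part 2, assume $r_k$ is frozen at $t''$; then $ALG$ serves $r_k$ at some time $t^*\in[t',t'']$, and no element-counter event on $e_k$ occurs in $[t^*,t'']$ (such an event would transition $r_k$ to deleted). Applying Part 1 to $[t',t^*)$, the position of $e_k$ just before $t^*$ is $x_k$. The event at $t^*$ itself is either a prefix-event (no move, so position remains $x_k$) or an element-counter event on some $e'\neq e_k$; in the latter case, if $e'$ is before $e_k$ the position is unchanged, and if $e'$ is after $e_k$ the move-to-front pushes $e_k$ from $x_k$ to $x_k+1$. In both cases the position after $t^*$ is $\geq x_k$. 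In $(t^*,t'']$, only element-counter events on elements $e'\neq e_k$ can occur, and by the same case split each such move either preserves or increases $x_{e_k}$. Thus $x_{e_k}^{t''}\geq x_k$.

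The main obstacle I anticipate is purely bookkeeping: being precise about the fact that in the frozen subcase at $t^*$, an element-counter event on $e_k$ is forbidden (since the element-counter event on $e_k$ simultaneously serves $r_k$, deletes $RC_k$, and zeros $EC_{e_k}$, which collapses the frozen states into the deleted state), and ensuring that every event which would decrease $e_k$'s position is ruled out by the exact $2\ell$ serving radius, rather than by some weaker comparison.
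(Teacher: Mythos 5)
Your proposal is correct and follows essentially the same argument as the paper: for the active case, any move-to-front of $e_k$ or of an element behind it would serve $r_k$ (since the serving radius $2\ell$ covers $e_k$), so only move-to-fronts of elements ahead of $e_k$ can occur and these leave its position fixed; for the frozen case, the position can only increase because the sole way it could decrease is an element-counter event on $e_k$ itself, which would delete $r_k$. Your version is somewhat more explicit about the event-by-event bookkeeping (and about invoking Part 1 up to the serving time $t^*$), but the underlying reasoning is identical to the paper's.
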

\begin{proof}
\textbf{Proof of part 1}: Due to the  behavior of the algorithm, if $ALG$ had performed move-to-front on $e_k$ or on an element located after it in its list between time $t$ and time $t^{''}$, then $ALG$ would have also accessed $e_k$ in this operation, contradicting our assumption that $r_k$ was active in $ALG$ at time $t^{''}$: $ALG$ did not serve $r_k$ (i.e. did not access $e_k$) from time $t$ until time $t^{''}$. Therefore $e_k$ and all the elements located after it in $ALG$s list remained in the same position between time $t$ and time $t^{''}$. From time $t$ until time $t^{''}$, $ALG$ could perform move-to-front on elements located before $e_k$ in its list - but this did not affect the position of $e_k$ and the elements located after it in $ALG$s list. Therefore we have $x_{e_k}^{t^{''}}=x_{e_k}^{t^{'}}=x_{e_k}^{t}$. Recall that we have $x_k=x_{e_k}^{t^{'}}$ and that proves the first part of the lemma.

\textbf{Proof of part 2}: From the time $ALG$ served $r_k$ (i.e the time $r_k$ stopped being active in $ALG$ and started being frozen in $ALG$) until time $t^{''}$, the position of $e_k$ in $ALG$s list could increase due to move-to-fronts performed by $ALG$ to elements located after $e_k$ in its list. During this time, the position of $e_k$ in $ALG$s list could not decrease because the only way for that to happen is due to an element counter event on $e_k$. But that means $r_k$ would have been deleted in $ALG$ by time $t^{''}$, contradicting our assumption that $r_k$ is still frozen in $ALG$ at time $t^{''}$.
\end{proof}
At each time $t$, for each element $e\in\mathbb{E}$, there are $x_e^t-1$ elements which are located before $e$ in $ALG$s list at time $t$. Each one of them either located before $e$ in $OPT$s list at time $t$ (i.e. is in $NI_e^t$) or located after $e$ in $OPT$s list at time $t$ (i.e. is in $I_e^t$). Therefore, we have the following observation.
\begin{observation}
\label{delay.observation.number_of_good_elements_observatioon}
For each time $t$ and each element $e\in\mathbb{E}$ we have
$$|I_e^t|+|NI_e^t| = x_e^t-1 .$$
\end{observation}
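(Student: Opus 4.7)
The plan is to verify this by a straightforward counting argument based on partitioning the elements that precede $e$ in $ALG$'s list at time $t$. First I would observe that, by the definition of $x_e^t$, the element $e$ occupies position $x_e^t$ in $ALG$'s list at time $t$, so there are exactly $x_e^t - 1$ elements of $\mathbb{E} \setminus \{e\}$ lying strictly before $e$ in $ALG$'s list at that time.

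Next I would argue that the sets $I_e^t$ and $NI_e^t$ form a partition of precisely this collection of $x_e^t - 1$ elements. By their definitions, both sets consist only of elements that appear before $e$ in $ALG$'s list at time $t$; they differ only in whether the element appears before $e$ in $OPT$'s list (for $NI_e^t$) or after $e$ in $OPT$'s list (for $I_e^t$). Since every element $i \neq e$ is either strictly before $e$ or strictly after $e$ in $OPT$'s list at time $t$ (the two lists are total orderings of $\mathbb{E}$), each of the $x_e^t - 1$ elements preceding $e$ in $ALG$'s list falls into exactly one of $I_e^t$ or $NI_e^t$, and no other element of $\mathbb{E}$ belongs to either of these sets.

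Summing the sizes of the two parts of this partition then yields $|I_e^t| + |NI_e^t| = x_e^t - 1$, as required. There is no real obstacle here: the claim is essentially a definitional accounting identity, and the proof amounts to unwrapping the definitions of $x_e^t$, $I_e^t$, and $NI_e^t$ and applying the trichotomy of positions in a total order.
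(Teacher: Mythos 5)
Your proposal is correct and matches the paper's own justification exactly: the paper also notes that the $x_e^t-1$ elements preceding $e$ in $ALG$'s list are partitioned into those preceding $e$ in $OPT$'s list ($NI_e^t$) and those following it ($I_e^t$). Nothing further is needed.
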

The following lemma will come in handy later.
\begin{lemma}
\label{delay.lemma.help_for_2_different_events}
    At each time $t$ we have $$3x_e^t-4\cdot|I_e^t|-3x_e^t\cdot\mathbbm{1}{[x_e^t\leq4y_e^t]}\leq 0 .$$
\end{lemma}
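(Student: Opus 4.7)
The plan is to split into two cases based on whether $x_e^t \leq 4y_e^t$ or $x_e^t > 4y_e^t$, since this controls the indicator function. Let me abbreviate $x = x_e^t$, $y = y_e^t$, and $I = |I_e^t|$.

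In the first case, $x \leq 4y$, the indicator equals $1$ and the inequality reduces to $3x - 4I - 3x = -4I \leq 0$, which is immediate from $I \geq 0$.

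In the second case, $x > 4y$, the indicator is zero, so the claim becomes $3x - 4I \leq 0$, i.e., $I \geq \tfrac{3x}{4}$. Here I would use the fact that elements in $NI_e^t$ lie before $e$ in $OPT$'s list, of which there are only $y - 1$, so $|NI_e^t| \leq y - 1$. Combined with Observation \ref{delay.observation.number_of_good_elements_observatioon} (giving $I + |NI_e^t| = x - 1$), this yields $I \geq x - y$. Now from $x > 4y$ we get $y < x/4$, hence $I \geq x - y > x - x/4 = 3x/4$, which is exactly what is needed.

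The proof is essentially just a pigeonhole argument on the lists combined with a case split on the indicator, so there is no real obstacle; the only thing to be careful about is correctly identifying that only $y - 1$ elements can lie before $e$ in $OPT$'s list and invoking the earlier observation to convert this into a lower bound on $I$.
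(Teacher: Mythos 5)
Your proposal is correct and follows the same argument as the paper: the same case split on the indicator, the same bound $|NI_e^t|\leq y_e^t-1$, and the same use of Observation \ref{delay.observation.number_of_good_elements_observatioon} to conclude $|I_e^t|>\frac{3}{4}x_e^t$ in the case $x_e^t>4y_e^t$. Nothing is missing.
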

\begin{proof}
If $x_e^t\leq4y_e^t$ then we have 
\begin{align*}
    3x_e^t-4\cdot|I_e^t|-3x_e^t\cdot\mathbbm{1}{[x_e^t\leq4y_e^t]}  = 3x_e^t-4\cdot|I_e^t|-3x_e^t\cdot1
     =  -4\cdot|I_e^t|
     \leq 0 .
\end{align*}
Henceforth we assume that $x_e^t>4y_e^t$, i.e. $y_e^t<\frac{1}{4}x_e^t$. We have $$|NI_e^t|\leq y_e^t-1<\frac{1}{4}x_e^t-1$$ where the left inequality is because $|NI_e^t|$ can be bounded by the number of elements which are before $e$ in $OPT$s list at time $t$, which is $y_e^t-1$. From Observation \ref{delay.observation.number_of_good_elements_observatioon} we have that
\begin{align*}
    |I_e^t|  = x_e^t-1-|NI_e^t|
     > x_e^t-1-(\frac{1}{4}x_e^t-1)
     = \frac{3}{4}x_e^t
\end{align*}
and hence
\begin{align*}
    3x_e^t-4\cdot|I_e^t|-3x_e^t\cdot\mathbbm{1}{[x_e^t\leq4y_e^t]} = 3x_e^t-4\cdot|I_e^t|
    < 3x_e^t-4\cdot\frac{3}{4}x_e^t = 0 .
\end{align*}
\end{proof}
The following observation and lemma will also come in handy later.
\begin{observation}
\label{delay.lemma.basic_request_counters_observation}
Let $N\in[n]$. Define
$$K=\bigcup_{\ell \in [N]} \{k\in [m]|e_k=loc_\ell^t \text{ and } r_k \text{ was either active or frozen with } RC_k \text{ in } ALG \text{ at time } t\} .$$
Then we have
$$\sum_{k\in K}RC_k^t\leq N .$$
\end{observation}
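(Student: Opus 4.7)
The plan is to establish the inequality $\sum_{k \in K} RC_k^t \leq N$ as an invariant that the algorithm maintains throughout time. Let $f(t) := \sum_{k \in K(t)} RC_k^t$, where $K(t)$ is defined exactly as $K$ in the statement but with $t$ allowed to vary. The crucial observation is that $f(t)$ is precisely the quantity whose reaching the value $N$ triggers a prefix-request-counters event on $\ell = N$, so by the very definition of the algorithm, $f$ should remain bounded by $N$. The plan is to formalize this by tracking how $f$ evolves between events and at events.

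Between events, $f(t)$ changes only through delay accumulation of active requests whose element currently lies in the first $N$ positions, and this change is continuous and nonnegative. At events, the plan is to verify that $f$ does not increase. A new request $r_k$ arrives with $RC_k = 0$, contributing nothing. A prefix-request-counters event on some $\ell$ deletes the counters of requests to the first $\ell$ elements without moving any element: if $\ell \geq N$ this zeroes out every contribution to $f$, and if $\ell < N$ it decreases $f$ by the sum of the deleted counter values. For an element-counter event on $e$ at position $\ell$, I will split into cases. If $\ell \leq N$, the elements occupying positions $1, \dots, N$ are permuted internally (no element enters or leaves the prefix) while $e$'s counters are deleted, so $f$ decreases by $e$'s old counter sum. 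If $\ell > N$, then $e$ moves from position $\ell$ into position $1$ with zero contribution (its counters having been deleted), and the element previously at position $N$ is pushed to position $N+1$, so its counters (which remain intact) simply drop out of $f$, again yielding a nonincrease.

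Combining these two ingredients---continuous monotone growth between events and nonincrease at events---together with the algorithm's rule that an event on some $\ell \leq N$ triggers (and collapses the relevant counters) as soon as the prefix sum would reach $\ell$, we obtain $f(t) \leq N$ at all times, which is exactly the desired bound. The main subtlety lies in the element-counter event with $\ell > N$, where the list is reshuffled across the boundary of the prefix $\{1, \dots, N\}$: one must carefully verify that the incoming element contributes zero (its request counters are deleted during the event) while the outgoing element at position $N$ simply loses its contribution without having its counters deleted, so that the bookkeeping gives a true nonincrease of $f$ rather than an accidental jump upward.
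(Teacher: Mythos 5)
Your proof is correct, and it formalizes exactly the argument the paper intends: the paper states this as an observation with no written proof, the implicit justification being that a prefix-request-counters event on $\ell=N$ fires precisely when the sum of existing request counters over the first $N$ positions reaches $N$ and then deletes those counters, while no other event (arrival with a zero counter, counter deletions, or a move-to-front that pushes an element out of the prefix and brings in an element whose counters were just deleted) can increase that sum. Your careful case analysis of the element-counter event with $\ell>N$ is the only point requiring any checking, and you handle it correctly.
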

\begin{comment}
Observe the flow of each request $r_k$. Firstly it \textbf{arrives}, which makes it to be \textbf{active}. During the time when $r_k$ is active, it suffers delay penalty, which causes $RC_k$ and ${EC}_{e_k}$ to increase accordingly. Later, $ALG$ accesses $e_k$, which causes $r_k$ to be \textbf{frozen}: since $ALG$ accessed $e_k$, it means $r_k$ has been served, so $RC_k$ and ${EC}_{e_k}$ will not suffer any more delay penalty (the value of $RC_k$ will be $d_k$ until it will be deleted later. ${EC}_{e_k}$ will also stop increasing until future requests for $e_k$ that will arrive later cause $ALG$ to suffer more delay penalty). Later, during the time when $r_k$ is frozen, $ALG$ may delete $RC_k$. This doesn't cause $r_k$ to be considered deleted. $r_k$ will be considered \textbf{deleted} at the first element counter event for the element $e_k$ after it became frozen. 

A request $r_k$ may become frozen and then immediately deleted after that if is active and the delay penalty ${EC}_{e_k}$ suffers causes an element counter event for $e_k$. During this event, $e_k$ is accessed, which causes $r_k$ to be served and become frozen. Immediately after that, $RC_k$ is deleted and then ${EC}_{e_k}$ is set to $0$, meaning that $r_k$ becomes deleted.
\end{comment}
\begin{lemma}
\label{delay.lemma.request_counters_lemma}
Let $N\in[n]$. Define
$$K=\bigcup_{\ell\in[N]} \{k\in [m]|e_k=loc_\ell^t \text{ and } r_k \text{ was either active or frozen with } RC_k \text{ in } ALG \text{ at time } t\} .$$
Then we have
$$\sum_{k\in K}d_k\leq 2\cdot N .$$
\end{lemma}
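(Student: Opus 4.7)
The plan is to decompose $K = K' \sqcup K''$, where $K'$ consists of the indices of requests that are active in $ALG$ at time $t$ and $K''$ consists of the indices of requests that are frozen with $RC_k$ in $ALG$ at time $t$. For every $k \in K''$, Observation~\ref{delay.lemma.dk_frozen} gives $d_k = d_k(t) = RC_k^t$. Applying Observation~\ref{delay.lemma.basic_request_counters_observation} then yields $\sum_{k \in K''} d_k \leq \sum_{k \in K} RC_k^t \leq N$, immediately handling the frozen part. The remaining task is to control the future delay $\sum_{k \in K'} (d_k - RC_k^t)$ and show that, together with the $K''$ contribution, the total is at most $2N$.

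For this, two structural facts drive the argument. First, by (an adaptation of) Lemma~\ref{delay.lemma.xk_lemma}, for each $k \in K'$ the position $p_k \coloneq x_{e_k}^t \leq N$ of $e_k$ in $ALG$'s list stays fixed throughout $[t,\tau_k]$, since any intervening prefix-request-counters event does not move elements, and any element-counter event that would perform a move-to-front on a position $\geq p_k$ must fire at some $\ell$ with $2\ell \geq p_k$, hence serve $r_k$ and contradict the choice of $\tau_k$. Second, the set $P = \{loc_1^t,\ldots,loc_N^t\}$ remains the first $N$ elements (as a set, possibly permuted) throughout $[t, \max_{k \in K'} \tau_k]$, because any element-counter event on an element outside $P$ fires at a position $> N$, serving $2\ell > N$ elements and therefore simultaneously serving every request in $K'$. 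These two facts make it legitimate to apply Observation~\ref{delay.lemma.basic_request_counters_observation} to the first $N$ positions at any intermediate time.

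Order the events serving any $K'$ request by time as $\tau_1 < \tau_2 < \cdots < \tau_m$, with trigger indices $\ell_1 < \ell_2 < \cdots < \ell_m$ (strict monotonicity follows because after event $j$ every remaining $k \in K'$ has $p_k > 2\ell_j$, forcing $\ell_{j+1} > \ell_j$). Applying Observation~\ref{delay.lemma.basic_request_counters_observation} at $\tau_j^-$ to the first $2\ell_j$ positions gives $\sum_{k \in B_j \cap K'} d_k = \sum_{k \in B_j \cap K'} RC_k^{\tau_j^-} \leq 2\ell_j$. The bound $\sum_{k \in K} d_k \leq 2N$ would then follow from a telescoping over the events that exploits the potential $\Psi(\tau) = \sum_{k : e_k \in P,\, RC_k \text{ alive at } \tau} RC_k^\tau$, which is $\leq N$ at all times by Observation~\ref{delay.lemma.basic_request_counters_observation}; its growth is precisely the future delay accrued on $P$, and its drops at events $j$ come in quanta bounded in terms of $\ell_j$. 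Combining this with the $K''$ bound above gives $\sum_{k \in K} d_k \leq 2N$.

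The main obstacle is the telescoping step: a naive summation $\sum_j 2\ell_j$ does not collapse to $2N$, so one has to carefully account for (a) which RC mass in $\Psi$ was already paid for at time $t$ (the initial $\sum_{k \in K} RC_k^t$), (b) the fact that a prefix-request-counters event at $\ell_j$ only deletes the RCs of positions $1,\ldots,\ell_j$ while leaving those in positions $\ell_j+1,\ldots,2\ell_j$ frozen-with-$RC$ so their $d_k$ is effectively already charged, and (c) any $K''$ RCs or fresh requests on $P$ that may be drained by intermediate events and thereby reshuffle the prefix capacity. Reconciling these contributions through the potential $\Psi$ is precisely what converts the per-event bound $2\ell_j$ into the global bound $2N$.
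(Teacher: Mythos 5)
Your proposal sets up the right ingredients (positions of active requests are frozen, the prefix $\{loc_1^t,\dots,loc_N^t\}$ stays intact until all of $K'$ is served, and Observation~\ref{delay.lemma.basic_request_counters_observation} can be applied at intermediate times), but the actual counting step is missing, and you say so yourself: ``a naive summation $\sum_j 2\ell_j$ does not collapse to $2N$.'' The potential $\Psi$ you propose does not close this gap as sketched. Its total growth over the relevant window is bounded by its total drop plus $N$, but the drop at a prefix-request-counters event on $\ell_j$ is exactly $\ell_j$ (that is the firing condition), so the total drop is $\sum_j \ell_j$ plus whatever element-counter events delete, and nothing forces this to be $O(N)$ --- the same prefix can fire repeatedly as new requests arrive and accrue delay on elements of $P$. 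So the bound you would extract is $N + \sum_j \ell_j + \cdots$, not $2N$. The split into $K'$ and $K''$ also does not buy you anything beyond the trivial $\sum_{k\in K} RC_k^t \le N$; the whole difficulty is the \emph{future} delay of the active requests, which is exactly the part left unproved.

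The paper's proof resolves this with a different decomposition that you should compare against: it splits $K$ by \emph{position} into dyadic blocks $K_j$ (elements at positions in $(2^{j-1},2^j]$ at time $t$), not by active/frozen status. The key point is that the first event after $t$ that touches any position $\ge 2^{j-1}$ (a prefix event on some $\ell\ge 2^{j-1}$, or an element-counter event at position $\ge 2^{j-1}$) necessarily serves the first $2\ell\ge 2^j$ positions, hence freezes \emph{all} of $K_j$ simultaneously; before that moment those positions do not move and no request of $K_j$ is deleted. A single application of Observation~\ref{delay.lemma.basic_request_counters_observation} at that moment (or at $t=\infty$ if no such event occurs) then gives $\sum_{k\in K_j} d_k \le 2^j$, including all future delay, with no telescoping over multiple events needed. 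Summing the geometric series gives $2N$. Unless you can supply a concrete accounting that bounds the total drop of $\Psi$ by $O(N)$, your argument has a genuine gap precisely at the step you flagged.
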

\begin{proof}
We assume without loss of generality that $N=2^i$ for some $i\in\mathbb{N}\cup\{0\}$ (In the case where $N$ is not a power of $2$ the proof works too, although with some technical changes).
For each $0\leq j\leq i$ define
$$K_j=\bigcup_{2^{j-1} <  \ell \leq 2^{j}} \{k\in [m]|e_k=loc_\ell^t \text{ and } r_k \text{ was either active %in } ALG \text{ 
or frozen with } RC_k \text{ in } ALG \text{ at time } t\} .$$
In order to prove the lemma, we shall prove for each $0\leq j\leq i$ that $$\sum_{k\in K_j}d_k\leq 2^{j} .$$
If we do that, then the lemma will be proved since we have
$$\sum_{k\in K}d_k=\sum_{j=0}^{i}\sum_{k\in K_j}d_k\leq\sum_{j=0}^{i}2^{j}\leq 2^{i+1}=2\cdot N .$$
Let $0\leq j\leq i$. We consider 2 cases. The first case is that after time $t$, there is at least one prefix request counters event on $\ell$ that satisfies $2^{j-1}\leq \ell \leq n$ or an element counters event on an element $loc_\ell^t$ such that $2^{j-1}\leq\ell\leq n$. The second case is that such an event does not occur after time $t$. We begin with the first case. Let $t^{'}\geq t$ be the first time after $t$ when there is an event as mentioned above. Let $$K^{'}_j=\bigcup_{2^{j-1} <  \ell \leq 2^{j}} \{k\in [m]|e_k=loc_\ell^{t^{'}} \text{ and } r_k \text{ was either active %in } ALG \text{ 
or frozen with } RC_k \text{ in } ALG \text{ at time } {t^{'}}\} .$$
During the time between $t$ and $t^{'}$, it is guaranteed that the positions of the elements $loc_{2^{j-1}}^t,loc_{2^{j-1}+1}^t,...,loc_{n}^t$ in the list will stay the same. 
During this time, we also have that each request $r_k$ (such that $k\in K_j$) is guaranteed not to be deleted by $ALG$ (although it is possible that if it is active in $ALG$ at time $t$, it will become frozen with $RC_k$ before time $t^{'}$, but that does not cause any problem to the proof). Therefore, we have $K_j\subseteq K^{'}_j$. At time $t'$, during the prefix request counters event or element counter event mentioned above, for each $k\in K_j$, $r_k$ is going to become either frozen with $RC_k$ or frozen without $RC_k$ or deleted in $ALG$, thus it will not suffer any more delay penalty after the event at time $t^{'}$. Therefore we have $d_k(t^{'})=d_k$. By applying Observation \ref{delay.lemma.basic_request_counters_observation} on time $t^{'}$ we get that
$$\sum_{k\in K_j}d_k=\sum_{k\in K_j}d_k(t^{'})\leq \sum_{k\in K^{'}_j}d_k(t^{'}) \leq 2^j$$
where the first inequality is because $K_j\subseteq K^{'}_j$. Now we are left with the task to deal with the second case when after time $t$, there will not be any prefix request counters event on $\ell$ that satisfies $2^{j-1}\leq\ell\leq n$ and there will not be an element counters event on an element $loc_\ell^t$ such that $2^{j-1}\leq\ell\leq n$. The idea in dealing with this case to act as in the previous case where taking $t^{'}=\infty$. Define 
$$K^{\infty}_j=\bigcup_{2^{j-1} <  \ell \leq 2^{j}} \{k\in [m]|e_k=loc_\ell^{\infty} \text{ and } r_k \text{ was either active %in } ALG \text{ 
or frozen with } RC_k \text{ in } ALG \text{ at time } {\infty}\}$$
We have 
$$\sum_{k\in K_j}d_k=\sum_{k\in K_j}d_k(\infty)\leq \sum_{k\in K^{\infty}_j}d_k(t^{\infty}) \leq 2^j$$
where in the first inequality we used the fact that $K^j\subseteq K^{\infty}_j$ and in the second inequality we used Observation \ref{delay.lemma.basic_request_counters_observation} on time $\infty$. Observe that the fact that we are in the second case means that the positions of the elements $loc_{2^{j-1}}^t,loc_{2^{j-1}+1}^t,...,loc_{n}^t$ in $ALG$s list stay the same from time $t$ until time $\infty$.
\end{proof}
\begin{comment}
\begin{lemma}
\label{delay.lemma.request_counters_lemma_advanced}
Let $N\in[n]$. Define
$$K=\bigcup_{\ell\in[N]} \{k\in [m]|e_k=loc_\ell^t \text{ and } r_k \text{ was either active in } ALG \text{ or frozen with } RC_k \text{ in } ALG \text{ at time } t\}$$
Then we have
$$\sum_{k\in K}d_k\leq 2\cdot N$$
\end{lemma}
\begin{proof}
The proof is in induction on $N$. The base cases $N=1$ and $N=2$ hold due to Lemma \ref{delay.lemma.request_counters_lemma}. In the induction step, where $N\geq3$, assume the Lemma holds for each $M\in[N-1]$. We need to prove that $$\sum_{k\in K}d_k\leq 2\cdot N$$
Define
$$K_1=\bigcup_{1\leq \ell \leq \lceil \frac{N}{2}\rceil-1} \{k\in K|e_k=loc_\ell^t\}$$
$$K_2=\bigcup_{\lceil \frac{N}{2}\rceil \leq \ell \leq [N]} \{k\in K|e_k=loc_\ell^t\}$$
We have $\lceil \frac{N}{2}\rceil-1\leq \frac{N}{2}<N$ and therefore, due to the induction assumption (using $M=\lceil \frac{N}{2}\rceil-1$) we have $$\sum_{k\in K_1}d_k\leq 2\cdot (\lceil \frac{N}{2}\rceil-1)\leq 2\cdot\frac{N}{2}= N$$
Due to Lemma \ref{delay.lemma.request_counters_lemma} we have $$\sum_{k\in K_2}d_k\leq N$$
Using the fact that $K=K_1\cup K_2$, we we have
\begin{align*}
\sum_{k\in K}d_k 
= \sum_{k\in K_1}d_k+\sum_{k\in K_2}d_k
 \leq N + N
 = 2\cdot N
\end{align*}
\end{proof}
\end{comment}
The following lemma bounds the cost $ALG$ pays and thus it will be used to change the model (i.e. rules) in which $ALG$ is being charged without increasing $ALG(\sigma)$ so anything we will prove later for the new model will also hold for the original model.
\begin{lemma}
\label{delay.lemma.cost_linear_with_delay}
For any time $t$, the cost of $ALG$ until time $t$ is bounded by
$$ALG^t(\sigma)\leq 6\cdot\sum_{k=1}^{m}d_k(t).$$
\end{lemma}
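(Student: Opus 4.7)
The plan is to decompose $ALG^t(\sigma)$ into its three cost components (total delay, total access, total swap) and bound each separately. By definition, the total delay cost $ALG$ incurs up to time $t$ is exactly $D(t) := \sum_{k=1}^{m} d_k(t)$, so it suffices to prove that the combined access-plus-swap cost is at most $5 D(t)$. To do this I would charge every unit of access/swap cost to the event that caused it: a prefix-request-counter event on $\ell$ accesses the first $2\ell$ positions and performs no swaps, contributing at most $2\ell$; an element-counter event on an element at position $\ell$ accesses the first $2\ell$ positions and performs a move-to-front of length $\ell-1$, contributing at most $2\ell + (\ell-1) \leq 3\ell$.

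For a prefix event on $\ell$ occurring at some time $t'$, the triggering condition says that $\ell$ equals the sum of $RC_k^{t'}$ taken over all requests $r_k$ whose element lies in positions $1,\dots,\ell$ at time $t'$, and these are precisely the request counters the algorithm deletes at this event. Using $RC_k \leq d_k$ always, we bound the event cost $2\ell$ by $2 \sum_k d_k$ where the sum is restricted to requests whose RC is deleted here. Since each request's RC is deleted at most once in the entire execution, and any request whose RC has been deleted by time $t$ must already have been served (so $d_k = d_k(t)$), summing over all prefix events up to time $t$ yields a total prefix-event cost of at most $2 D(t)$. The only subtlety is that RCs for positions $\ell+1,\dots,2\ell$ survive a prefix event without being deleted, but this is harmless because they do not contribute to the triggering sum of the current event, and they will be charged once they are eventually deleted in some subsequent event.

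For element events, fix an element $e$ and observe that $EC_e$ grows at exactly the same rate as the delay accumulated by currently-unserved requests for $e$, so its total growth from $t=0$ to $t$ equals $\sum_{k:\, e_k = e} d_k(t)$. Since each element-counter event on $e$ resets $EC_e$ from its triggering value $\ell$ back to $0$, the sum of those triggering values over all element events on $e$ up to time $t$ is at most the total growth of $EC_e$. Multiplying by $3$ and summing over all $e \in \mathbb{E}$ gives a total element-event cost of at most $3 D(t)$. Combining the three bounds yields $ALG^t(\sigma) \leq D(t) + 2 D(t) + 3 D(t) = 6 D(t)$, which is the claim. The argument is essentially a clean charging scheme; the most delicate point is verifying that the leftover request counters of prefix events (those in positions $\ell+1,\dots,2\ell$) are indeed charged exactly once, and this is automatic because each request counter is deleted in exactly one event (either a later prefix event or an element event) throughout the entire execution.
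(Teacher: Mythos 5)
Your proposal is correct and uses essentially the same charging scheme as the paper: the paper deposits $\epsilon$, $2\epsilon$, and $3\epsilon$ (per unit of delay) to the delay cost, the request counter, and the element counter respectively, and pays for prefix events out of the request-counter deposits and element-counter events out of the element-counter deposits, which is exactly your $D(t)+2D(t)+3D(t)$ decomposition. No gaps.
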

\begin{proof}
When a request $r_k$ suffers a delay penalty of $\epsilon$, the request counter $RC_k$ increases by $\epsilon$ and the element counter $EC_{e_k}$ increases by $\epsilon$. We deposit $\epsilon$ to the delay penalty, $2\epsilon$ to $RC_k$ and $3\epsilon$ to $EC_{e_k}$ (which sums up to $6\epsilon$). When there is a prefix request counters event on $\ell\in[n]$, then $ALG$ pays $2\ell$ for access, which is precisely the sum of the amount deposited on the request counters in the prefix $\ell$. When there is an element counter event on $e\in\mathbb{E}$ at position $\ell\in[n]$, $ALG$ pays precisely $2\ell$ access cost and $\ell-1$ swapping cost, i.e. at most $3\ell$ cost, which is what we deposited on $EC_e$.
\end{proof}
\begin{definition}
\label{delay.change_cost_model_alg}
We make the following change to the model: $ALG$ will not pay for accessing elements and swapping elements. However, whenever a request $r_k$ suffers a delay penalty of $\epsilon>0$, $ALG$ will be charged with a delay penalty of $6\epsilon$ instead of being charged with $\epsilon$.
\end{definition}
Definition \ref{delay.change_cost_model_alg}, changes $ALG(\sigma)$ to be $6\cdot\sum_{k=1}^{m}d_k$. Due to Lemma \ref{delay.lemma.cost_linear_with_delay} (when $t=\infty$), it is guaranteed that $ALG(\sigma)$ doesn't decrease as a result of this change.
\begin{observation}
\label{delay.lemma.change_cost_model_alg}
$ALG(\sigma)$ does not decrease as a result of the change of the model in Definition \ref{delay.change_cost_model_alg}
\end{observation}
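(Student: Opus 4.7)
The plan is to obtain the observation as an immediate corollary of Lemma \ref{delay.lemma.cost_linear_with_delay}. The key idea is that the new model's total cost admits a clean closed-form expression, while the old model's total cost is exactly bounded by that same expression, so swapping to the new model can only increase what $ALG$ is charged.

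First I would write out the total cost of $ALG$ in the new model. Since Definition \ref{delay.change_cost_model_alg} zeros out the access and swap charges and replaces each delay increment $\epsilon$ by $6\epsilon$, the only way $ALG$ accumulates cost in the new model is through delay, and its total cost equals $6 \sum_{k=1}^{m} d_k$, where $d_k = \sup_t d_k(t)$ is the total delay suffered by request $r_k$ throughout the entire input. This is simply a matter of bookkeeping against the definition.

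Next I would invoke Lemma \ref{delay.lemma.cost_linear_with_delay} in the limit $t \to \infty$: the lemma states that for every time $t$ the original-model cost satisfies $ALG^t(\sigma) \le 6 \sum_{k=1}^{m} d_k(t)$. Since the left-hand side is monotone non-decreasing in $t$ and each $d_k(t)$ is monotone non-decreasing and bounded above by $d_k \le n$, both sides have well-defined limits and we obtain $ALG(\sigma) \le 6 \sum_{k=1}^{m} d_k$ in the original model. Combining with the previous step, the original-model cost is at most the new-model cost, which is exactly the claim of the observation.

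The only subtle point I anticipate is the justification of passing to the limit $t \to \infty$, i.e.\ making sure the cost accumulated in the original model is finite and equals its pointwise supremum; this is routine given the earlier observation that $d_k \le n$, so no real obstacle arises. In other words, the observation is essentially a one-line rephrasing of Lemma \ref{delay.lemma.cost_linear_with_delay}, and the proof should be very short.
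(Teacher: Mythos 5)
Your proposal is correct and matches the paper's own argument exactly: the paper likewise notes that the new model charges $6\sum_{k=1}^{m}d_k$ and invokes Lemma \ref{delay.lemma.cost_linear_with_delay} at $t=\infty$ to conclude the original cost is no larger. Your extra care about the limit is a minor (and harmless) elaboration of the same one-line argument.
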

\begin{definition}
We define the set of \textbf{events} ${P}$ which contains the following $7$ types of events:
\begin{enumerate}
\item A request which is active in both $ALG$ and $OPT$ suffers a delay penalty of $\epsilon$.
\item A request which is active in $ALG$ but not in $OPT$ (because $OPT$ has already served it) suffers a delay penalty of $\epsilon$.
\item A request which is active in $OPT$ but not in $ALG$ (because $ALG$ has already served it) suffers a delay penalty of $\epsilon$.
\item $ALG$ has a prefix request counters event.
\item $ALG$ has an element counter event.
\item $OPT$ serves multiple requests together.
\item $OPT$ swaps two elements.
\end{enumerate}
\end{definition}
Note that any event is a superposition of the above 7 events \footnote{Note that since $ALG$ is deterministic, it is the best interest of the adversary not to increase the delay penalty of a request $r_k$ after $ALG$ serves it because doing so may only cause $OPT(\sigma)$ to increase while $ALG(\sigma)$ will stay the same. Therefore, we could assume that the delay penalty of a request never increases after $ALG$ serves it, thus an event of type 3 never occurs and thus no needs to be analyzed. However, we do not use this assumption and allow the adversary to act in a non-optimal way for him. We take care of the case where events of type 3 may occur as well.}. Recall that the potential function $\Phi$ is defined in Section \ref{section_potential_function_delay} as follows:
\begin{align*}
    \Phi(t) & = \sum_{e\in\mathbb{E}}\rho_e(t)+36\cdot\sum_{k\in\lambda_1(t)}d_k(t)+\sum_{k\in\lambda_2(t)}(42d_k-6d_k(t))\cdot\mathbbm{1}{[x_k\leq4y_k]}+\\
    & +48\cdot\sum_{e\in\mathbb{E}}\mu_e(t)+8\cdot\sum_{k\in\lambda_2(t)}\frac{d_k-d_k(t)}{x_{e_k}^t}\cdot\mu_k(t)
\end{align*}
where the terms $\rho_e$, $d_k(t)$, $d_k$, $\lambda_1$, $\lambda_2$, $\mu_e$ and $\mu_k$ are also defined in that section.
\begin{definition}
For each event $p\in P$, we define:
\begin{itemize}
\item $ALG^p$ to be the cost $ALG$ pays during $p$.
\item $OPT^p$ to be the cost $OPT$ pays during $p$.
\item For any parameter $z$, $\Delta z^p$ to be the value of $z$ after $p$ minus the value of $z$ before $p$.
\end{itemize}
\end{definition}
We have $ALG(\sigma)\leq\sum_{p\in P}ALG^p$ \footnote{The reason why we have inequality instead of equality is that the change in Definition \ref{delay.change_cost_model_alg} may cause $ALG(\sigma)$ to increase.} and
$OPT(\sigma)=\sum_{p\in P}OPT^p$. Observe that $\Phi$ starts with $0$ (since at the beginning, the lists of $ALG$ and $OPT$ are identical) and is always non-negative. Therefore, if we prove that for each event $p\in P$, we have
$$ALG^p+\Delta\Phi^p\leq 336\cdot OPT^p$$
then, by summing it up for over all the events, we will be able to prove Theorem \ref{delay.thm.1}. Note that we do not care about the actual value $\Phi(t)$ by itself, for any time $t$. We will only measure the change of $\Phi$ as a result of each type of event in order to prove that the inequality mentioned above indeed holds. The 7 types of events that we will discuss are:
\begin{itemize}
\item The event when a request which is active in both $ALG$ and $OPT$ suffers a delay penalty of $\epsilon$ (event type 1) is analyzed in Lemma \ref{delay.lemma.event_type_1}.
\item The event when a request which is active in $ALG$ but not in $OPT$ (because $OPT$ has already served it) suffers a delay penalty of $\epsilon$ (event type 2) is analyzed in Lemma \ref{delay.lemma.event_type_2}.
\item The event when a request which is active in $OPT$ but not in $ALG$ (because $ALG$ has already served it) suffers a delay penalty of $\epsilon$ (event type 3) is analyzed in Lemma \ref{delay.lemma.event_type_3}.
\item The event when $ALG$ has a prefix request counters event (event type 4) is analyzed in Lemma \ref{delay.lemma.event_type_4}.
\item The event when $ALG$ has an element counter event (event type 5) is analyzed in Lemma \ref{delay.lemma.event_type_5}.
\item The event when $OPT$ serves multiple requests together (event type 6) is analyzed in Lemma \ref{delay.lemma.event_type_6}.
\item The event when $OPT$ swaps two elements (event type 7) is analyzed in Lemma \ref{delay.lemma.event_type_7}.
\end{itemize}
We now begin with taking care of event of type 1.
\begin{lemma}
\label{delay.lemma.event_type_1}
    Let $p\in P$ be the event where a request $r_k$ (where $k\in[m]$) which is active in both $ALG$ and $OPT$ suffers a delay penalty of $\epsilon$. We have
$$ALG^p+\Delta\Phi^p\leq 42\cdot OPT^p$$
\end{lemma}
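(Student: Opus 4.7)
The plan is to directly unpack the definition of each term of $\Phi$ and observe that essentially only one term changes in a damaging way during this event. Under the modified charging scheme of Definition \ref{delay.change_cost_model_alg}, a delay penalty of $\epsilon$ costs $ALG$ exactly $6\epsilon$, so $ALG^p = 6\epsilon$; since $r_k$ is active in $OPT$ as well, $OPT$ also accrues $\epsilon$ of delay, giving $OPT^p = \epsilon$. Hence the target inequality reduces to showing $\Delta\Phi^p \leq 36\epsilon$.

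Next I would go term by term through the definition of $\Phi$. The event is a pure delay accrual on $r_k$: no list reorders, no requests are served, and $r_k$ stays in $\lambda_1(t)$ throughout. Since $r_k \notin \lambda_2(t)$, the third and fifth terms of $\Phi$ are unaffected, and since $OPT$ does not swap any elements the fourth term $48\sum_e \mu_e(t)$ is also unchanged. In the second term $36\sum_{k\in\lambda_1(t)} d_k(t)$, only the summand for this particular $k$ changes, and it grows by exactly $36\epsilon$. For the first term $\sum_{e} \rho_e(t)$, only $EC_{e_k}$ is affected (it goes up by $\epsilon$), so only $\rho_{e_k}$ moves; since $|I_{e_k}^t| \geq 0$ and $\rho_e$ is monotone non-increasing in $EC_e$, we get $\Delta \rho_{e_k}^p = -|I_{e_k}^t|\cdot \frac{8\epsilon}{x_{e_k}^t} \leq 0$.

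Combining the term-by-term analysis yields
\[
\Delta\Phi^p \;=\; -\,|I_{e_k}^t|\cdot \frac{8\epsilon}{x_{e_k}^t} \;+\; 36\epsilon \;\leq\; 36\epsilon,
\]
and therefore $ALG^p + \Delta\Phi^p \leq 6\epsilon + 36\epsilon = 42\epsilon = 42\cdot OPT^p$, as required. The one subtle point I would flag is that we have to justify the event being a ``pure'' infinitesimal delay accrual with no simultaneous algorithmic action: if the added $\epsilon$ would push $EC_{e_k}$ up to $x_{e_k}^t$ or push a prefix sum of request counters up to its position, that triggers an element-counter event or a prefix-request-counter event, which are handled separately as events of type $5$ and $4$, respectively. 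So the only work here is this clean superposition argument, and no further estimation or case analysis is needed.
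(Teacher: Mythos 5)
Your proposal is correct and follows essentially the same argument as the paper: reduce to $\Delta\Phi^p\leq 36\epsilon$, note that only the second term increases (by exactly $36\epsilon$) while $\rho_{e_k}$ changes by $-8|I_{e_k}^t|\epsilon/x_{e_k}^t\leq 0$ and all other terms are unaffected. The remark about separating out any simultaneously triggered counter events is consistent with the paper's convention that every occurrence is a superposition of the seven event types.
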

\begin{proof}
Let $t$ be the time when $p$ begins. We have $ALG^p=6\epsilon$ (due to Definition \ref{delay.change_cost_model_alg}) and $OPT^p=\epsilon$, because $r_k$ is still active in $OPT$. For this reason we also have $k\in\lambda_1(t)$. Therefore, we are left with the task to prove that $$\Delta\Phi^p\leq 36\epsilon$$
We have $\Delta {d_k}^p=\Delta RC_k^p=\Delta EC_{e_k}^p=\epsilon$
and $\Delta\rho_{e_k}^p=-8\cdot|I_{e_k}^t|\cdot\frac{\epsilon}{x_{e_k}^t}\leq 0$. Apart from these changes, there are no changes that affect the potential function $\Phi$ as a result of the event $p$. Therefore the reader may observe that we indeed have
$
    \Delta\Phi^p \leq 36\epsilon
$.
\end{proof}
We now deal with event of type 2.
\begin{lemma}
\label{delay.lemma.event_type_2}
    Let $p\in P$ be the event where a request $r_k$ (where $k\in[m]$) which is active in $ALG$ but not in $OPT$ (because $OPT$ has already served it) suffers a delay penalty of $\epsilon$. We have
$$ALG^p+\Delta\Phi^p\leq 0(=OPT^p)$$
\end{lemma}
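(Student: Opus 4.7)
The plan is to reduce the required inequality to $\Delta\Phi^p \le -6\epsilon$, using $OPT^p=0$ and $ALG^p=6\epsilon$ (the latter by Definition~\ref{delay.change_cost_model_alg}). First I would identify which components of $\Phi$ move during $p$. Because $r_k$ is active in $ALG$ and already served by $OPT$, we have $k\in\lambda_2(t)$ throughout the event; the delay raises $d_k(t)$ and $EC_{e_k}^t$ by $\epsilon$ each, while the fixed global value $d_k$, every list position, and every $\mu$ quantity stay put. A term-by-term inspection then yields
\[
\Delta\Phi^p \;=\; -\frac{8\,|I_{e_k}^t|\,\epsilon}{x_{e_k}^t} \;-\; 6\epsilon\cdot\mathbbm{1}[x_k\le 4y_k] \;-\; \frac{8\epsilon\,\mu_k(t)}{x_{e_k}^t},
\]
the $36\sum_{\lambda_1}d_k(t)$ term being unaffected since $k\notin\lambda_1(t)$ and the $48\sum_e\mu_e(t)$ term being unaffected since no swaps occur. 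After dividing by $\epsilon$ the target inequality becomes
\[
\frac{8\,|I_{e_k}^t|}{x_{e_k}^t} \;+\; 6\cdot\mathbbm{1}[x_k\le 4y_k] \;+\; \frac{8\,\mu_k(t)}{x_{e_k}^t} \;\ge\; 6.
\]

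The main difficulty will be reconciling the two ``snapshots'': $x_k$ and $y_k$ are frozen at the time $OPT$ served $r_k$, while $|I_{e_k}^t|$, $x_{e_k}^t$, $y_{e_k}^t$ refer to the current time $t$. Lemma~\ref{delay.lemma.xk_lemma}(1) immediately gives $x_{e_k}^t = x_k$ because $r_k$ is still active in $ALG$. For the vertical coordinate, $OPT$ may have swapped $e_k$ further back in its list after serving it; since $\mu_k(t)$ counts precisely those back-swaps, we have $y_{e_k}^t \le y_k + \mu_k(t)$.

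I would then split into three cases. If $x_k\le 4y_k$, the indicator alone supplies the $6$ and we are done. Otherwise, apply Lemma~\ref{delay.lemma.help_for_2_different_events} to $e_k$ at time $t$. If $x_{e_k}^t > 4y_{e_k}^t$, the lemma forces $|I_{e_k}^t|\ge \tfrac{3}{4}x_{e_k}^t$, giving $8|I_{e_k}^t|/x_{e_k}^t \ge 6$ by itself. The remaining and most delicate case is $x_k > 4y_k$ together with $x_{e_k}^t \le 4y_{e_k}^t$; here $OPT$ has demonstrably shifted $e_k$ far back after serving it, so $\mu_k(t)$ must be substantial. Combining the trivial bound $|I_{e_k}^t|\ge x_{e_k}^t - y_{e_k}^t$ (from Observation~\ref{delay.observation.number_of_good_elements_observatioon} together with $|NI_{e_k}^t|\le y_{e_k}^t - 1$) with $\mu_k(t)\ge y_{e_k}^t - y_k$ yields $|I_{e_k}^t|+\mu_k(t) \ge x_{e_k}^t - y_k > \tfrac{3}{4}x_{e_k}^t$, where the strict inequality uses $y_k < x_k/4 = x_{e_k}^t/4$. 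This closes the case analysis and therefore proves the lemma.
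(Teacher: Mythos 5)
Your proposal is correct, and the first half of it (the computation of $\Delta\Phi^p$, the observation that only the first, third and fifth terms of $\Phi$ move, the reduction to $\frac{8|I_{e_k}^t|}{x_{e_k}^t}+6\cdot\mathbbm{1}[x_k\le 4y_k]+\frac{8\mu_k(t)}{x_{e_k}^t}\ge 6$, and the use of Lemma \ref{delay.lemma.xk_lemma} to identify $x_{e_k}^t=x_k$) matches the paper's proof exactly. Where you diverge is in how the residual inequality is closed. The paper first assumes $OPT$ never back-swapped $e_k$ after serving $r_k$, so that $\mu_k(t)=0$ and $y_{e_k}^t\le y_k$, which lets it replace $y_k$ by $y_{e_k}^t$ and invoke Lemma \ref{delay.lemma.help_for_2_different_events}; it then removes the assumption by a swap-by-swap perturbation argument showing that each back-swap decreases $|I_{e_k}^t|$ by at most $1$ while increasing $\mu_k(t)$ by exactly $1$, so $|I_{e_k}^t|+\mu_k(t)$ never drops. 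You instead aggregate all of $OPT$'s swaps at once via the two inequalities $y_{e_k}^t\le y_k+\mu_k(t)$ and $|I_{e_k}^t|\ge x_{e_k}^t-y_{e_k}^t$ (the latter from Observation \ref{delay.observation.number_of_good_elements_observatioon}), obtaining $|I_{e_k}^t|+\mu_k(t)\ge x_{e_k}^t-y_k>\tfrac34 x_{e_k}^t$ whenever $x_k>4y_k$; this is valid regardless of the sign of $y_{e_k}^t-y_k$ since $\mu_k(t)\ge 0$, and in fact makes your middle case ($x_{e_k}^t>4y_{e_k}^t$, handled via Lemma \ref{delay.lemma.help_for_2_different_events}) redundant. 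Your closed-form accounting is somewhat more direct; the paper's perturbation style has the advantage of being reused verbatim in the analogous step of Lemma \ref{delay.lemma.event_type_5}. Either way the lemma is proved.
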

\begin{proof}
We have $OPT^p=0$, because $OPT$ has already served $r_k$. For this reason we also have $k\in\lambda_2(t)$. Let $t$ be the time when $p$ begins. We have $ALG^p=6\epsilon$ (due to Definition \ref{delay.change_cost_model_alg}). Therefore, we are left with the task to prove that $$6\epsilon+\Delta\Phi^p\leq 0$$
We have $\Delta {d_k}^p=\Delta RC_k^p=\Delta EC_{e_k}^p=\epsilon$ and $\Delta\rho_{e_k}^p=-8\cdot|I_{e_k}^t|\cdot\frac{\epsilon}{x_{e_k}^t}$ . Apart from these changes, there are no changes that affect the potential function $\Phi$ as a result of the event $p$. Therefore
$$
    \Delta\Phi^p = -8\cdot|I_{e_k}^t|\cdot\frac{\epsilon}{x_{e_k}^t}-6\epsilon\cdot\mathbbm{1}{[x_k\leq4y_k]}-8\cdot\frac{\epsilon}{x_{e_k}^t}\cdot\mu_k(t)
$$
Since $r_k$ is an active request in $ALG$, we have that $x_{e_k}^t=x_k$ (Lemma \ref{delay.lemma.xk_lemma}). Therefore, our task is to prove that 
$$
    6\epsilon-8\cdot|I_{e_k}^t|\cdot\frac{\epsilon}{x_{e_k}^t}-6\epsilon\cdot\mathbbm{1}{[x_{e_k}^t\leq4y_k]}-8\cdot\frac{\epsilon}{x_{e_k}^t}\cdot\mu_k(t)\leq 0
$$
Multiplying the above inequality by $\frac{x_{e_k}^t}{2\epsilon}>0$ yields that we need to prove the following:
$$
    3x_{e_k}^t-4\cdot|I_{e_k}^t|-3x_{e_k}^t\cdot\mathbbm{1}{[x_{e_k}^t\leq4y_k]}-4\cdot\mu_k(t)\leq 0
$$
i.e. we need to prove that
\begin{equation} \label{delay_event__epsilon_2_inequality_to_prove}
    3x_{e_k}^t-4\cdot(|I_{e_k}^t|+\mu_k(t))-3x_{e_k}^t\cdot\mathbbm{1}{[x_{e_k}^t\leq4y_k]}\leq 0
\end{equation}
For now, temporary assume that ever since $OPT$ served the request $r_k$ until time $t$, $OPT$ did not increase the position of the element $e_k$ in its list. Later we remove this assumption. The assumption means that $\mu_k(t)=0$. It also means that we must have $y_{e_k}^t\leq y_k$ (A strict inequality $y_{e_k}^t< y_k$ occurs in case $OPT$ performed at least one swap between $e_k$ and the previous element in its list after it served $r_k$ and before time $t$, thus decreasing the position of $e_k$ in its list). Therefore, we can replace $y_k$ with $y_{e_k}^t$ in Inequality \ref{delay_event__epsilon_2_inequality_to_prove} and conclude that it is sufficient for us to prove that $$3x_{e_k}^t-4\cdot|I_{e_k}^t|-3x_{e_k}^t\cdot\mathbbm{1}{[x_{e_k}^t\leq4y_{e_k}^t]}\leq 0$$
This was proven in Lemma \ref{delay.lemma.help_for_2_different_events}. Now we remove the assumption that ever since $OPT$ served $r_k$ until time $t$, $OPT$ did not increase the position of $e_k$ in its list. Our task is to prove that Inequality \ref{delay_event__epsilon_2_inequality_to_prove} continues to hold nonetheless. Assume that after $OPT$ served $r_k$ and before time $t$, $OPT$ performed a swap between $e_k$ and another element $e\in\mathbb{E}\backslash\{e_k\}$ where the position of $e_k$ in $OPT$s list was increased as a result of this swap (in other words, $e$ was the next element after $e_k$ in $OPT$s list before this swap). We should verify that Inequality \ref{delay_event__epsilon_2_inequality_to_prove} continues to hold nonetheless. The swap between $e$ and $e_k$ caused $|I_{e_k}^t|$ to either stay the same or decrease by $1$:
\begin{itemize}
    \item If $e$ was before $e_k$ in $ALG$s list at time $t$ then we had $e\in I_{e_k}^t$ before the swap between $e$ and $e_k$ in $OPT$s list and we will have $e\notin I_{e_k}^t$ after this swap (we will have $e\in NI_{e_k}^t$ after this swap).
    \item If $e_k$ was before $e$ in $ALG$s list at time $t$ then $I_{e_k}^t$ is not changed as a result of the swap between $e$ and $e_k$ in $OPT$s list.
\end{itemize}
Therefore, the swap between $e$ and $e_k$ in $OPT$s list caused $|I_{e_k}^t|$ to decrease by at most $1$. However, this swap certainly caused $\mu_k(t)$ to increase by $1$, which means it caused $|I_{e_k}^t|+\mu_k(t)$ to increase by at least $0$. Therefore, this swap could only decrease the left term of Inequality \ref{delay_event__epsilon_2_inequality_to_prove}, which means it will continue to hold nonetheless \footnote{Note that this swap doesn't affect the term $\mathbbm{1}{[x_k\leq4y_k]}$: $x_k$ is the position of $e_k$ in $ALG$s list so a swap in $OPT$ doesn't affect it. As for $y_k$, it is the position of $e_k$ in $OPT$s list when $OPT$ served $r_k$. Swaps which have been done after that (such as the swap we consider now) do not affect $y_k$.}. By applying the above argument for each swap $OPT$ done after it served $r_k$ until time $t$ which increased the position of $e_k$ in its list - we get that Inequality \ref{delay_event__epsilon_2_inequality_to_prove} will continue to hold nonetheless, thus the proof of the lemma is complete.
\end{proof}
Now that we dealt with event of type 2, we take care of event of type 3.
\begin{lemma}
\label{delay.lemma.event_type_3}
    Let $p\in P$ be the event where a request $r_k$ (where $k\in[m]$) which is active in $OPT$ but not in $ALG$ (because $ALG$ has already served it) suffers a delay penalty of $\epsilon$. We have
$$ALG^p+\Delta\Phi^p\leq OPT^p$$
\end{lemma}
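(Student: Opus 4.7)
The plan is to show that this event is essentially invisible to both $ALG$ and the potential function, so the inequality holds trivially with a wide margin. Because $r_k$ has already been served by $ALG$, it is either frozen (with or without $RC_k$) or deleted in $ALG$. Under the modified cost model of Definition \ref{delay.change_cost_model_alg}, $ALG$ is charged $6$ times the increment in $d_k(t)$; but by Observation \ref{delay.lemma.dk_frozen}, once $r_k$ is frozen in $ALG$ the quantity $d_k(t)$ stays fixed at $d_k$ (and similarly $RC_k^t$ and $EC_{e_k}^t$ are unchanged by this event). Hence $ALG^p = 0$, while on the $OPT$ side $OPT^p = \epsilon$ since $r_k$ is still active in $OPT$. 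It therefore suffices to show $\Delta\Phi^p \leq \epsilon$; in fact I will prove the stronger equality $\Delta\Phi^p = 0$.

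To verify $\Delta\Phi^p = 0$, I would go through each of the five terms of $\Phi$ and check that every ingredient is unaltered by the event. No swap is performed, and no request changes status in either $ALG$ or $OPT$ (neither is served, neither is deleted, no element-counter event fires). Consequently the positions $x_e^t, y_e^t$, the inversion sets $I_e^t$, the memberships in $\lambda_1(t)$ and $\lambda_2(t)$, and the swap-counters $\mu_e(t), \mu_k(t)$ are all unchanged. Combined with the unchanged values of $d_k(t)$, $d_k$, $RC_k^t$ and $EC_e^t$ noted above, each of the five terms $\rho_e(t)$, $36 \sum_{k \in \lambda_1(t)} d_k(t)$, $\sum_{k \in \lambda_2(t)}(42 d_k - 6 d_k(t))\cdot\mathbbm{1}[x_k\leq 4 y_k]$, $48 \sum_{e \in \mathbb{E}} \mu_e(t)$, and $8 \sum_{k \in \lambda_2(t)} \frac{d_k - d_k(t)}{x_{e_k}^t}\mu_k(t)$ takes the same value before and after $p$.

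Putting the two steps together yields $ALG^p + \Delta\Phi^p = 0 + 0 = 0 \leq \epsilon = OPT^p$, as required. There is no real obstacle in this lemma; the only thing to be careful about is the bookkeeping that justifies $ALG^p = 0$, which hinges on the modified cost model and on the fact that, once $r_k$ is served by $ALG$, neither $d_k(t)$ nor any counter associated with $r_k$ or $e_k$ is further incremented by delay suffered solely in $OPT$.
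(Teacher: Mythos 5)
Your proof is correct and follows essentially the same route as the paper: both establish $OPT^p=\epsilon$, $ALG^p=0$ (since $d_k(t)$ is frozen once $ALG$ serves $r_k$), and $\Delta\Phi^p=0$, giving $0\leq\epsilon$. If anything, your term-by-term verification that $\Phi$ is unchanged is slightly more careful than the paper's one-line justification.
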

\begin{proof}
We have $OPT^p=\epsilon$. We have $ALG^p=0$, because $ALG$ has already served $r_k$. For this reason we also have $k\notin\lambda_1(t),k\notin\lambda_2(t)$. Therefore, the reader can verify that $\Delta\Phi^p=0$ i.e. the potential function is not affected by the event $p$. Therefore we have $$ALG^p+\Delta\Phi^p=0+0<\epsilon= OPT^p$$
\end{proof}
Now we take care of event of type 4.
\begin{lemma} 
\label{delay.lemma.event_type_4}
Let $p\in P$ be the event where $ALG$ has a prefix request counters event. We have
$$ALG^p+\Delta\Phi^p= 0=OPT^p$$
\end{lemma}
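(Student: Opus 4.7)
The plan is to verify three separate equalities, namely $ALG^p = 0$, $OPT^p = 0$, and $\Delta\Phi^p = 0$. The equality $OPT^p = 0$ is immediate, since a prefix request counters event is internal to $ALG$ and $OPT$ performs no action during it. For $ALG^p = 0$ I would invoke Definition \ref{delay.change_cost_model_alg}: in the modified charging scheme, $ALG$ is charged only through the $6\epsilon$ factor on incurred delay, and no delay accrues during this instantaneous event (which only performs accesses and a bookkeeping deletion of request counters). Once these two are in hand, the entire task reduces to proving $\Delta\Phi^p = 0$.

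For the change in potential, the strategy is to walk through every quantity appearing in each of the five terms of $\Phi$ and argue it is invariant across $p$. Concretely, the event does three things: (i) it serves the requests on the first $2\ell$ elements, (ii) it deletes the request counters of requests for the first $\ell$ elements, and (iii) it causes the request counters in positions $\ell+1$ through $2\ell$ and the first $2\ell$ element counters to cease growing in the future. None of these involve a swap or a move-to-front, so the list orders, the positions $x_e^t, y_e^t$, the inversion sets $I_e^t$, and the sets $NI_e^t$ are all unchanged. Element counter values $EC_e^t$ are not modified at this event — only their future growth rate drops to zero — and hence the factors $28 - 8\,EC_e^t/x_e^t$ in $\rho_e$ remain the same. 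The delay quantities $d_k(t)$ and $d_k$ are unaffected because the event is instantaneous and contains no delay accumulation, while $\mu_e(t)$ and $\mu_k(t)$ track only $OPT$'s swaps, which are untouched. Finally, $x_k$ and $y_k$ are fixed at the moment $OPT$ served $r_k$, so the indicator $\mathbbm{1}[x_k \leq 4 y_k]$ is invariant as well.

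The main (and in fact only) subtlety, which I would flag as the key step, is that operation (i) converts \textbf{active} requests in $ALG$ to \textbf{frozen} requests, and operation (ii) converts requests that were \textbf{frozen with $RC_k$} (or active) into requests that are \textbf{frozen without $RC_k$}. One must check these do not alter $\lambda_1(t)$ or $\lambda_2(t)$. For this I would observe that $\lambda(t)$ was defined to contain every request that is either active or frozen in $ALG$, so both transitions preserve membership in $\lambda(t)$. The partition between $\lambda_1(t)$ and $\lambda_2(t)$ is then determined solely by whether the request is still active in $OPT$, and $OPT$'s state cannot change during an $ALG$-only event. Hence $\lambda_1(t)$ and $\lambda_2(t)$ are identical as index sets before and after $p$, every summand in each of the five terms is unchanged, and so $\Delta\Phi^p = 0$, completing the proof.
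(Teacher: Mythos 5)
Your proposal is correct and follows essentially the same route as the paper: $OPT^p=0$ trivially, $ALG^p=0$ by the modified charging scheme of Definition \ref{delay.change_cost_model_alg}, and $\Delta\Phi^p=0$ because the only state changes are active requests becoming frozen and requests frozen with $RC_k$ becoming frozen without $RC_k$, neither of which affects $\lambda(t)$, its partition into $\lambda_1(t)$ and $\lambda_2(t)$, or any other quantity appearing in $\Phi$. Your term-by-term verification is just a more explicit rendering of the paper's one-line justification.
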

\begin{proof}
Obviously, we have $OPT^p=0$. We also have $ALG^p=0$, due to Definition \ref{delay.change_cost_model_alg}, which implies that $ALG$ is charged only when a request suffers delay penalty and not during prefix request counters events and element counter events. Therefore, we are left with the task to prove that $\Delta\Phi^p=0$. Indeed, the only things which can happen as a result of this prefix request counters event is that active requests may become frozen and that requests $k\in[m]$ which are frozen with $RC_k$ may become frozen without $RC_k$. Neither of these changes have an effect on the potential function $\Phi$.
\end{proof}
Now we take care of event of type 5 - an element counter event in $ALG$. The following lemma will be used in the analysis of this event. After we take care of the event when $ALG$ has an element counters event, we will be left with the task to analyze the events which concern $OPT$ (events of types 6 and 7).
\begin{lemma}
\label{delay.lemma.bound_ro_in_element_counter_event}
    Let $p\in P$ be the event where $ALG$ has an element counter event on the element $e\in\mathbb{E}$ at time $t$. We have
$$\sum_{e^{'}\in\mathbb{E}}\Delta\rho_{e^{'}}^p\leq 36x_e^t-48\cdot|I_e^t|$$
\end{lemma}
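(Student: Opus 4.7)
The plan is to rewrite $\sum_{e'\in\mathbb{E}}\rho_{e'}(t)$ as a sum over inversion pairs. Each pair $(i,j)$ with $i$ before $j$ in $ALG$'s list and $i$ after $j$ in $OPT$'s list contributes exactly $28-8\cdot EC_j^t/x_j^t$ to $\rho_j$, so it suffices to track how this element-counter event modifies the weighted inversion sum.

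First I would catalogue the structural effect of the event. Writing $\ell=x_e^t$, at the moment of the event we have $EC_e^t=\ell$. The move-to-front places $e$ at position $1$, shifts every element originally at positions $1,\ldots,\ell-1$ one step back in $ALG$'s list, zeros $EC_e$, and leaves the positions of elements at positions $\geq\ell+1$ as well as every other element counter unchanged. Hence the weighted inversion sum changes in precisely three ways: (a) the $|I_e^t|$ inversions of the form $(i,e)$ with $i\in I_e^t$ are destroyed, each of weight exactly $20$ since at this instant $EC_e/x_e=1$; (b) for every $e'\in NI_e^t$ a new inversion $(e,e')$ appears because $e$ is now before $e'$ in $ALG$ while still after $e'$ in $OPT$, contributing weight at most $28$; (c) for every element $j$ originally at position $p_j\in\{1,\ldots,\ell-1\}$, each inversion ending at $j$ has its weight incremented by $8EC_j\bigl(\tfrac{1}{p_j}-\tfrac{1}{p_j+1}\bigr)$ since $x_j$ goes from $p_j$ to $p_j+1$ while $EC_j$ stays fixed.

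To bound (c), I would appeal to two invariants that hold at the instant of the event: $EC_j\leq p_j$ for every $j$ (otherwise an element-counter event on $j$ would have fired no later than the event on $e$), and $|I_j|\leq p_j-1$ by Observation~\ref{delay.observation.number_of_good_elements_observatioon}. Together these yield a contribution of at most a constant (namely $\tfrac{8(p_j-1)}{p_j+1}\le 8$) per shifted element, hence at most $8(\ell-1)$ overall for (c). Combining (a), (b), and (c), and using Observation~\ref{delay.observation.number_of_good_elements_observatioon} to substitute $|NI_e^t|=\ell-1-|I_e^t|$, the total change collapses to a bound of the form $-20|I_e^t|+28(\ell-1-|I_e^t|)+8(\ell-1)=36\ell-36-48|I_e^t|$, which is clearly at most the desired $36x_e^t-48|I_e^t|$.

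The main obstacle is part (c): one has to recognise that both $I_e^t$ and $NI_e^t$ elements experience the positional shift, so the coefficient-increase contribution must be summed over all $\ell-1$ shifted elements, not merely those involved in the destroyed or created inversions. A secondary subtlety is that the argument requires $EC_j\le p_j$ simultaneously for every $j$ at positions $1,\ldots,\ell-1$, which is exactly the algorithmic invariant that no other element-counter event is due; the slack of $-36$ in the final bound is then absorbed into the stated inequality without difficulty.
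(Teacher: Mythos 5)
Your proposal is correct and follows essentially the same argument as the paper's proof: the destroyed inversions at $e$ each carry weight exactly $20$ (since $EC_e^t=x_e^t$), the new inversions created with elements of $NI_e^t$ cost at most $28$ each, and the positional shift of the first $x_e^t-1$ elements adds at most $8$ per element via the bounds $EC_j\le x_j^t$ and $|I_j^t|\le x_j^t-1$, yielding the identical total $36x_e^t-36-48|I_e^t|$. The only difference is presentational --- you group the accounting by inversion pairs and separate the ``weight increase from the shift'' from the ``new inversion'' contribution, while the paper groups by element, splitting into $e'\in I_e^t$ and $e'\in NI_e^t$ --- but the computation is the same.
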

\begin{proof}
The element counter $EC_e$ is zeroed in the event $p$. Its value before the event was equal to $x_e^t$ because this is the requirement for an element counter on $e$ to occur. Due to the move-to-front performed on $e$ by $ALG$, $x_e^t$ will be changed to $1$ and $|I_e^t|$ is changed to $0$ (We will have $I_e^t=\emptyset$ after the event $p$ because $e$ is going to be the first element in $ALG$s list, i.e. there will be no elements before it in $ALG$s list after the event). Therefore we have
\begin{align*}
    \Delta\rho_e^p =0\cdot(28-8\cdot\frac{0}{1})-|I_e^t|\cdot(28-8\cdot\frac{x_e^t}{x_e^t}) = -20\cdot |I_e^t|
\end{align*}
For each element $e^{'}\in\mathbb{E}$ that satisfies $x_{e^{'}}^t>x_e^t$it is easy to see that we have $\Delta\rho_{e^{'}}^p=0$.

For each element $e^{'}\in I_e^t$ we have that $EC_{e^{'}}^t$ doesn't change as a result of the event $p$ but $x_{e^{'}}^t$ will be increased by $1$ as a result of the move-to-front on $e$. $I_{e^{'}}^t$ will stay the same as a result of this move-to-front on $e$ in $ALG$s list because $e^{'}$ will still be before $e$ in $OPT$s list. Therefore we have
\begin{align*}
    \Delta\rho_{e^{'}}^p & = |I_{e^{'}}^t|\cdot(28-8\cdot\frac{EC_{e^{'}}^t}{x_{e^{'}}^t+1})-|I_{e^{'}}^t|\cdot(28-8\cdot\frac{EC_{e^{'}}^t}{x_{e^{'}}^t}) \\
    & = 8\cdot|I_{e^{'}}^t|\cdot EC_{e^{'}}^t\cdot(\frac{1}{x_{e^{'}}^t}-\frac{1}{x_{e^{'}}^t+1}) \\
    & = 8\cdot\frac{|I_{e^{'}}^t|\cdot EC_{e^{'}}}{x_{e^{'}}^t\cdot(x_{e^{'}}^t+1)}
    \leq 8
\end{align*}
where the inequality is due to:
\begin{itemize}
    \item $|I_{e^{'}}^t|\leq x_{e^{'}}^t-1$ because there are $x_{e^{'}}^t-1$ elements which are located before $e^{'}$ in $ALG$s list at time $t$ (before the move-to-front on $e$) and this gives us this bound to $|I_{e^{'}}^t|$.
    \item $EC_{e^{'}}\leq x_{e^{'}}^t$.
\end{itemize}
For each element $e^{'}\in NI_e^t$ we have that $EC_{e^{'}}^t$ doesn't change as a result of the event $p$ but $x_{e^{'}}^t$ will be increased by $1$ as a result of the move-to-front on $e$. $|I_{e^{'}}^t|$ 
will be increased by $1$ as a result of this move-to-front on $e$ for the following reason: Before the move-to-front on $e$, $e^{'}$ was located before $e$ in both $ALG$s list and $OPT$s list (and thus we had $e^{'}\notin I_{e^{'}}^t$) but after the move-to-front on $e$ in $ALG$s list, $e$ will be before $e^{'}$ in $ALG$s list (and thus we will have $e^{'}\in I_{e^{'}}^t$). Therefore we have
\begin{align*}
    \Delta\rho_{e^{'}}^p & = (|I_{e^{'}}^t|+1)\cdot(28-8\cdot\frac{EC_{e^{'}}^t}{x_{e^{'}}^t+1})-|I_{e^{'}}^t|\cdot(28-8\cdot\frac{EC_{e^{'}}^t}{x_{e^{'}}^t}) \\
    & = 8\cdot|I_{e^{'}}^t|\cdot(\frac{1}{x_{e^{'}}^t}-\frac{1}{x_{e^{'}}^t+1})+28-8\cdot\frac{EC_{e^{'}}^t}{x_{e^{'}}^t+1} \\
    & \leq 8\cdot|I_{e^{'}}^t|\cdot(\frac{1}{x_{e^{'}}^t}-\frac{1}{x_{e^{'}}^t+1})+28\\
    & = 8\cdot\frac{|I_{e^{'}}^t|\cdot EC_{e^{'}}}{x_{e^{'}}^t\cdot(x_{e^{'}}^t+1)}+28\\
    & \leq 8+28
     = 36
\end{align*}
where the second inequality is that we have again $|I_{e^{'}}^t|\leq x_{e^{'}}^t-1$ and $EC_{e^{'}}\leq x_{e^{'}}^t$.
Therefore we have 
\begin{align*}
\sum_{e^{'}\in\mathbb{E}}\Delta\rho_{e^{'}}^p & = \sum_{e^{'}\in\mathbb{E}:x_{{e^{'}}^t}<x_e^t}\Delta\rho_{e^{'}}^p+\Delta\rho_e^p+\sum_{e^{'}\in\mathbb{E}:x_{{e^{'}}^t}>x_e^t}\Delta\rho_{e^{'}}^p \\
& = \sum_{e^{'}\in I_e^t}\Delta\rho_{e^{'}}^p+\sum_{e^{'}\in NI_e^t}\Delta\rho_{e^{'}}^p+\Delta\rho_e^p+\sum_{e^{'}\in\mathbb{E}:x_{{e^{'}}^t}>x_e^t}\Delta\rho_{e^{'}}^p\\
& \leq \sum_{e^{'}\in I_e^t}8+\sum_{e^{'}\in NI_e^t}36-20\cdot |I_e^t|+\sum_{e^{'}\in\mathbb{E}:x_{{e^{'}}^t}>x_e^t}0 \\
& = 8\cdot|I_e^t|+36\cdot|NI_e^t|-20\cdot |I_e^t|\\
& = 36\cdot|NI_e^t|-12\cdot |I_e^t|\\
& = 36\cdot(x_e^t-1-|I_e^t|)-12\cdot |I_e^t|\\
& = 36x_e^t-48\cdot|I_e^t|-36
 \leq 36x_e^t-48\cdot|I_e^t|
\end{align*}
\end{proof}
Now we are ready to take care of event of type 5.
\begin{lemma}
\label{delay.lemma.event_type_5}
    Let $p\in P$ be the event where $ALG$ has an element counter event on the element $e\in\mathbb{E}$ at time $t$. We have
$$ALG^p+\Delta\Phi^p\leq 0(=OPT^p)$$
\end{lemma}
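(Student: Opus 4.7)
The plan is to show $\Delta\Phi^p \leq 0$, which suffices since $OPT^p = 0$ trivially and $ALG^p = 0$ by the reformulated cost model (Definition 4.13). First I would quantify the change in each of the five terms of $\Phi$. For the first term, Lemma 5.10 already yields $\sum_{e' \in \mathbb{E}} \Delta\rho_{e'}^p \leq 36 x_e^t - 48|I_e^t|$. For the second and third terms, observe that the event completes the life cycle of every request $r_k \in \lambda(t)$ with $e_k = e$ (it serves them, deletes their request counters, and zeroes $EC_e$), so they leave $\lambda$: term 2 loses $36\sum_{k \in \lambda_1(t),\, e_k=e} d_k(t)$ and term 3 loses at least $36\sum_{k \in \lambda_2(t),\, e_k=e,\, x_k\leq 4y_k} d_k(t)$, using $42d_k - 6d_k(t) \geq 36d_k(t)$. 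The fourth term drops by $48\mu_e(t)$ since $\mu_e$ resets at the event, while all changes to the fifth term are non-positive (served active $\lambda_2$-requests either have their factor $(d_k - d_k(t))$ become zero or have $e_k = e$ and leave $\lambda_2$ entirely; unserved $\lambda_2$-requests lie past position $2\ell$ and are unaffected by the move-to-front of $e$).

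Next I would invoke the identity $EC_e^t = x_e^t = \sum_{k \in \lambda(t),\, e_k = e} d_k(t)$, which holds at the event since $EC_e$ equals the total delay of requests for $e$ still in $\lambda$. Partitioning this sum into $\lambda_1$-requests, $\lambda_2$-requests with $x_k \leq 4 y_k$, and $\lambda_2$-requests with $x_k > 4 y_k$, and combining with the term-change estimates above, one obtains
\[
\Delta\Phi^p \;\leq\; 36D - 48|I_e^t| - 48 \mu_e(t),
\]
where $D := \sum_{k \in \lambda_2(t),\, e_k = e,\, x_k > 4y_k} d_k(t)$ is the ``far mass'' and satisfies $D \leq x_e^t$.

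The main obstacle is the residual inequality $36D \leq 48|I_e^t| + 48\mu_e(t)$. When $x_e^t > 4y_e^t$ this is immediate from Lemma 5.5, since $|I_e^t| \geq \tfrac{3}{4}x_e^t \geq \tfrac{3}{4}D$. Otherwise $x_e^t \leq 4y_e^t$, and one may assume $D > 0$. Let $k^*$ be a ``far'' request with minimum $y_{k^*}$; since $x_{k^*} \leq x_e^t$ and $x_{k^*} > 4y_{k^*}$, we get $y_{k^*} < x_e^t/4 \leq y_e^t$. Because $r_{k^*}$ is still in $\lambda(t)$, no element counter event on $e$ has occurred since $OPT$ served $r_{k^*}$ (otherwise $r_{k^*}$ would have been deleted), hence $\mu_e(t) \geq \mu_{k^*}(t) \geq y_e^t - y_{k^*}$, the net back-shift of $e$ in $OPT$'s list requiring at least that many back-swaps. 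I then split on whether $y_e^t \leq x_e^t$ or $y_e^t > x_e^t$: in the former, combining $|I_e^t| \geq x_e^t - y_e^t$ (from Observation 6.5) with the bound on $\mu_e(t)$ gives $48(|I_e^t| + \mu_e(t)) \geq 48x_e^t - 12x_e^t = 36x_e^t \geq 36D$; in the latter, $\mu_e(t)$ alone suffices since $48\mu_e(t) > 48(y_e^t - x_e^t/4) > 36 x_e^t \geq 36D$. The delicate step is verifying $\mu_e(t) \geq \mu_{k^*}(t)$, which relies precisely on the observation that any element counter event on $e$ after $t_{k^*}$ would have deleted $r_{k^*}$, contradicting $r_{k^*} \in \lambda(t)$.
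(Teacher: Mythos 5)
Your proof is correct, and while it shares the obligatory skeleton with the paper's proof (the term-by-term accounting of $\Delta\Phi^p$, the bound $\sum_{e'}\Delta\rho_{e'}^p\leq 36x_e^t-48|I_e^t|$ from Lemma \ref{delay.lemma.bound_ro_in_element_counter_event}, and the identity $EC_e^t=x_e^t=\sum_{k\in\lambda(t),\,e_k=e}d_k(t)$), it resolves the residual inequality by a genuinely different route. The paper first assumes $OPT$ never moved $e$ backward since the last element counter event on $e$ (so $\mu_e(t)=0$ and $y_e^t\leq y_k$ for all relevant $k$, letting it replace $y_k$ by $y_e^t$ and invoke Lemma \ref{delay.lemma.help_for_2_different_events}), and then removes the assumption by a perturbative, swap-by-swap argument: each backward swap of $e$ in $OPT$'s list decreases $|I_e^t|$ by at most $1$ while increasing $\mu_e(t)$ by exactly $1$. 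You instead isolate the ``far mass'' $D=\sum_{k\in\lambda_2(t),\,e_k=e,\,x_k>4y_k}d_k(t)$, reduce to $36D\leq 48(|I_e^t|+\mu_e(t))$, dispatch the case $x_e^t>4y_e^t$ via $|I_e^t|\geq\frac{3}{4}x_e^t$, and in the complementary case produce a witness $k^*$ with $y_{k^*}<x_e^t/4\leq y_e^t$ and lower-bound $\mu_e(t)\geq\mu_{k^*}(t)\geq y_e^t-y_{k^*}$ by the net backward displacement of $e$ in $OPT$'s list; the containment of time windows needed for $\mu_e(t)\geq\mu_{k^*}(t)$ is correctly justified by the fact that an intervening element counter event on $e$ would have deleted $r_{k^*}$. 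Your argument is a direct, global one and avoids the paper's bookkeeping over individual swaps; the paper's perturbation argument is more mechanical but reuses Lemma \ref{delay.lemma.help_for_2_different_events} verbatim and parallels the analogous step in the type-2 event (Lemma \ref{delay.lemma.event_type_2}), which keeps the two proofs uniform. Both are valid; yours even proves a marginally stronger statement since your $D$ (defined via $x_k>4y_k$) is no larger than the quantity the paper must control (defined via $x_e^t>4y_k$).
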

\begin{proof}
Obviously, we have $OPT^p=0$. We also have $ALG^p=0$, due to Definition \ref{delay.change_cost_model_alg}, which implies that $ALG$ is charged only when a request suffers delay penalty and not during prefix request counters events and element counter events. Therefore, we are left with the task to prove that $$\Delta\Phi^p\leq0$$
We define the following:
$$K_1=\{k\in\lambda_1(t)|e_k=e\}$$
$$K_2=\{k\in\lambda_2(t)|e_k=e\}$$
$$K=\{k\in\lambda(t)|e_k=e\}=K_1\cup K_2$$
We have $\Delta\mu_e=-\mu_e(t)$ because $\mu_e$ is set to $0$ after this element counter event. For each $k\in K$, the request $r_k$ is going to be deleted in $ALG$ after the event $p$. This yields to 2 consequences:
\begin{itemize}
    \item $d_k(t)=d_k$.
    \item We will have $k\notin \lambda_1(t)$,$k\notin \lambda_2(t)$ after the event $p$.
\end{itemize}
Therefore
\begin{align*}
\Delta\Phi^p & = \sum_{e\in\mathbb{E}}\Delta\rho_e^p-36\cdot\sum_{k\in K_1}d_k(t)-\sum_{k\in K_2}(42d_k-6d_k(t))\cdot\mathbbm{1}{[x_k\leq4y_k]}\\
    & -48\cdot\mu_e(t)-8\cdot\sum_{k\in K_2}\frac{d_k-d_k(t)}{x_e^t}\cdot\mu_k(t)\\
    & = \sum_{e\in\mathbb{E}}\Delta\rho_e^p-36\cdot\sum_{k\in K_1}d_k-\sum_{k\in K_2}(42d_k-6d_k)\cdot\mathbbm{1}{[x_k\leq4y_k]}\\
    & -48\cdot\mu_e(t)-8\cdot\sum_{k\in K_2}\frac{d_k-d_k}{x_e^t}\cdot\mu_k(t)\\
    & = \sum_{e\in\mathbb{E}}\Delta\rho_e^p-36\cdot\sum_{k\in K_1}d_k-36\cdot\sum_{k\in K_2}d_k\cdot\mathbbm{1}{[x_k\leq4y_k]}-48\cdot\mu_e(t)\\
    & \leq 36x_e^t-48\cdot|I_e^t|-36\cdot\sum_{k\in K_1}d_k-36\cdot\sum_{k\in K_2}d_k\cdot\mathbbm{1}{[x_k\leq4y_k]}-48\cdot\mu_e(t)\\
    & = 36x_e^t-48\cdot(|I_e^t|+\mu_e(t))-36\cdot\sum_{k\in K_1}d_k-36\cdot\sum_{k\in K_2}d_k\cdot\mathbbm{1}{[x_k\leq4y_k]}\\
    & \leq 36x_e^t-48\cdot(|I_e^t|+\mu_e(t))-36\cdot\sum_{k\in K_1}d_k-36\cdot\sum_{k\in K_2}d_k\cdot\mathbbm{1}{[x_e^t\leq4y_k]}
\end{align*}
where in the second equality we used that $d_k(t)=d_k$ for each $k\in K$, in the first inequality we used Lemma \ref{delay.lemma.bound_ro_in_element_counter_event} and in the second inequality we used that $x_k\leq x_e^t$, which follows from Lemma \ref{delay.lemma.xk_lemma}. This allowed us to replace $x_k$ with $x_e^t$. Therefore we are left with the task to prove that
$$36x_e^t-48\cdot(|I_e^t|+\mu_e(t))-36\cdot\sum_{k\in K_1}d_k-36\cdot\sum_{k\in K_2}d_k\cdot\mathbbm{1}{[x_e^t\leq4y_k]}\leq 0$$
i.e. (after dividing the above inequality by $12$)
\begin{equation} \label{delay_element_counter_event_inequality_to_prove}
3x_e^t-4\cdot(|I_e^t|+\mu_e(t))-3\cdot\sum_{k\in K_1}d_k-3\cdot\sum_{k\in K_2}d_k\cdot\mathbbm{1}{[x_e^t\leq4y_k]}\leq 0
\end{equation}
For now, temporary assume that ever since the last element counter event on $e$ (and if $p$ is the first element counter event on $e$ - then ever since the beginning, i.e. time $0$) until time $t$, $OPT$ did not increase the position of the element $e$ in its list. Later we remove this assumption. The assumption means that $\mu_e(t)=0$. Plugging this in Inequality \ref{delay_element_counter_event_inequality_to_prove} yields that our task is to prove that 
$$3x_e^t-4\cdot|I_e^t|-3\cdot\sum_{k\in K_1}d_k-3\cdot\sum_{k\in K_2}d_k\cdot\mathbbm{1}{[x_e^t\leq4y_k]}\leq 0$$
The assumption also means that for each $k\in K_2$ we have $y_e^t\leq y_k$ (A strict inequality $y_e^t< y_k$ occurs in case $OPT$ performed at least one swap between $e$ and the previous element in its list after it served $r_k$ and before time $t$, thus decreasing the position of $e$ in its list). Therefore (by replacing $y_k$ with $y_e^t$ in the above inequality), it is sufficient to prove that $$3x_e^t-4\cdot|I_e^t|-3\cdot\sum_{k\in K_1}d_k-3\cdot\sum_{k\in K_2}d_k\cdot\mathbbm{1}{[x_e^t\leq4y_e^t]}\leq 0$$
We analyze more the left term of the above inequality. Observe that
\begin{align*}
    & 3x_e^t-4\cdot|I_e^t|-3\cdot\sum_{k\in K_1}d_k-3\cdot\sum_{k\in K_2}d_k\cdot\mathbbm{1}{[x_e^t\leq4y_e^t]}\\
    \leq & \text{ }3x_e^t-4\cdot|I_e^t|-3\cdot\sum_{k\in K_1}d_k\cdot\mathbbm{1}{[x_e^t\leq4y_e^t]}-3\cdot\sum_{k\in K_2}d_k\cdot\mathbbm{1}{[x_e^t\leq4y_e^t]}\\
    = & \text{ }3x_e^t-4\cdot|I_e^t|-3\cdot\sum_{k\in K}d_k\cdot\mathbbm{1}{[x_e^t\leq4y_e^t]}\\
    = & \text{ }3x_e^t-4\cdot|I_e^t|-3x_e^t\cdot\mathbbm{1}{[x_e^t\leq4y_e^t]}
\end{align*}
where the first equality is because $K=K_1\cup K_2$ and the second equality is because the event $p$ is an element counter event on the element $e$ at time $t$, thus we have
$$\sum_{k\in K}d_k=x_e^t$$
From the analysis above we get that it is sufficient to prove that
$$3x_e^t-4\cdot|I_e^t|-3x_e^t\cdot\mathbbm{1}{[x_e^t\leq4y_e^t]}\leq 0$$
This was proven in Lemma \ref{delay.lemma.help_for_2_different_events}. Now we remove the assumption that ever since the last element counter event on $e$ until time $t$, $OPT$ did not increase the position of $e$ in its list. Our task is to prove that Inequality \ref{delay_element_counter_event_inequality_to_prove} continues to hold nonetheless. Assume that after the last element counter event on $e$ (or, if $p$ is the first element counter event on $e$, assume that after time $0$) and before time $t$, $OPT$ performed a swap between $e$ and another element $e^{'}\in\mathbb{E}\backslash\{e\}$ where the position of $e$ in $OPT$s list was increased as a result of this swap (in other words, $e^{'}$ was the next element after $e$ in $OPT$s list before this swap). We should verify that Inequality \ref{delay_element_counter_event_inequality_to_prove} continues to hold nonetheless. Firstly, note that for each $k\in K$, this swap between $e$ and $e^{'}$ may only increase $y_k^t$, which may only decrease the term $-3\cdot\sum_{k\in K}d_k\cdot\mathbbm{1}{[x_e^t\leq4y_k]}$. The swap between $e$ and $e^{'}$ caused $|I_e^t|$ to either stay the same or decrease by $1$:
\begin{itemize}
    \item If $e^{'}$ was before $e$ in $ALG$s list at time $t$ then we had $e^{'}\in I_{e}^t$ before the swap between $e$ and $e^{'}$ in $OPT$s list and we will have $e^{'}\notin I_{e}^t$ after this swap (we will have $e^{'}\in NI_{e}^t$ after this swap).
    \item If $e$ was before $e^{'}$ in $ALG$s list at time $t$ then $I_{e}^t$ is not changed as a result of the swap between $e$ and $e^{'}$ in $OPT$s list.
\end{itemize}
Therefore, the swap between $e^{'}$ and $e$ in $OPT$s list caused $|I_{e}^t|$ to decrease by at most $1$. However, this swap certainly caused $\mu_e(t)$ to increase by $1$, which means it caused $|I_{e}^t|+\mu_e(t)$ to increase by at least $0$. Therefore, this swap could only decrease the left term of Inequality \ref{delay_element_counter_event_inequality_to_prove}, which means it will continue to hold nonetheless. By applying the above argument for each swap $OPT$ done after it served $r_k$ until time $t$ which increased the position of $e$ in its list - we get that Inequality \ref{delay_element_counter_event_inequality_to_prove} will continue to hold nonetheless, thus the proof of the lemma is complete.
\end{proof}
Now that we analyzed the event of type 5, we analyzed all the events which concern $ALG$. We are left with the task to analyze the events which concern $OPT$ - event types 6 and 7. We now begin with event of type 6.
\begin{lemma}
\label{delay.lemma.event_type_6}
    Let $p\in P$ be the event where $OPT$ serves multiple requests together at time $t$. Then
$$ALG^p+\Delta\Phi^p\leq 336\cdot OPT^p$$
\end{lemma}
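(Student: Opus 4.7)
The plan is to observe first that $ALG^p = 0$ (since in the charging scheme of Definition~\ref{delay.change_cost_model_alg}, $ALG$ is only charged when a request incurs delay) and that $OPT^p = y_{J(t)}$, the position of the farthest element $OPT$ accesses. Because $OPT$ only accesses elements in this event (no swaps, no list reordering) and $ALG$ does nothing, neither list changes, no element counter moves, and no $\mu_e$ changes. Consequently $\sum_e \rho_e(t)$ and $48 \sum_e \mu_e(t)$ contribute zero to $\Delta\Phi^p$, and each newly served $r_k$ enters $\lambda_2(t)$ with $\mu_k(t)=0$, so the fifth term of $\Phi$ also contributes zero. The only changes come from terms two and three, as each served request $r_k\in\lambda(t)$ migrates from $\lambda_1(t)$ to $\lambda_2(t)$, contributing $-36\,d_k(t)+(42d_k-6d_k(t))\cdot\mathbbm{1}{[x_k\leq 4y_k]}$ to $\Delta\Phi^p$. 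Requests that are already deleted in $ALG$ before time $t$ are not in $\lambda(t)$ and contribute nothing.

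I first plan to dispose of the easy cases. For served requests $r_k$ that are frozen in $ALG$ at time $t$, Observation~\ref{delay.lemma.dk_frozen} gives $d_k(t)=d_k$, so the contribution simplifies to $-36 d_k\cdot(1-\mathbbm{1}{[x_k\leq 4y_k]})\leq 0$. For requests active in $ALG$ with $x_k>4y_k$, the contribution is $-36 d_k(t)\leq 0$. The only positive contributions come from active-in-$ALG$ requests satisfying $x_k\leq 4y_k$, each bounded above by $42d_k-42d_k(t)\leq 42d_k$.

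To aggregate these positive contributions, I plan to invoke Lemma~\ref{delay.lemma.request_counters_lemma}. Since $OPT$'s access reaches position $y_{J(t)}$, every served request satisfies $y_k\leq y_{J(t)}$, so every active-in-$ALG$ request with $x_k\leq 4y_k$ also satisfies $x_{e_k}^t=x_k\leq 4y_{J(t)}$, using Lemma~\ref{delay.lemma.xk_lemma} for the equality. Thus the required elements lie within the first $4y_{J(t)}$ positions of $ALG$'s list at time $t$, and being active in $ALG$ they fall within the set governed by Lemma~\ref{delay.lemma.request_counters_lemma} with $N=4y_{J(t)}$, giving $\sum d_k\leq 2\cdot 4y_{J(t)}=8y_{J(t)}$.

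Combining the pieces yields $\Delta\Phi^p\leq 42\cdot 8y_{J(t)}=336\cdot OPT^p$, which with $ALG^p=0$ proves the bound. The main conceptual obstacle is not the calculation but the correct routing of each served request according to its status in $ALG$: the elegance is that the frozen-in-$ALG$ requests, which are \emph{not} covered by Lemma~\ref{delay.lemma.request_counters_lemma}, are precisely those for which $d_k(t)=d_k$, causing their contribution to become non-positive regardless of the indicator. This is what keeps the constant finite and is exactly where the asymmetric treatment of $d_k$ versus $d_k(t)$ in the third term of $\Phi$ pays off.
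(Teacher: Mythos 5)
Your proof is correct and follows essentially the same route as the paper's: isolate the migration of served requests from $\lambda_1(t)$ to $\lambda_2(t)$, kill the frozen-in-$ALG$ contributions via $d_k(t)=d_k$, confine the remaining active requests with $x_k\leq 4y_k$ to the first $4y_{J(t)}$ positions of $ALG$'s list, and bound their total $d_k$ by $8y_{J(t)}$ via Lemma~\ref{delay.lemma.request_counters_lemma}. The only cosmetic difference is that the paper takes $N=\min\{4y,n\}$ when invoking that lemma (since it requires $N\in[n]$), which you should also do; the bound is unaffected.
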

\begin{proof}
Obviously, we have $ALG^p=0$. Denote by $y\in[n]$ to be the position of the farthest element that $OPT$ accessed in the event $p$,. Then $OPT^p=y$. Therefore, our task is to prove that 
$$\Delta\Phi^p\leq 336y$$
\begin{comment}
Assume that during the event $p$, $ALG$s list was $a_1,a_2,...,a_n$ and $OPT$s list was $b_1,b_2,...,b_n$ i.e. for each $\ell\in[n]$, $loc_\ell^t\in \mathbb{E}$ was the $\ell$-st element in $ALG$s list during the event $p$ and $b_\ell\in \mathbb{E}$ was the $\ell$-st element in $OPT$s list during the event $p$. From this definition we have that $x_{loc_\ell^t}^t=\ell$ and $y_{b_\ell}^t=\ell$.
\end{comment}
Recall that both $ALG$s list and $OPT$s list are not changed during the event $p$. During the event $p$, $OPT$ accessed the first $y$ elements in its list, thus serving all the requests which were active in $OPT$ for these elements prior to the access operation performed by $OPT$ in the event $p$. 

Let $K\subseteq[m]$ be the set of requests (request indices) that $OPT$ served during the event $p$, i.e. a request index $k\in[m]$ satisfies $k\in K$ if it fulfills \textbf{both} the following 2 requirements:
\begin{itemize}
    \item $r_k$ was active in $OPT$ prior to the event $p$.
    \item $e_k$ was one of the first $y$ elements in $OPT$s list at time $t$.
\end{itemize}
For each $k\in K$ we must have $1\leq y_{e_k}^t\leq y$. We also define:
\begin{itemize}
    \item $K^{act}=\{k\in K|r_k\text{ was active in } ALG \text{ during the event } p\}$.
    \item $K^{frz}=\{k\in K|r_k\text{ was frozen in } ALG \text{ during the event } p\}$.
    \item $K^{del}=\{k\in K|r_k\text{ was deleted in } ALG \text{ during the event } p\}$.
\end{itemize}
Observe that $K=K^{act}\cup K^{frz}\cup K^{del}$. Note that if $k\in K^{del}$ then $r_k$ has already been deleted by $ALG$ before the event $p$ and therefore $OPT$ serving $r_k$ during the event $p$ does not affect the potential function $\Phi$ at all (we already had $k\notin\lambda_1(t)$ and $k\notin\lambda_2(t)$). Henceforth we will consider only the request indices $K^{act}\cup K^{frz}$.
\begin{comment}
We define $K^{'}\subseteq K$ as follows: A request index $k\in[m]$ satisfies $k\in K^{'}$ if it fulfills \textbf{both} the following 2 requirements:
\begin{itemize}
    \item $k\in K$.
    \item $r_k$ was either active in $ALG$ during the event $p$ \textbf{or} was frozen with $RC_k$ in $ALG$ during the event $p$ \textbf{or} was frozen without $RC_k$ in $ALG$ during the event $p$.
\end{itemize}
Note that if a request index $k\in[m]$ satisfies $k\in K\setminus K^{'}$ then $r_k$ has already been deleted by $ALG$ before the event $p$. The reader may observe that $OPT$ serving $r_k$ during the event $p$ does not affect the potential function $\Phi$ at all and therefore henceforth we can ignore such request indices and consider only the request indices $K^{'}$.
\end{comment}
Due to the definitions above and the definitions of $\lambda_1(t)$ and $\lambda_2(t)$, we have for each $k\in K^{act}\cup K^{frz}$ that $k$ moves from $\lambda_1(t)$ to $\lambda_2(t)$. Hence, the change in the potential function $\Phi$ is
$$-36\cdot\sum_{k\in K^{act}\cup K^{frz}}d_k(t)+\sum_{k\in K^{act}\cup K^{frz}}(42d_k-6d_k(t))\cdot\mathbbm{1}{[x_k\leq4y_k]}$$
Observe that the potential function $\Phi$ does not suffer any more changes as a result of the event $p$ apart from the changes discussed above.
\begin{comment}
indeed, for each element $e\in\mathbb{E}$ and for each request index $k\in[m]$ we have $\Delta EC_e^p=0$, $\Delta RC_k^p=0$. $x_e^t$ and $I_e^t$ does not change during the event $p$ either, because there are no changes in both $ALG$s list and $OPT$s list during the event $p$. Therefore we also have $\Delta\rho_e^p=0$. We also have $\Delta\mu_e^p=0$, $\Delta\mu_k^p=0$.
\end{comment}
We define
$$M=\bigcup_{\ell \in [\min\{4y,n\}]} \{k\in K^{act}|e_k=loc_\ell^t\}$$
We have
\begin{align*}
\Delta\Phi^p & = -36\cdot\sum_{k\in K^{act}\cup K^{frz}}d_k(t)+\sum_{k\in K^{act}\cup K^{frz}}(42d_k-6d_k(t))\cdot\mathbbm{1}{[x_k\leq4y_k]} \\
& \stackrel{(1)}{\leq} 42\cdot\sum_{k\in K^{act}}d_k\cdot\mathbbm{1}{[x_{e_k}^t\leq4y_{e_k}^t]}+\sum_{k\in K^{frz}}-36\cdot d_k(t)+(42d_k-6d_k(t))\cdot\mathbbm{1}{[x_k\leq4y_k]}\\
& \stackrel{(2)}{=} 42\cdot\sum_{k\in K^{act}}d_k\cdot\mathbbm{1}{[x_{e_k}^t\leq4y_{e_k}^t]}+\sum_{k\in K^{frz}}-\underbrace{36\cdot d_k+(42d_k-6d_k)\cdot\mathbbm{1}{[x_k\leq4y_k]}}_{\leq 0}\\
& \leq 42\cdot\sum_{k\in K^{act}}d_k\cdot\mathbbm{1}{[x_k\leq4y_k]}\\
& \stackrel{(3)}{=} 42\cdot\sum_{k\in K^{act}}d_k\cdot\mathbbm{1}{[x_{e_k}^t\leq4y_{e_k}^t]}\\
& = 42\cdot\sum_{\ell=1}^{n}\sum_{k\in K^{act}:e_k=loc_\ell^t}d_k\cdot\mathbbm{1}{[x_{e_k}^t\leq4y_{e_k}^t]}\\
& \stackrel{(4)}{=} 42\cdot\sum_{\ell=1}^{n}\sum_{k\in K^{act}:e_k=loc_\ell^t}d_k\cdot\mathbbm{1}{[x_{loc_\ell^t}^t\leq4y_{loc_\ell^t}^t]}\\
& = 42\cdot(\sum_{\ell=1}^{\min\{4y,n\}}\sum_{k\in K^{act}:e_k=loc_\ell^t}d_k\cdot\mathbbm{1}{[x_{loc_\ell^t}^t\leq4y_{loc_\ell^t}^t]}+\sum_{\ell=\min\{4y,n\}+1}^{n}\sum_{k\in K^{act}:e_k=loc_\ell^t}d_k\cdot\mathbbm{1}{[x_{loc_\ell^t}^t\leq4y_{loc_\ell^t}^t]})\\
& \stackrel{(5)}{\leq} 42\cdot\sum_{\ell=1}^{\min\{4y,n\}}\sum_{k\in K^{act}:e_k=loc_\ell^t}d_k\\
& = 42\cdot\sum_{k\in M}d_k\\
& \stackrel{(6)}{\leq} 42\cdot 2\cdot \min\{4y,n\}
 \leq 42\cdot 2\cdot 4y
 = 336y
\end{align*}
Inequality (1) holds because for each $k\in K^{act}$ we have $d_k(t)\geq 0$. Equality (2) holds because for each $k\in K^{frz}$ we have $d_k(t)=d_k$ (Observation \ref{delay.lemma.dk_frozen}).
Equality (3) is because $y_k=y_{e_k}^t$ for each $k\in K^{act}$: it follows straight from the definitions. In Equality (4) we just plugged in $e_k=loc_\ell^t$. Inequality (6) is due to Lemma \ref{delay.lemma.request_counters_lemma}. We are left with the task of explaining Inequality (5). It is obvious that $\mathbbm{1}{[x_{loc_\ell^t}^t\leq4y_{loc_\ell^t}^t]}\leq 1$ for each $1\leq \ell \leq \min\{4y,n\}$. Therefore, in order to justify Inequality (5) (and thus complete the proof of the lemma), we should prove that $$\sum_{\ell=\min\{4y,n\}+1}^{n}\sum_{k\in K^{act}:e_k=loc_\ell^t}d_k\cdot\mathbbm{1}{[x_{loc_\ell^t}^t\leq4y_{loc_\ell^t}^t]}=0$$
If $n\leq 4y$ then $\min\{4y,n\}+1=n+1>n$ so it is obvious because we do not sum anything. Otherwise, we have $4y<n$ and thus
\begin{align*}
    0 & \leq \sum_{\ell=\min\{4y,n\}+1}^{n}\sum_{k\in K^{act}:e_k=loc_\ell^t}d_k\cdot\mathbbm{1}{[x_{loc_\ell^t}^t\leq4y_{loc_\ell^t}^t]} \\
    & = \sum_{\ell=4y+1}^{n}\sum_{k\in K^{act}:e_k=loc_\ell^t}d_k\cdot\mathbbm{1}{[x_{loc_\ell^t}^t\leq4y_{loc_\ell^t}^t]} \\
    & = \sum_{\ell=4y+1}^{n}\sum_{k\in K^{act}:e_k=loc_\ell^t}d_k\cdot\mathbbm{1}{[\ell\leq4y_{loc_\ell^t}^t]}\\
    & \leq \sum_{\ell=4y+1}^{n}\sum_{k\in K^{act}:e_k=loc_\ell^t}d_k\cdot\mathbbm{1}{[\ell\leq4y]}\\
    & = \sum_{\ell=4y+1}^{n}\sum_{k\in K^{act}:e_k=loc_\ell^t}d_k\cdot0
     = 0
\end{align*}
where the first equality is because $\min\{4y,n\}=4y$, the second inequality is because $x_{loc_\ell^t}^t=\ell$ and the second inequality is because for each $k\in K^{act}$ we must have $y_{loc_\ell^t}^t\leq y$.
\end{proof}
Now we are left with the task to analyze the last event - the event of type 7. The following lemma will be needed to deal with this event.
\begin{lemma}
\label{delay.lemma.bound_ro_opt_swaps_two_elements}
    Let $p\in P$ be the event where $OPT$ swaps two elements $i,j\in\mathbb{E}$ at time $t$ where $j$ was the next element after $i$ in $OPT$s list prior to this swap. Then
$$\sum_{e\in\mathbb{E}}\Delta\rho_e^p\leq 28$$
\end{lemma}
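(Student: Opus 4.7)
The plan is to reduce the claim to a two-case analysis on the order of $i$ and $j$ in $ALG$'s list, since the swap in $OPT$'s list is a purely local operation that affects very little of the potential.

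First I would observe that the swap does not touch $ALG$'s list at all, so for every element $e\in\mathbb{E}$ the values $x_e^t$ and $EC_e^t$ are unchanged. Consequently $\Delta\rho_e^p$ can only come from a change in $|I_e^t|$. Next, since $i$ and $j$ are adjacent in $OPT$'s list at time $t$, no element $e\notin\{i,j\}$ lies strictly between them in that list, so the relative order of $i$ (and of $j$) with respect to any such $e$ in $OPT$'s list is preserved by the swap. Combined with the fact that the orderings in $ALG$'s list are untouched, this gives $\Delta |I_e^t|^p=0$ and hence $\Delta\rho_e^p=0$ for every $e\notin\{i,j\}$. So it remains to bound $\Delta\rho_i^p+\Delta\rho_j^p$.

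Now I would split into two cases according to the relative order of $i$ and $j$ in $ALG$'s list. In the first case, $j$ is before $i$ in $ALG$'s list. Then before the swap we had $j\in I_i^t$ (since $j$ was after $i$ in $OPT$'s list) and $i\notin I_j^t$; after the swap $j\notin I_i^t$ and $i\notin I_j^t$ (since $i$ is now after $j$ in both lists). Hence $|I_i^t|$ drops by $1$ and $|I_j^t|$ is unchanged, so
\[
\Delta\rho_i^p+\Delta\rho_j^p=-\bigl(28-8\tfrac{EC_i^t}{x_i^t}\bigr)\le 0,
\]
using the invariant $EC_i^t\le x_i^t$ that is maintained between element-counter events. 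In the second case, $i$ is before $j$ in $ALG$'s list. Then $i\notin I_j^t$ before the swap and $i\in I_j^t$ after (since $i$ is now after $j$ in $OPT$'s list while still before $j$ in $ALG$'s list), while $j\notin I_i^t$ both before and after. So $|I_j^t|$ increases by $1$ and $|I_i^t|$ is unchanged, yielding
\[
\Delta\rho_i^p+\Delta\rho_j^p=28-8\tfrac{EC_j^t}{x_j^t}\le 28.
\]
Combining both cases gives $\sum_{e\in\mathbb{E}}\Delta\rho_e^p\le 28$, as required.

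I do not expect any real obstacle here: once the observation that the swap only affects $|I_i^t|$ and $|I_j^t|$ is made, the argument is essentially bookkeeping. The only mild subtlety is using $EC_e^t\le x_e^t$ (guaranteed because an element-counter event fires exactly when equality is reached) to bound the coefficient $28-8\tfrac{EC_e^t}{x_e^t}$ in the interval $[20,28]$, which makes the first case a bonus (non-positive contribution) rather than a cost.
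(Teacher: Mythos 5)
Your proof is correct and follows essentially the same route as the paper's: both arguments note that only $|I_i^t|$ and $|I_j^t|$ can change, split on the relative order of $i$ and $j$ in $ALG$'s list, and bound the single affected term by $28-8\frac{EC_e^t}{x_e^t}\in[20,28]$ using $EC_e^t\le x_e^t$. Your justification that $\Delta\rho_e^p=0$ for $e\notin\{i,j\}$ (via adjacency of $i,j$ in $OPT$'s list) is slightly more explicit than the paper's, but the substance is identical.
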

\begin{proof}
We have $\Delta\rho_e^p=0$. Therefore we have $$\sum_{e\in\mathbb{E}}\Delta\rho_e^p=\Delta\rho_i^p+\Delta\rho_j^p+\sum_{e\in\mathbb{E}\setminus\{i,j\}}\Delta\rho_e^p=\Delta\rho_i^p+\Delta\rho_j^p+\sum_{e\in\mathbb{E}\setminus\{i,j\}}0=\Delta\rho_i^p+\Delta\rho_j^p$$
Therefore, our task is to prove that 
$$\Delta\rho_i^p+\Delta\rho_j^p\leq 28$$
We consider two cases:
\begin{itemize}
    \item The case where $i$ is located before $j$ in $ALG$s list at time $t$. We had $i\in NI_j^t$ before the event $p$ and after the event $p$ we will have $i\in I_j^t$. In other words, $|I_j^t|$ increases by $1$ as a result of the event $p$. Therefore, $$\Delta\rho_j^p=1\cdot(28-8\cdot\frac{EC_j^t}{x_j^t})\leq 28$$ Since $i$ is located before $j$ in $ALG$s list, the swap performed by $OPT$ between $i$ and $j$ doesn't change $I_i^t$ and therefore we have $\Delta\rho_i^p=0$. Therefore we have $$\Delta\rho_i^p+\Delta\rho_j^p\leq 0+28=28$$
    \item The case where $i$ is located after $j$ in $ALG$s list at time $t$. We had $j\in I_i^t$ before the event $p$ and after the event $p$ we will have $j\in NI_i^t$ (and $j\notin I_i^t$). In other words, $|I_i^t|$ decreases by $1$ as a result of the event $p$. Therefore, $$\Delta\rho_i^p=-1\cdot(28-8\cdot\frac{EC_i^t}{x_i^t})=-28+8\cdot\frac{EC_i^t}{x_i^t}\leq -28+8\cdot\frac{x_i^t}{x_i^t}=-20$$ where in the inequality we used the fact that we always have $0\leq EC_i^t\leq x_i^t$. Since $j$ is located before $i$ in $ALG$s list, the swap performed by $OPT$ between $i$ and $j$ doesn't change $I_j^t$ and therefore we have $\Delta\rho_j^p=0$. Therefore we have $$\Delta\rho_i^p+\Delta\rho_j^p\leq -20+0=-20$$
\end{itemize}
In either case, we have $\Delta\rho_i^p+\Delta\rho_j^p\leq 28$ and thus the lemma has been proven.
\end{proof}
Now we are ready to deal with the event of type 7.
\begin{lemma}
\label{delay.lemma.event_type_7}
    Let $p\in P$ be the event where $OPT$ swaps two elements $i,j\in\mathbb{E}$ at time $t$ where $j$ was the next element after $i$ in $OPT$s list prior to this swap. Then
$$ALG^p+\Delta\Phi^p\leq 84\cdot OPT^p$$
\end{lemma}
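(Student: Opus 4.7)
The plan is to show $\Delta\Phi^p\leq 84$. Since $p$ is a swap by $OPT$, we have $ALG^p=0$ (by Definition \ref{delay.change_cost_model_alg}, $ALG$ is charged only through delay) and $OPT^p=1$ (the cost of a single swap). Hence the required inequality reduces to an upper bound of $84$ on the change in the potential.

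First I would go term-by-term through the five components of $\Phi$. The first term $\sum_{e\in\mathbb{E}}\rho_e(t)$ changes by at most $28$ by Lemma \ref{delay.lemma.bound_ro_opt_swaps_two_elements}. The second term $36\cdot\sum_{k\in\lambda_1(t)}d_k(t)$ is unchanged, since an $OPT$-swap alters neither any $d_k(t)$ nor the set $\lambda_1(t)$. The third term is likewise unchanged: the values $x_k,y_k,d_k,d_k(t)$ and the indicator $\mathbbm{1}[x_k\le 4y_k]$ are all determined by the history prior to the time at which $OPT$ served $r_k$, and $\lambda_2(t)$ does not shift.

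Next I would handle the fourth term $48\cdot\sum_{e\in\mathbb{E}}\mu_e(t)$. Because the swap is between $i$ and its next-in-$OPT$ element $j$, only $\mu_i$ increases (by exactly $1$), while for every other element the corresponding $\mu_e$ is untouched; this term therefore rises by exactly $48$. For the fifth term, the only $k\in\lambda_2(t)$ whose $\mu_k$ increments are those with $e_k=i$, and each such $\mu_k$ increases by $1$; the net contribution is $\tfrac{8}{x_i^t}\sum_{k\in\lambda_2(t),\,e_k=i}(d_k-d_k(t))$. By Observation \ref{delay.lemma.dk_frozen} frozen requests contribute $d_k-d_k(t)=0$, so only requests active in $ALG$ at time $t$ matter; all such requests sit at position $x_i^t$, and Lemma \ref{delay.lemma.request_counters_lemma} applied with $N=x_i^t$ bounds the remaining sum in terms of $x_i^t$, making the fifth term's contribution a constant independent of $x_i^t$.

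The principal obstacle will be packaging these bounds tightly enough to reach the target constant $84$. I expect a two-case split, mirroring the one inside the proof of Lemma \ref{delay.lemma.bound_ro_opt_swaps_two_elements}: when $i$ lies after $j$ in $ALG$'s list the $\rho$-change is at most $-20$, leaving ample room ($28+48-20=56$) to absorb any reasonable bound on the fifth term; when $i$ lies before $j$ in $ALG$'s list the $\rho$-change can reach $28$, and the argument must pay close attention to how Lemma \ref{delay.lemma.request_counters_lemma} (and possibly the sharper Observation \ref{delay.lemma.basic_request_counters_observation}) control the fifth term, since this case is the one that pins down the final constant.
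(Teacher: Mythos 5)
Your decomposition is the same as the paper's: $ALG^p=0$, $OPT^p=1$, the $\rho$-term changes by at most $28$ (Lemma \ref{delay.lemma.bound_ro_opt_swaps_two_elements}), the second and third terms are untouched, the fourth term rises by exactly $48$, and the fifth term only picks up contributions from requests in $\lambda_2(t)$ for the element $i$, with the frozen ones contributing $d_k-d_k(t)=0$ by Observation \ref{delay.lemma.dk_frozen}. Up to that point everything matches.

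The one step that does not close as you state it is the bound on the fifth term. Lemma \ref{delay.lemma.request_counters_lemma} with $N=x_i^t$ only gives $\sum_k d_k\leq 2x_i^t$, so the fifth term is bounded by $8\cdot 2=16$ and your total becomes $28+48+16=92>84$. Your fallback case split does not rescue this: in the case where $i$ precedes $j$ in $ALG$'s list the $\rho$-change can be $+28$, and nothing negative is available to absorb the extra $8$. The paper instead uses the sharper single-element bound $\sum_{k\in K_2^{act}}d_k\leq x_i^t$ (it cites Observation \ref{delay.lemma.basic_request_counters_observation}; the point is that all these requests sit at the \emph{same} position $x_i^t$, so no dyadic decomposition and hence no factor $2$ is needed --- they are all served simultaneously, and just before that moment the sum of their request counters, which then equals $\sum_k d_k$, cannot exceed the prefix threshold $x_i^t$). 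That gives the fifth term $\leq 8$ and the clean total $28+48+8=84$, with no case analysis. You did flag the sharper observation as "possibly" needed; it is in fact essential, and the case split is not.
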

\begin{proof}
We have $ALG^p=0$ and $OPT^p=1$, thus our target is to prove that $$\Delta\Phi^p\leq 84$$
We have $\Delta\mu_i^p=1$ and $\Delta\mu_j^p=0$. Therefore we have $$\sum_{e\in\mathbb{E}}\Delta\mu_e^p=\Delta\mu_i^p+\sum_{e\in\mathbb{E}\setminus\{i\}}\Delta\mu_e^p=1+\sum_{e\in\mathbb{E}\setminus\{i\}}0=1$$
We define:
$$K_2^{act}=\{k\in\lambda_2(t)|e_k=i \text{ and } r_k \text{ was active in } ALG \text{ at time } t\}$$
$$K_2^{frz}=\{k\in\lambda_2(t)|e_k=i \text{ and } r_k \text{ was frozen (with or without } RC_k\text{) in } ALG \text{ at time } t\}$$
$$K_2=\{k\in\lambda_2(t)|e_k=i\}=K_2^{act}\cup K_2^{frz}$$
$K_2^{act}$ is the set of all the requests (request indices) for the element $i$ which have been served by $OPT$ before time $t$ and were active in $ALG$ at time $t$. $K_2^{frz}$ is the set of all the requests (request indices) for the element $i$ which have been served by $OPT$ before time $t$ and were frozen in $ALG$ at time $t$. We have $\Delta\mu_k^p=1$ for each $k\in K_2$ and $\Delta\mu_k^p=0$ for each $k\in [m]\setminus K_2$. Therefore we have

\begin{align*}
    \Delta\Phi^p & = \sum_{e\in\mathbb{E}}\Delta\rho_e^p+48\cdot\underbrace{\sum_{e\in\mathbb{E}}\Delta\mu_e^p}_{=1}+8\cdot\sum_{k\in K_2}\frac{d_k-d_k(t)}{x_{e_k}^t}\cdot\underbrace{\Delta\mu_k^p}_{=1} \\ 
    & \leq 28+48\cdot1+8\cdot\sum_{k\in K_2^{act}}\frac{d_k-d_k(t)}{x_{e_k}^t}+8\cdot\sum_{k\in K_2^{frz}}\frac{d_k-d_k(t)}{x_{e_k}^t} \\
    & \leq 76+8\cdot\sum_{k\in K_2^{act}}\frac{d_k-0}{x_{e_k}^t}+8\cdot\sum_{k\in K_2^{frz}}\frac{d_k-d_k}{x_{e_k}^t} \\
    & = 76+8\cdot\frac{1}{x_{e_k}^t}\cdot\sum_{k\in K_2^{act}}d_k \\
    & \leq 76+8\cdot\frac{1}{x_{e_k}^t}\cdot x_{e_k}^t
     = 84
\end{align*}
where the first inequality is due to Lemma \ref{delay.lemma.bound_ro_opt_swaps_two_elements}. The second inequality is because we always have $d_k(t)\geq 0$ and if $r_k$ is frozen in $ALG$ at time $t$ then we can use Observation \ref{delay.lemma.dk_frozen} and get that $d_k(t)= d_k$. The third inequality is due to Observation \ref{delay.lemma.basic_request_counters_observation}.
\end{proof}
Now that we analyzed all the possible events, we are ready to prove Theorem \ref{delay.thm.1}.
\begin{proof}[Proof of Theorem \ref{delay.thm.1}]
In the previous lemmas we have proven for each event $p\in P$ that
$$ALG^p+\Delta\Phi^p\leq 336\cdot OPT^p$$
The theorem follows by summing up for over all events and use the fact that $\Phi$ starts with $0$ and is always non-negative.
\end{proof}
\section{Failed Algorithms for Delay}
\label{section_failed_algorithms}
In this section we present some intuitive algorithms for List Update with Delay which are simpler than Algorithm \ref{delay.alg} but unfortunately fail to achieve a constant competitive ratio. 
%We now explain the motivation for doing so. 
Note that in the time windows version, we have presented the $O(1)$-competitive Algorithm \ref{deadlines.alg}, which is (in our opinion) the most intuitive and simple one for the problem.
%and proved that it is  (Theorem \ref{deadlines.thm.1}). 
In contrast %to Algorithm \ref{deadlines.alg} for the time windows version, 
Algorithm \ref{delay.alg} for the delay version may not seem to be the most simple algorithm. The goal of this section is to justify the complexity of Algorithm \ref{delay.alg} in order to get a constant competitive ratio. Specifically, the algorithm uses two types of events: prefix request counters event and element counter event. We would like to show that using only one type of events does not work. Obviously, there are other possible algorithms for the problem which are not discussed here. 
%(as proven in Theorem \ref{delay.thm.1}). That is the main motivation for including this section in the paper. 

%This section contains only a few of the possible algorithms for List Update with Delay - there are many 

For each algorithm $A$ that we present in this section, we present a request sequence $\sigma$ such that $\frac{A(\sigma)}{OPT(\sigma)}\geq \omega(1)$. The sequence $\sigma$ is a sequence for the price collecting problem (which is a special case of the delay and hence it is also a counter example to the delay version). In the price collecting problem, the request $r_k$ is represented by a tuple: $r_k=(e_k,a_k,q_k,p_k)$ where $e_k\in\mathbb{E}$ is the required element, $a_k$ is the arrival time of the request, $q_k$ is the deadline and $p_k$ is the penalty price. %After time $a_k$, in which $r_k$ arrives, 
For each $r_k$ an algorithm has a choice: either access the element $e_k$ between $a_k$ and $q_k$ or pay the penalty price of $p_k$ for not doing so. 
%The price collecting problem can be viewed as List Update with Delay where the delay function of $r_k$ is a step function which jumps from $0$ to the value of $p_k$ at time $q_k$. Therefore, the lower bounds we give in this section for the algorithms we present here are both for List Update with Delay and the price collecting problem.
We denote the initial list of the algorithm (as well as $OPT$) as $c_1,c_2,...,c_n$. Whenever used, $0<\epsilon<1$ is a very small number. We assume without loss of generality that $n$ is an integer square (otherwise round it down to a square).
%Through this section we assume that the list of all the algorithms and the optimum algorithms for the sequences at time $t=0$ is $c_1,c_2,...,c_n$ where for each $\ell\in[n]$, $c_\ell\in\mathrm{E}$ is the $\ell$-th element in the list at time $t=0$.

Since the algorithms that we discuss here use one type of event, we need to consider one type of counters. All the algorithms we present in this section use element counters, as in Algorithm \ref{delay.alg}: for each $\ell\in[n]$, the element counter $EC_{c_\ell}$ is initialized with $0$ at time $t=0$ and suffers delay penalty as active requests for the element $c_\ell$ suffer delay penalty. Alternatively we could have used request counters and view the element counter $EC_{c_\ell}$ as the sum of the the request counters for $c_\ell$ which have not been deleted yet.
%, although they all could be defined using request counters instead of element counters (The value of $EC_{c_\ell}$ is equivalent to the sum of ; zeroing the element counter $EC_{c_\ell}$ is equivalent to deleting all the current request counters for requests which are for the element $c_\ell$).

The first algorithm we present acts upon element counters events. Recall that an element counter event on $e\in\mathbb{E}$ occurs when $EC_e$ reaches $e$'s current position in the lits.
%occurs when the value of $EC_e$ reaches $\ell$, where $\ell$ is the current position of $e$ in the algorithm's list. 
The other algorithms act upon prefix element counter events. 
%Similar to the prefix request counters events defined in Section \ref{section_algorithm_delay}, 
A prefix element counter event on $\ell\in[n]$ occurs when the sum of all the element counters of the first $\ell$ elements in the list reaches the value of $\ell$.

%n=u^2$ where $u\in\mathbb{N}$. If this is not the case, we can define $n^{'}=\lfloor \sqrt{n} \rfloor^2$ and ignore the last $n-n^{'}$ elements in the list. The results of this section will still hold.

Now we show the algorithms and their counter examples.

\textbf{Algorithm $A_1$}: Upon element counter event on $e\in\mathbb{E}$: Serve the set of requests in the first $2{EC}_e$ elements in the list, set ${EC}_e$ to $0$ and Move-to-front the element $e$ \footnote{If we change Algorithm \ref{delay.alg} from Section \ref{section_algorithm_delay} so it will ignore prefix request counters events and act only upon element counter events then the request counters become useless and thus we get an algorithm which is equivalent to Algorithm $A_1$.}.

\textbf{Counter example for Algorithm $A_1$}: Define $m=n$, for each $\ell\in[n]$ define $r_\ell=(c_\ell,0,0,\ell-\epsilon)$. Algorithm $A_1$ does not do anything (no element counter event occurs) and therefore its cost is the total price for the requests i.e. $A_1(\sigma)=\sum_{\ell=1}^n (\ell-\epsilon)=\Theta(n^2)$. $OPT$ serves all the $n$ requests together at time $t=0$ (by accessing the entire list with an access cost of $n$) and does not pay any penalty price. Both $A_1$ and $OPT$ do not perform any swap and thus
$\frac{A_1(\sigma)}{OPT(\sigma)}=\frac{\Theta(n^2)}{n}=\Omega(n)$.

\textbf{Algorithm $A_2$}: Upon prefix element counter event on $\ell\in[n]$: Let $e$ be the $\ell$-th element in the list currently. Serve the set of requests in the first $2\ell$ elements in the list, set ${EC}_e$ to $0$ and Move-to-front the element $e$ \footnote{Observe that the value of $EC_e$ at the beginning of the event must be bigger than $0$, otherwise we would not be in a prefix element counter event on $\ell$.}.

\textbf{Counter example for Algorithm $A_2$}: Define $m=2n-1$, $r_1=(c_1,0,2n,1-\epsilon)$ and for each $\ell\in[n-1]$ define $r_{2\ell}=(c_{\ell+1},2\ell,2\ell,1+\epsilon)$, $r_{2\ell+1}=(c_1,2\ell+1,2n,1)$. For each $\ell\in[n-1]$, $A_2$ has a prefix element counter event on $\ell+1$ at time $2\ell$: At the beginning of the event, the element $c_{\ell+1}$ is located at position $\ell+1$ and the element $c_1$ is located at position $\ell$. The event is caused due to the request $r_{2\ell}$ and the previous requests for the element $c_1$. During this event, $A_2$ pays an access cost of $\Theta(\ell)$ and a swapping cost of $\Theta(\ell)$ (for moving the element $c_{\ell+1}$ to the beginning of the list). Therefore, the total cost $A_2$ pays for all the events is $\sum_{\ell=1}^{n-1}\Theta(\ell)=\Theta(n^2)$. $A_2$ serves all the requests before (or at) their deadlines and thus it pays no penalty. As for $OPT$, it serves all the requests for $c_1$ together at time $2n$ by accessing $c_1$ (with a cost of $1$). The rest $n-1$ requests are not served and thus $OPT$ pays a penalty of $1+\epsilon$ for each one of them. $OPT$ does not perform any swap. Therefore $OPT(\sigma)=1+(n-1)\cdot(1+\epsilon)=\Theta(n)$ and thus $\frac{A_2(\sigma)}{OPT(\sigma)}=\frac{\Theta(n^2)}{\Theta(n)}=\Omega(n)$. Note that if the position of $c_1$ in $OPT$s list was not $1$ then $OPT$ would pay an access cost of at most $n$ when it accessed this element so we would still have $OPT(\sigma)=\Theta(n)$ and thus $\frac{A_2(\sigma)}{OPT(\sigma)}=\Omega(n)$.

\textbf{Algorithm $A_3$}: Same as Algorithm $A_2$ but all the element counters of the first $\ell$ elements in the list are set to $0$ - and not just $EC_e$.

\textbf{Counter example for Algorithm $A_3$}: Same as the counter example for $A_2$ but for each $\ell\in[n-1]$, the penalty $p_{2\ell+1}$ is changed from $1$ to $\ell+1-\epsilon$. $OPT$ behaves in the same way as before so $\frac{A_3(\sigma)}{OPT(\sigma)}=\frac{\Theta(n^2)}{n}=\Omega(n)$.

\textbf{Algorithm $A_4$}: Upon prefix element counter event on $\ell\in[n]$: Let $e$ be the element among the current first $\ell$ elements in the list which has the maximum value in its element counter. Serve the set of requests in the first $2\ell$ elements in the list, set ${EC}_e$ to $0$ and Move-to-front the element $e$.

\textbf{Counter example for Algorithm $A_4$}: The sequence $\sigma$ will contain $n$ requests: a request for each element in the list. The first $\sqrt{n}$ requests will be for the elements $c_{n-\sqrt{n}+1}$,$c_{n-\sqrt{n}+2}$,...$c_{n}$ (i.e. the last $\sqrt{n}$ elements in the list at time $t=0$). These requests will arrive at time $0$ with deadline $0$ and penalty of $\sqrt{n}-4$. Then the other $n-\sqrt{n}$ will arrive: the request for $c_\ell$ (for each $1\leq\ell\leq n-\sqrt{n}$) will arrive on time $n-\ell$, and this time will also be its deadline. Its penalty will be of $4\sqrt{n}$. For each $\ell\in[n-\sqrt{n}]$, at time $t=n-\ell$, $A_4$ will have a prefix element counter event on $n$ because of the request for the element $c_\ell$. The position of $c_\ell$ in the list at the beginning of this event will be $n-\sqrt{n}$ and this element will be the one to be moved to the beginning of the list. Therefore, $A_4$ pays access cost of $n$ and swapping cost of $n-\sqrt{n}-1$ for each of these $n-\sqrt{n}$ prefix element counter events. Therefore $A_4$ pays a total cost of $\Theta(n^2)$ in these events. Observe that $A_4$ pays delay penalty of $\sqrt{n}-4$ for each of the first $\sqrt{n}$ requests but the rest of the requests are served at their deadlines, thus $A_4$ does not pay their penalty. This adds $\Theta(n)$ to $A_4(\sigma)$, thus $A_4(\sigma)=\Theta(n^2)+\Theta(n)=\Theta(n^2)$. $OPT$ will not do anything and simply pay the penalty of $\Theta(\sqrt{n})$ for each of the $n$ requests, thus $OPT(\sigma)=\Theta(n\sqrt{n})$. Therefore $\frac{A_4(\sigma)}{OPT(\sigma)}=\frac{\Theta(n^2)}{\Theta(n\sqrt{n})}=\Omega(\sqrt{n})$.

\textbf{Algorithm $A_5$}: Same as Algorithm $A_4$ but all the element counters of the first $\ell$ elements in the list are set to $0$ - and not just $EC_e$.

\textbf{Counter example for Algorithm $A_5$}: The sequence $\sigma$ will contain $(n-\sqrt{n})\cdot(\sqrt{n}+1)$ requests: For each $\ell\in[n-\sqrt{n}]$, at time $n-\ell$, $\sqrt{n}+1$ requests will arrive together, all with deadline $n-\ell$: One request will be for the element $c_\ell$ with penalty of $4\sqrt{n}$ while the other $\sqrt{n}$ requests will be for the elements $c_{n-\sqrt{n}+1}$,$c_{n-\sqrt{n}+2}$,...$c_{n}$ and each one of them will have a penalty of $\sqrt{n}-4$. $A_5$ will act the same as $A_4$ acted, thus it will not suffer any penalty and its cost will be $A_5(\sigma)=\Theta(n^2)$. $OPT$ will initially move the elements $c_{n-\sqrt{n}+1}$,$c_{n-\sqrt{n}+2}$,...$c_{n}$ to the beginning of its list with a cost of $\Theta(n\sqrt{n})$ (The swapping cost needed to move each of the $\sqrt{n}$ elements to the beginning of $OPT$s list is $\Theta(n)$). Then for each $\ell\in[n-\sqrt{n}]$, at time $n-\ell$, $OPT$ will access the elements $c_{n-\sqrt{n}+1}$,$c_{n-\sqrt{n}+2}$,...$c_{n}$ (which are at the beginning of its list at this time) with a cost of $\sqrt{n}$, so the total access cost $OPT$ pays will be $\Theta(n\sqrt{n})$. Each request for the elements $c_{n-\sqrt{n}+1}$,$c_{n-\sqrt{n}+2}$,...$c_{n}$ is served at its deadline by $OPT$ and thus $OPT$ does not pay any penalty for it. $OPT$ pays the penalty of $4\sqrt{n}$ for each of the $n-\sqrt{n}$ requests for the other elements in the list, which $OPT$ does not serve, thus the total penalty $OPT$ pays is $\Theta(n\sqrt{n})$. To conclude, we have $OPT(\sigma)=\Theta(n\sqrt{n})$ and thus $\frac{A_5(\sigma)}{OPT(\sigma)}=\frac{\Theta(n^2)}{\Theta(n\sqrt{n})}=\Omega(\sqrt{n})$

\textbf{Algorithm $A_6$}: Upon prefix element counter event on $\ell\in[n]$: Let $M$ be the current maximum value of an element counter among the element counters for the first $\ell$ elements in the list. Among the first $\ell$ elements of the list, choose the most further element which has an element counter's value of at least $\frac{M}{2}$, let $e$ be this element. Serve the set of requests in the first $2\ell$ elements in the list, set ${EC}_e$ to $0$ and Move-to-front the element $e$.

\textbf{Counter example for Algorithm $A_6$}: The same counter example for Algorithm $A_4$ works here too.

\textbf{Algorithm $A_7$}: Same as Algorithm $A_6$ but all the element counters of the first $\ell$ elements in the list are set to $0$ - and not just $EC_e$.

\textbf{Counter example for Algorithm $A_7$}: The same counter example for Algorithm $A_5$ works here too.

\textbf{Algorithm $A_8$}: Upon prefix element counter event on $\ell\in[n]$: For each element $e^{'}$ among the first $\ell$ elements in the list, let $EC_{e^{'}}^{cur}$ be the current value of the element counter $EC_{e^{'}}$ and let $F_{e^{'}}^{cur}$ be the current sum of element counters for the elements which are currently located before $e^{'}$ in the list. Choose the element $e$ which maximizes the term $F_{e^{'}}^{cur}+2\cdot EC_{e^{'}}^{cur}$. Serve the set of requests in the first $2\ell$ elements in the list, set ${EC}_e$ to $0$ and Move-to-front the element $e$ \footnote{If we chose the element $e$ which maximizes the term $F_{e^{'}}^{cur}+ EC_{e^{'}}^{cur}$ instead of $F_{e^{'}}^{cur}+2\cdot EC_{e^{'}}^{cur}$ then we would choose the $\ell$-th element in the list currently and thus get an algorithm which is equivalent to Algorithm $A_2$. We act differently here than in Algorithm $A_2$ and one can observe that the counter example for Algorithm $A_2$ will not work on this algorithm.}.

\textbf{Counter example for Algorithm $A_8$}: The sequence $\sigma$ will contain $(n-\sqrt{n}+1)\cdot\sqrt{n}$ requests. For each $\sqrt{n}\leq\ell\leq n$, $\sqrt{n}$ requests will arrive at time $\ell$ and their deadline will be at time $\ell$. One request will be for the element $c_{\ell}$ and its penalty will be $\frac{\ell}{\sqrt{n}}$. The other $\sqrt{n}-1$ requests will be for the elements $c_1,c_2,...,c_{\sqrt{n}-1}$ and the penalty of each one of them will be $\frac{1}{\sqrt{n}}$. 
For each $\sqrt{n}\leq\ell\leq n$, a prefix element counter event on $\ell$ will occur at time $\ell$ and the element $c_{\ell}$ is the element that will be chosen by $A_8$ to be moved to the beginning of the list. $A_8$ pays an access cost of $\Theta(\ell)$ during this event for the access operation and for the move-to-front of $c_{\ell}$. $A_8$ does not pay any penalty because all the requests are served at their deadline. Therefore $A_8(\sigma)=\sum_{\ell=\sqrt{n}}^{n} \Theta(\ell)=\Theta(n^2)$. $OPT$ will serve all the requests for $c_1,c_2,...,c_{\sqrt{n}-1}$ at their deadlines (and thus it will not pay their penalties). This requires $OPT$ to pay $n-\sqrt{n}+1$ times an access cost of $\sqrt{n}-1$, thus a total access cost of $\Theta(n\sqrt{n})$. $OPT$ will have to pay the penalties for the requests for the elements $c_{\sqrt{n}},c_{\sqrt{n}+1},...,c_{n}$ thus the total penalty that $OPT$ pays is $\sum_{\ell=\sqrt{n}}^{n}\frac{\ell}{\sqrt{n}}=\Theta(n\sqrt{n})$. $OPT$ does not perform any swaps. Therefore $OPT(\sigma)=\Theta(n\sqrt{n})$ and thus $\frac{A_8(\sigma)}{OPT(\sigma)}=\frac{\Theta(n^2)}{\Theta(n\sqrt{n})}=\Omega(\sqrt{n})$.

\textbf{Algorithm $A_9$}: Same as Algorithm $A_8$ but all the element counters of the first $\ell$ elements in the list are set to $0$ - and not just $EC_e$.

\textbf{Counter example for Algorithm $A_9$}:
We make the following change to the sequence $\sigma$ from the counter example of Algorithm $A_8$. Recall that for each $\sqrt{n}\leq\ell\leq n$, there are requests for the elements $c_1,c_2,...,c_{\sqrt{n}-1}$ which arrive at time $\ell$ (and their deadline is also at time $\ell$). We change their penalties from $\frac{1}{\sqrt{n}}$ to $\frac{\ell}{\sqrt{n}}$, the same penalty as the request for $c_{\ell}$ which also arrives at time $\ell$ and also has a deadline of $\ell$. One can can observe that we have $A_9(\sigma)=\Theta(n^2)$. $OPT$ behaves the same as before and we have $OPT(\sigma)=\Theta(n\sqrt{n})$. To conclude, we have $\frac{A_9(\sigma)}{OPT(\sigma)}=\frac{\Theta(n^2)}{\Theta(n\sqrt{n})}=\Omega(\sqrt{n})$.
%\input{sections/more_time_windows_proofs}
\begin{comment}
\input{sections/proposals_to_improve_ratio}
\end{comment}
\end{document}